\providecommand{\doi}[1]{%
  \begingroup
  \let\bibinfo\@secondoftwo
  \urlstyle{rm}%
  \href{http://dx.doi.org/#1}{%
    doi:\discretionary{}{}{}%
    \nolinkurl{#1}%
  }%
  \endgroup
}
\newtheorem{observation}{Observation}
\newtheorem{proposition}{Proposition}
\newtheorem{theorem}{Theorem}
\newtheorem{corollary}{Corollary}
\newtheorem{lemma}{Lemma}
\newcommand{\gamename}{\textsc}
\newcommand{\fpt}{\textsf{\small FPT}\xspace}
\newcommand{\np}{\textsf{\small NP}\xspace}
\newcommand{\p}{\textsf{\small P}\xspace}
\newcommand{\pspace}{\textsf{\small PSPACE}\xspace}
\newcommand{\wone}{\textsf{W[1]}\xspace}
\newcommand{\GG}{\textsc{gg}\xspace}
\newcommand{\hex}{\textsc{hex}\xspace}
\newcommand{\Hex}{\textsc{Hex}\xspace}
\newcommand{\GHex}{\textsc{generalized hex}\xspace}
\newcommand{\SHex}{\textsc{short hex}\xspace}
\newcommand{\sGHex}{\textsc{short generalized hex}\xspace}
\newcommand{\SGHex}{\textsc{Short generalized hex}\xspace}
\newcommand{\Hav}{\textsc{havannah}\xspace}
\newcommand{\Sl}{\textsc{Slither}\xspace}
\newcommand{\slither}{\textsc{slither}\xspace}
\newcommand{\Twixt}{\textsc{Twixt}\xspace}
\newcommand{\twixt}{\textsc{twixt}\xspace}
\tikzset{
  hexsid/.style={red,rounded corners,very thick},
  slitherw/.style={draw,circle,fill=white},
  slitherb/.style={draw,circle,fill=black},
}
\newcommand{\s}{0.5}
\newcommand{\bor}{0.1}
\newcommand{\possw}[2]{\draw        ({(#1 + 0.25) * \s}, {(#2 + 0.25) * \s}) circle ({0.1 * \s}) {};}
\newcommand{\possb}[2]{\draw [fill] ({(#1 - 0.25) * \s}, {(#2 + 0.25) * \s}) circle ({0.1 * \s}) {};}
\newcommand{\sboard}[2]{
  \foreach \i in {1,...,#1} {\foreach \j in {1,...,#2} {\node (\i-\j) at ({\i * \s},{\j * \s}) {};} }
  \foreach \i in {0,...,#1} {
    \draw ({(\i+1) * \s - \s/2},{\s - \s/2}) -- ({(\i+1) * \s - \s/2},{(#2 + 1) * \s - \s/2});
  }
  \foreach \j in {0,...,#2} {
    \draw ({\s - \s/2},{(\j+1) * \s - \s/2}) -- ({(#1 +1) * \s - \s/2},{(\j+1) * \s - \s/2});
  }
}
\newcommand{\sposition}[2]{
  \foreach \k in {#1} { \node[slitherw] at (\k) {}; }
  \foreach \k in {#2} { \node[slitherb] at (\k) {}; }
}
\newcommand{\spositionp}[4]{
  \foreach \i/\j in {#3} {
    \pgfmathtruncatemacro{\ii}{\i + #1} \pgfmathtruncatemacro{\jj}{\j + #2}
    \node [slitherw] at (\ii-\jj) {};
  }
  \foreach \i/\j in {#4} {
    \pgfmathtruncatemacro{\ii}{\i + #1} \pgfmathtruncatemacro{\jj}{\j + #2}
    \node [slitherb] at (\ii-\jj) {};
  }
}
\newcommand{\slithercell}[2]{
  \spositionp{#1}{#2}{1/1,1/2,1/3,1/4,1/5,1/6,1/7,1/8,2/1,2/2,2/3,2/5,2/6,2/7,2/8,3/1,3/2,3/7,3/8,5/2,5/6,5/7,6/5,6/6,7/3,7/5,8/3,8/4,8/5}{}
  \spositionp{#1}{#2}{}{1/6,2/4,2/6,3/3,3/4,4/1,4/2,4/3,4/7,4/8,5/1,5/8,6/1,6/2,6/7,6/8,7/1,7/2,7/4,7/6,7/7,7/8,8/1,8/2,8/6,8/7,8/8}
}
\newcommand{\slitherecell}[2]{
  \slithercell{#1}{#2}
  \spositionp{#1}{#2}{4/4}{5/5}
}
\newcommand{\sedges}[2]{
  \draw [fill] ({\s / 2 - \bor},{\s / 2}) rectangle ({\s / 2},{(#2 + \s)/ 2}) {};
  \draw [fill] ({(#1 + \s)/ 2},{\s / 2}) rectangle ({(#1 + \s)/ 2 + \bor},{(#2 + \s)/ 2}) {};
  \draw [fill=white] ({\s / 2},{\s / 2 - \bor}) rectangle ({(#1 + \s)/ 2},{\s / 2}) {};
  \draw [fill=white] ({\s / 2},{(#2 + \s) / 2}) rectangle ({(#1 + \s)/ 2},{(#2 + \s) / 2 + \bor}) {};
}
\newcommand{\hboard}[2]{
  \foreach \i in {1,...,#1} {\foreach \j in {1,...,#2} {\node (\i-\j) at ({\i * \s},{\j * \s}) {};} }
  \begin{scope}[xshift=-0.25cm, yshift=-0.25cm]
  \foreach \i in {1,...,#1} {
    \foreach \j in {1,...,#2} {
      \draw ({(\i - 0.125) * \s}, {(\j + 0.125) * \s}) -- ({\i * \s}, {(\j + 0.25) * \s}) -- ({\i * \s}, {(\j + 0.75) * \s}) -- ({(\i + 0.25) * \s}, {(\j + 1) * \s}) -- ({(\i + 0.75) * \s}, {(\j + 1) * \s}) -- ({(\i + 0.875) * \s}, {(\j + 1.125) * \s}) -- ({(\i + 1) * \s}, {(\j + 1) * \s}) -- ({(\i + 1.125) * \s}, {(\j + 0.875) * \s}) -- ({(\i + 1) * \s}, {(\j + 0.75) * \s}) -- ({(\i + 1) * \s}, {(\j + 0.25) * \s}) -- ({(\i + 0.75) * \s}, {(\j + 0) * \s}) -- ({(\i + 0.25) * \s}, {(\j + 0) * \s}) -- ({(\i + 0.125) * \s}, {(\j - 0.125) * \s}) -- ({(\i - 0.125) * \s}, {(\j + 0.125) * \s});
    }
  }
  \end{scope}
}
\newcommand{\twboardr}[2]{
  \foreach \i in {1,...,#1} { \foreach \j in {1,...,#2} { \coordinate (\i-\j) at (\i,\j) {} ;}}

  \pgfmathsetmacro{\ii}{#1 - 0.5} \pgfmathsetmacro{\jj}{#2 - 0.5} \pgfmathsetmacro{\kk}{#2 - 0.4}
  \draw [line width=2pt] (1.5,1.5) -- (\ii,1.5); \draw [line width=2pt] (1.5,\jj) -- (\ii,\jj);
  \draw [double distance=1.8pt, very thin] (1.5,1.4) -- (1.5,\kk); \draw [double distance=1.8pt, very thin] (\ii,1.4) -- (\ii,\kk);

  \pgfmathtruncatemacro{\n}{#1 - 1} \pgfmathtruncatemacro{\m}{#2 - 1}
  \foreach \i in {2,...,\n} {
    \foreach \j in {2,...,\m} {
      \node [tw-empty] at (\i,\j) {} ;
    }
  }
  \foreach \i in {2,...,\n} {
    \node [tw-empty] at (\i, 1) {} ;
    \node [tw-empty] at (\i,#2) {} ;
  }
  \foreach \i in {2,...,\m} {
    \node [tw-empty] at ( 1,\i) {} ;
    \node [tw-empty] at (#1,\i) {} ;
  }
}
\newcommand{\twboard}[2]{
  \foreach \i in {1,...,#1} { \foreach \j in {1,...,#2} { \coordinate (\i-\j) at (\i,\j) {} ;}}

  \pgfmathsetmacro{\ii}{#1 - 0.5} \pgfmathsetmacro{\jj}{#2 - 0.6} \pgfmathsetmacro{\kk}{#2 - 0.4}
  \draw [very thin] (1.5,1.4) -- (\ii,1.4);
  \draw [very thin] (1.5,1.55) -- (\ii,1.55);
  \draw [very thin] (1.5,\jj) -- (\ii,\jj);
  \pgfmathsetmacro{\jjj}{\jj + 0.15}
  \draw [very thin] (1.5,\jjj) -- (\ii,\jjj);
  \draw [line width=2pt] (1.5,1.4) -- (1.5,\kk); \draw [line width=2pt] (\ii,1.4) -- (\ii,\kk);

  \pgfmathtruncatemacro{\n}{#1 - 1} \pgfmathtruncatemacro{\m}{#2 - 1}
  \foreach \i in {2,...,\n} {
    \foreach \j in {2,...,\m} {
      \node [tw-empty] at (\i,\j) {} ;
    }
  }
  \foreach \i in {2,...,\n} {
    \node [tw-empty] at (\i, 1) {} ;
    \node [tw-empty] at (\i,#2) {} ;
  }
  \foreach \i in {2,...,\m} {
    \node [tw-empty] at ( 1,\i) {} ;
    \node [tw-empty] at (#1,\i) {} ;
  }
}
\definecolor{light-gray}{gray}{0.92}
\tikzset{
  tw-white/.style={draw,circle,fill=white,inner sep=2},
  tw-black/.style={draw,circle,fill=black,inner sep=2},
  tw-empty/.style={draw,circle,fill=black,inner sep=0.2},
  hav-white/.style={draw,circle,fill=light-gray,minimum size=3.5mm,inner sep=0},
  hav-black/.style={draw,circle,fill=black,minimum size=3.5mm,inner sep=0,text=white},
  hav-empty/.style={regular polygon,regular polygon sides=6,draw,minimum size=5.85mm,inner sep=0},
  hav-white-c/.style={draw,dashed,circle,fill=white,inner sep=2},
  hav-black-c/.style={draw,circle,fill=gray,inner sep=2},
}
\newcommand{\havcoordinate}[4]{
  \foreach \i in {#1,...,#2} {
    \foreach \j in {#3,...,#4} {
      \coordinate (\i-\j) at ({(\i+1) * 1.732},{(\j)}) {};
    }
  }
  \foreach \i in {#1,...,#2} {
    \foreach \j in {#3,...,#4} {
      \coordinate (\i--\j) at ({(\i+1+1/2) * 1.732},{(\j+1/2)}) {};
    }
  }
}
\newcommand{\twoone}[4]{
  \foreach \i/\j in {1/1,1/2,1/3,1/5,2/0,2/6,3/0,3/6,4/1,4/2,4/3,4/5} {
    \pgfmathtruncatemacro{\ii}{\i + #1}
    \pgfmathtruncatemacro{\jj}{\j + #2}
    \node [hav-#3] at (\ii-\jj) {};
  }
  \foreach \i/\j in {0/3,1/0,1/5,2/5,3/0,3/5,4/3} {
    \pgfmathtruncatemacro{\ii}{\i + #1}
    \pgfmathtruncatemacro{\jj}{\j + #2}
    \node [hav-#3] at (\ii--\jj) {};
  }
  \foreach \i/\j in {2/1,2/3,3/1,3/3} {
    \pgfmathtruncatemacro{\ii}{\i + #1}
    \pgfmathtruncatemacro{\jj}{\j + #2}
    \node [hav-#4] at (\ii-\jj) {};
  }
  \foreach \i/\j in {1/1,2/0,2/4,3/1} {
    \pgfmathtruncatemacro{\ii}{\i + #1}
    \pgfmathtruncatemacro{\jj}{\j + #2}
    \node [hav-#4] at (\ii--\jj) {};
  }
}
\newcommand{\twooneb}[4]{
  \foreach \i/\j in {1/1,1/2,1/3,1/5,2/0,2/6,3/0,3/6,4/1,4/2,4/3,4/5} {
    \pgfmathtruncatemacro{\ii}{\i + #1}
    \pgfmathtruncatemacro{\jj}{6-\j + #2}
    \node [hav-#3] at (\ii-\jj) {};
  }
  \foreach \i/\j in {0/3,1/0,1/5,2/5,3/0,3/5,4/3} {
    \pgfmathtruncatemacro{\ii}{\i + #1}
    \pgfmathtruncatemacro{\jj}{5-\j + #2}
    \node [hav-#3] at (\ii--\jj) {};
  }
  \foreach \i/\j in {2/1,2/3,3/1,3/3} {
    \pgfmathtruncatemacro{\ii}{\i + #1}
    \pgfmathtruncatemacro{\jj}{6-\j + #2}
    \node [hav-#4] at (\ii-\jj) {};
  }
  \foreach \i/\j in {1/1,2/0,2/4,3/1} {
    \pgfmathtruncatemacro{\ii}{\i + #1}
    \pgfmathtruncatemacro{\jj}{5-\j + #2}
    \node [hav-#4] at (\ii--\jj) {};
  }
}
\newcommand{\twoonec}[4]{
  \foreach \i/\j in {0/2,0/3,0/4,1/0,1/5,2/6,3/1,3/5} {
    \pgfmathtruncatemacro{\ii}{\i + #1}
    \pgfmathtruncatemacro{\jj}{\j + #2}
    \node [hav-#3] at (\ii--\jj) {};
  }
  \foreach \i/\j in {1/1,1/2,1/6,2/0,2/6,3/0,3/1,3/6,4/2,4/3,4/4} {
    \pgfmathtruncatemacro{\ii}{\i + #1}
    \pgfmathtruncatemacro{\jj}{\j + #2}
    \node [hav-#3] at (\ii-\jj) {};
  }
  \foreach \i/\j in {1/2,2/2,2/5,3/2,3/3,3/4} {
    \pgfmathtruncatemacro{\ii}{\i + #1}
    \pgfmathtruncatemacro{\jj}{\j + #2}
    \node [hav-#4] at (\ii--\jj) {};
  }
  \foreach \i/\j in {2/4,3/5} {
    \pgfmathtruncatemacro{\ii}{\i + #1}
    \pgfmathtruncatemacro{\jj}{\j + #2}
    \node [hav-#4] at (\ii-\jj) {};
  }
}
\newcommand{\startingvertex}[2]{
  \foreach \i/\j in {1/3,2/1,2/3,3/2} {
    \pgfmathtruncatemacro{\ii}{\i + #1}
    \pgfmathtruncatemacro{\jj}{\j + #2}
    \node[hav-white] at (\ii-\jj) {};
  }
  \foreach \i/\j in {1/1,2/0,2/2} {
    \pgfmathtruncatemacro{\ii}{\i + #1}
    \pgfmathtruncatemacro{\jj}{\j + #2}
    \node[hav-white] at (\ii--\jj) {};
  }
  \foreach \i/\j in {2/2} {
    \pgfmathtruncatemacro{\ii}{\i + #1}
    \pgfmathtruncatemacro{\jj}{\j + #2}
    \node[hav-black] at (\ii-\jj) {};
  }
  \foreach \i/\j in {0/3} {
    \pgfmathtruncatemacro{\ii}{\i + #1}
    \pgfmathtruncatemacro{\jj}{\j + #2}
    \node[hav-black] at (\ii--\jj) {};
  }
}
\newcommand{\alphabet}{A,B,C,D,E,F,G,H,I}
\title{On the Complexity of Connection Games}
\author{Édouard Bonnet}
\address{\textup{\url{edouard.bonnet@lamsade.dauphine.fr}}\\\textsc{Sztaki}, Hungarian Academy of Sciences}
\author{Florian Jamain}
\address{\textup{\url{florian.jamain@lamsade.dauphine.fr}}\\\textsc{Lamsade}, Université Paris-Dauphine}
\author{Abdallah Saffidine}
\address{\textup{\url{abdallahs@cse.unsw.edu.au}}\\\textsc{Cse}, The University of New South Wales}
\begin{document}
\begin{abstract}
  In this paper, we study three connection games among the most widely played: \gamename{havannah}, \gamename{twixt}, and \gamename{slither}.
  We show that determining the outcome of an arbitrary input position is \pspace-complete in all three cases.
  Our reductions are based on the popular graph problem \gamename{generalized geography} and on \gamename{hex} itself.
  We also consider the complexity of generalizations of \gamename{hex} parameterized by the length of the solution and establish that while \sGHex is \wone-hard, \SHex is \fpt.
  Finally, we prove that the ultra-weak solution to the empty starting position in \gamename{hex} cannot be fully adapted to any of these three games.
\end{abstract}
\begin{keyword}
  Complexity, Havannah, Twixt, Hex, Slither, \pspace{}
\end{keyword}

\maketitle

\section{Introduction}

Since its independent inventions in 1942 and 1948 by the poet and mathematician Piet Hein and the economist and mathematician John Nash, the game of \gamename{hex} has acquired a special spot in the heart of abstract game aficionados.
Its purity and depth has lead Jack van Rijswijck to conclude his PhD thesis with the following hyperbole~\cite{vanRijswijck2006}.
\begin{quote}
  \gamename{Hex} has a Platonic existence, independent of human thought.
  If ever we find an extraterrestrial civilization at all, they will know \gamename{hex}, without any doubt.
\end{quote}
\gamename{Hex} not only exerts a fascination on players, but it is the root of the field of \emph{connection games} which is being actively explored by game designers and researchers alike~\cite{Browne2005}.

A connection game is a kind of abstract strategy game in which players try to make a specific type of connection with their pieces~\cite{Browne2005}.
In many connection games, the goal is to connect two opposite sides of a board.
In these games, players take turns placing and/or moving pieces until they connect the two sides of the board.
\gamename{Hex}, \gamename{y}, and \gamename{twixt} are typical examples of this type of game.
However, a connection game can also involve completing a loop (\gamename{havannah}), connecting all the pieces of a color (\gamename{lines of action}), or forbidden patterns (\gamename{slither}).

The focus of research on abstract strategy games, and on connection games in particular, includes the design and programming of strong artificial players and solvers~\cite{ArnesonHH2010,Ewalds2012}, as well as theoretical considerations on aspects specific to connection games such a virtual connections and inferior cells~\cite{vanRijswijck2006,Henderson2010,Steane2012,Steane2013}.

Developing an artificial player for a strategy game typically involves adapting standard techniques from the game search literature, in particular the classical Alpha-Beta algorithm~\cite{Anshelevich2002} or the more recent Monte Carlo Tree Search paradigm~\cite{BrownePWLCRTPSC2012,ArnesonHH2010}.
These algorithms which explore an exponentially large game tree are meaningful when optimal polynomial time algorithms are impossible or unlikely.
For instance, tree search algorithms would not be used for \gamename{nim} and \gamename{Shannon's edge switching game} which can be played optimally and solved in polynomial time~\cite{BrunoW1970}.

\emph{Computational complexity} is a theoretical tool used to gain formal intuition on families of problems.
It can indicate that a problem is unlikely to be solvable in polynomial time and that exponential algorithms might be the best bet.
The complexity class \pspace{} comprises those problems that can be solved on a Turing machine using an amount of space polynomial in the size of the input.
The prototypical example of a \pspace{}-complete problem is the Quantified Boolean Formula problem (\textsc{qbf}) which can be seen as a generalization of \textsc{sat} allowing for variables to be both existentially and universally quantified.
Proving that a game is \pspace{}-hard shows that a variety of intricate problems can be encoded via positions of this game.
Additionally, it is widely believed in complexity theory that if a problem is \pspace-hard, then it admits no polynomial time algorithms.

For this reason, studying the computational complexity of games is also a popular research topic.
The complexity class of \gamename{chess} and \gamename{go} was determined shortly after the very definition of the appropriate classes and other popular games have been classified since then~\cite{DemaineH2009,HearnDemaine2009}.
More recently, we studied the complexity of trick-taking card games which notably include \gamename{bridge}, \gamename{skat}, \gamename{tarot}, and \gamename{whist}~\cite{BonnetJamainSaffidine2013IJCAI}.

Connection games have received less attention.
Besides Even and Tarjan's proof that \gamename{Shannon's vertex switching game} is \pspace-complete~\cite{EvenTarjan1976} and Reisch's proof that \gamename{hex} is \pspace-complete~\cite{Reisch1981}, the only complexity results on connection games that we know of are the \pspace-completeness of virtual connection detection~\cite{Kiefer2003} in \gamename{hex}, the \np-completeness of dominated cell detection in \gamename{Shannon's vertex switching game}~\cite{BjornssonHJvR2007}, as well as an unpublished note showing that a problem related to \gamename{twixt} is \np-complete~\cite{MazzoniW1997}.\footnote{For a summary in English of Reisch's reduction, see Maarup's thesis~\cite{Maarup2005}.}

The games that we study in this paper rank among the most notable connection games.
The game of \gamename{twixt} by Alex Randolph~\cite{Whitehill2006} was first commercialized in the 1960s and was short-listed for the prestigious \emph{Spiel des Jahres} award in 1979.
Notwithstanding \hex, Christian Freeling's \Hav is the connection game that has attracted the most interest from researchers and programmers.
This may be attributed to Freeling challenging programmers to build an AI able to beat him in at least one game in a ten-game match before 2012.
Of course, Christian Freeling could not have foreseen the advent of the Monte Carlo Tree Search algorithm and the prize money of the \emph{Havannah Challenge 2012} was awarded after a man-machine match in October 2012.\footnote{See the press release at \url{http://mindsports.nl/index.php/arena/havannah/641}.}
\Sl is a more recent addition to the connection game family as it was designed by Corey Clark in 2010.
Nevertheless, the introduction of forbidden patterns mechanics in this game has already proved quite influential in connection game design.\footnote{See for example the discussion at \url{https://www.boardgamegeek.com/thread/1081423/new-connection-games-slither-restriction}.}
The two-player game \gamename{slither} that we study in this article should not be confused with the Japanese single-player puzzle \gamename{slither link} which has been the object of independent papers~\cite{Yato2000,YoshinakaSKTIM2012}.

The first two games were the main topic of multiple master's theses and research articles~\cite{Huber1983,MazzoniW1997,Moesker2009,UiterwijkM2009,TeytaudT2009,Lorentz2011,Ewalds2012} and all three gave rise to competitive play.
High-level online competitive play takes place on \url{www.littlegolem.net}.
Finally, live competitive play can also be observed between human players at the Mind Sports Olympiads where an international \gamename{twixt} championship has taken place every year since 1997, as well as between \gamename{havannah} computer players at the ICGA Computer Olympiad since 2009.\footnote{See \url{www.boardability.com/game.php?id=twixt} and \url{www.grappa.univ-lille3.fr/icga/game.php?id=37} for details.}

Our main contributions in this paper are diverse.\footnote{This article is based on two conference papers but improves upon them and extends our previous work significantly~\cite{BonnetJamainSaffidine2013CG,BonnetJS2015}.
  The \pspace-completeness proof for \Hav has been simplified greatly.
  The parameterized complexity results are new.
  The discussion on ultra-weak solutions has been detailed and extended to \gamename{twixt} and \Hav.}
\begin{enumerate}
\item We establish that \Hav, \gamename{twixt}, and \gamename{slither} are \pspace-complete.
\item We study the parameterized complexity of \hex with the length of a solution as parameter and we show that while \sGHex is \wone-hard, \SHex is \fpt.
\item We provide a formal proof that planar \slither does not admit draws, as claimed by its designer, but that it heavily depends on the topology of the board.
\item We demonstrate that John Nash's ultra-weak solution approach for \hex can only be adapted to \gamename{twixt}, \Hav, and \gamename{slither} to a limited extent.
\end{enumerate}

To the best of our knowledge, our result on \SHex constitute the first tractability result in the field of Connection Games since \citeauthor{BrunoW1970}'s result on the Shannon Switching Game in \citeyear{BrunoW1970}~\cite{BrunoW1970}.
The other games we focus on each add to or change important mechanics of \hex, the quintessential connection game.
Indeed, from a connection perspective, \gamename{twixt} is played on a non-planar graph\footnote{Indeed, planar graphs have average degree smaller than $6$, whereas a large board of \gamename{twixt} has average degree close to $8$.}; \Hav adds non-edge related winning conditions; and \gamename{slither} allows moving previously placed pieces.
Together, these contributions make for a better understanding of the design decisions in connection games and their theoretical implications.



The paper is organized as follows.
After describing the rules of \GG and \hex, we prove that \gamename{twixt} and \slither are \pspace-complete in Section~\ref{sec:twixt-slither} and that \Hav is \pspace-complete in Section~\ref{sec:complexity-havannah}.
We then address the parameterized complexity of \GHex and \hex (Section~\ref{sec:parameterized-complexity-hex}).
After having extensively considered the computational complexity of solving arbitrary positions of these games, Section~\ref{sec:ultra-weak} moves on to the problem of solving their starting (empty) positions.
A discussion and final remarks conclude the paper.

\section{Previous Work}
A staple in proofs of \pspace-hardness for two-player games, \gamename{generalized geography} (\GG{}) is one of the first two-player games to have been proved \pspace-complete~\cite{Schaefer1978}.
This problem has been used to show the intractability of games as different as \gamename{hex}~\cite{Reisch1981}, \gamename{othello}~\cite{IwataK1994}, \gamename{amazons}~\cite{FurtakKUB2005}, \gamename{bridge}~\cite{BonnetJamainSaffidine2013IJCAI}, and many more~\cite{HearnDemaine2009}.
Our \pspace-hardness proof for \Hav is a reduction from \GG and the results for \gamename{twixt} and \slither also rely on \gamename{generalized geography}, albeit indirectly since we reduce from \gamename{hex}.

\GG is a game played on a graph.
In this section, we recall the rules of \GG and some of the assumptions that can be made on the input graph while preserving \pspace-completeness.
We also present the rules of \hex.

\subsection{Generalized Geography}
\label{sec:gg}

In \GG{}, players take turns moving a token from vertex to vertex in a given graph.
If the token is on a vertex $v$, then it can be moved to a vertex $v'$ neighboring $v$ provided $v'$ has not been visited yet.
A player wins when it is their opponent's turn and the opponent has no legal moves.
An instance of \GG{} is a graph $G$ and an initial vertex $v_0$, and asks whether the first player has a winning strategy in the corresponding game.

We denote by $P(v)$ the set of predecessors of the vertex $v$ in $G$, and $S(v)$ the set of successors of $v$.
A vertex with in-degree $i$ and out-degree $o$ is called $(i,o)$-vertex.
The degree of a vertex is the sum of the in-degree and the out-degree, and the degree of $G$ is the maximal degree among all vertices of $G$.
If $V$ is the set of vertices of $G$ and $V'$ is a subset of vertices, then $G[V\setminus V']$ is the induced subgraph of $G$ where vertices belonging to $V'$ have been removed.

Lichtenstein and Sipser have proved that the game remained \pspace-hard even if $G$ was assumed to be bipartite, planar, and of degree at most 3~\cite{LichtensteinS1980}.
We will reduce from such a restriction of \GG{} to show that \Hav{} is \pspace-hard.
To limit the number of gadgets\footnote{In designing a reduction from $\Pi_A$ to $\Pi_B$, a gadget is a portion of a $\Pi_B$-instance that encodes a meaningful component of $\Pi_A$, be it a vertex, an edge, a variable, a clause etc.} we need to create, we will also assume a few simplifications detailed below.
Naturally, these simplifications do not impact \pspace-hardness.
An example of a simplified instance of \GG{} can be found in Figure~\ref{fig:gg-instance}.

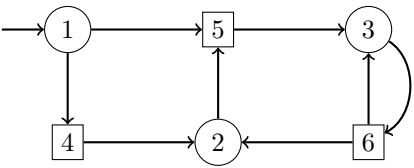
\begin{figure}
  \centering
\begin{tikzpicture}
  \tikzstyle{every node}=[draw,font=\normalsize]

  \node[draw=none] (v0) at (-1,0) {};
  \node[circle] (v1) at (0, 0  ) {1};
  \node[circle] (v2) at (2,-1.5) {2};
  \node[circle] (v3) at (4, 0  ) {3};

  \node[] (v4) at (0,-1.5) {4};
  \node[] (v5) at (2, 0  ) {5};
  \node[] (v6) at (4,-1.5) {6};

  \draw[thick,->] (v0) to (v1);

  \draw[thick,->] (v1) to (v4);
  \draw[thick,->] (v1) to (v5);

  \draw[thick,->] (v2) to (v5);

  \draw[thick,->] (v3) to [bend left=60] (v6);

  \draw[thick,->] (v4) to (v2);

  \draw[thick,->] (v5) to (v3);

  \draw[thick,->] (v6) to (v2);
  \draw[thick,->] (v6) to [bend left=00] (v3);
\end{tikzpicture}
  \caption{Example of an instance of \GG{} with vertex $1$ as initial vertex.}
  \label{fig:gg-instance}
\end{figure}

Let $(G,v_0)$ be an instance of \GG{} with $G$ bipartite, planar, and of degree at most 3.
We can assume that there is no vertex $v$ with out-degree 0 in $G$.
Indeed, if $v_0 \in P(v)$ then $(G,v_0)$ is trivially winning for Player 1.
Else, $(G[V \setminus (\{v\} \cup P(v))],v_0)$ is an equivalent instance, since playing in a predecessor of $v$ is losing.

All edges coming to the initial vertex $v_0$ can be removed to form an equivalent instance.
So, $v_0$ is a $(0,1)$-, a $(0,2)$-, or a $(0,3)$-vertex.
If $S(v_0)=\{v'\}$, then $(G[V \setminus \{v_0\}],v')$ is a strictly smaller instance such that Player 1 is winning in $(G,v_0)$ if and only if Player 1 is losing in $(G[V \setminus \{v_0\}],v')$.
If $S(v_0)=\{v',v'',v'''\}$, then Player 1 is winning in $(G,v_0)$ if and only if Player 1 is losing in at least one of the three instances $(G[V \setminus \{v_0\}],v')$, $(G[V \setminus \{v_0\}],v'')$, and $(G[V \setminus \{v_0\}],v''')$.
In those three instances $v'$, $v''$, and $v''$ are not $(0,3)$-vertices since they had in-degree at least 1 in $G$.
Thus, solving instances with an initial $(0,2)$-vertex is as hard as solving instances with an initial $(0,3)$-vertex.
We can therefore also assume that $v_0$ is a $(0,2)$-vertex.

Finally, we may assume that there exists a bipartition $G_1, G_2$ of the graph such that both $G_1$ and $G_2$ contain a $(1, 2)$ choice vertex.
Indeed, such vertices can always be added to a new component of the graph disconnected from the initial vertex without changing the outcome of the instance.

We call a \GG{} instance with an initial $(0,2)$-vertex, at least one $(1, 2)$-vertex per partition and then only $(1,1)$-, $(1,2)$-, and $(2,1)$-vertices a \emph{simplified} instance.

\subsection{Hex}
In \hex, two players alternate placing a stone of their color (black or white) in an unoccupied cell of a parallelogram board paved by hexagons.
Stones cannot be taken or moved, so the length of any game is bounded by the number of cells.
To win, one player must connect together a specified pair of opposite sides of the parallelogram.
The other player must connect the other pair of opposite sides.
As an example, Figure~\ref{fig:hex-puzzle} reproduces a \hex puzzle from the graph theorist Claude Berge, as cited by \citet{HaywardR2006}.

\newcommand{\bergepuzzle}{
  \havcoordinate{0}{18}{0}{18}
  \foreach \j in {3,...,3} { \node[hav-empty] at (0--\j)  {} ;}
  \foreach \j in {3,...,4} { \node[hav-empty] at  (1-\j)  {} ;}
  \foreach \j in {2,...,4} { \node[hav-empty] at (1--\j)  {} ;}
  \foreach \j in {2,...,5} { \node[hav-empty] at  (2-\j)  {} ;}
  \foreach \j in {1,...,5} { \node[hav-empty] at (2--\j)  {} ;}
  \foreach \j in {2,...,5} { \node[hav-empty] at  (3-\j)  {} ;}
  \foreach \j in {2,...,4} { \node[hav-empty] at (3--\j)  {} ;}
  \foreach \j in {3,...,4} { \node[hav-empty] at  (4-\j)  {} ;}
  \foreach \j in {3,...,3} { \node[hav-empty] at (4--\j)  {} ;}
  \node[hav-white] at (0--1) {}; \node[hav-black] at (0--5) {}; \node[hav-black] at (4--1) {}; \node[hav-white] at (4--5) {};
  \node[hav-white] at (0--3) {};
  \node[hav-black] at (1--3) {};
  \node[hav-black] at (2--1) {};
  \node[hav-black] at (2--3) {};
  \node[hav-white] at (2--4) {};
  \node[hav-white] at (3-2)  {};
  \node[hav-white] at (3-3)  {};
  \node[hav-white] at (3-4)  {};
  \node[hav-black] at (3-5)  {};
  \node[hav-black] at (3--4) {};
}
\begin{figure}
  \centering
  \newcommand{\scala}{0.5}
  \subfloat[Starting position of the puzzle.]{
    \begin{tikzpicture}[>=stealth',scale=\scala,every node/.style={scale=2*\scala}]
      \bergepuzzle
    \end{tikzpicture}
  }\hfill
  \subfloat[A failed attempt by White.]{
    \begin{tikzpicture}[>=stealth',scale=\scala,every node/.style={scale=2*\scala}]
      \bergepuzzle
      \node[hav-white] at (2--5) {1};
      \node[hav-black] at (2--2) {2};
      \node[hav-white] at (1--4) {3};
      \node[hav-black] at (1-4)  {4};
      \node[hav-white] at (2-4)  {5};
      \node[hav-black] at (2-3)  {6};
    \end{tikzpicture}
  }
  \caption{A \hex puzzle by Berge. White to play and win.
  White is trying to connect the bottom left edge to the top right edge and Black is trying to connect the top left edge to the bottom right edge.}
  \label{fig:hex-puzzle}
\end{figure}
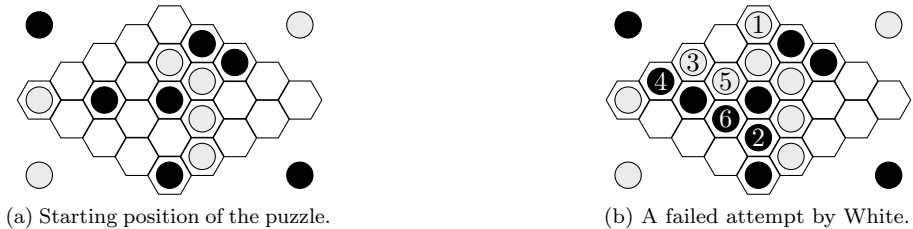

In Section~\ref{subsec:ultra-weak-hex}, we will see that a game cannot be drawn.
We now formalize what it means to \emph{connect} two (opposite) sides of the board.
Two cells are said \emph{neighbors} if they share an edge.
In a parallelogram-shaped hexagonal paving, cells have $6$ neighbors if they are not on an edge of the parallelogram.
On an edge, they may have $2$, $3$, or $4$ neighbors.
A black (resp.~white) group is a maximal connected component in the graph whose vertices are black (resp.~white) stones, and edges represent the \emph{neighbor} relation.
Then, connecting two sides mean having a group with at least one stone on each side.

\section{Complexity of Twixt and Slither}
\label{sec:twixt-slither}
\subsection{Twixt}
Alex Randolph's \gamename{twixt} is one of the most popular connection games.
It was invented around 1960 and was marketed as soon as in 1962~\cite{Huber1983}.
In his book devoted to connection games, Cameron Browne describes \gamename{twixt} as one of the most popular and widely marketed of all connection games~\cite{Browne2005}.
We now briefly describe the rules of \gamename{twixt} and refer to Moesker's master's thesis for an introduction and a mathematical approach to the strategy, and the description of a possible implementation~\cite{Moesker2009}.

\gamename{twixt} is a 2-player connection game played on a \gamename{go}-like board.
At their turn, player White and Black place a pawn of their color in an unoccupied place.
Just as in \gamename{havannah} and \gamename{hex}, pawns cannot be taken, moved, nor removed.
When 2 pawns of the same color are spaced by a knight's move, they are linked by an edge of their color, unless this edge would cross another edge.
At each turn, a player can remove some of their edges to allow for new links.
The goal for player White (resp.~Black) is to link top and bottom (resp.~left and right) sides of the board.
Note that sometimes, a player could have to choose between two possible edges that intersect each other.
The \textit{pencil and paper} version \gamename{twixtpp} where the edges of a same color are allowed to cross is also famous and played online.

As an illustration of the game rules, we reproduce here one of the original \gamename{twixt} puzzles invented by Alex Randolph.
A complete list of puzzles by Randolph supplemented by new puzzles by Alan Hensel can be found on \url{http://www.ibiblio.org/twixtpuzzles/}.
Figure~\ref{fig:twixt-puzzle18} displays puzzle 18 in which White is to play and win.
Observe that only White (resp.~Black) can play beyond the two horizontal white (resp.~vertical black) strips.
The first natural observation, in this puzzle, is that White has three groups of stones, two of which are virtually connected to the top side (White move K1, connecting both groups simultaneously, needs two Black moves to be partially parried), and one is connected to the bottom side (via B13 or F13).
A tentative approach could be to play F5, bringing the top left group closer to the bottom group.
Unfortunately that attempt fails when Black answers F6, shutting the top group completely, and the F5 move is wasted.
The solution to the puzzle is, instead, for White to start with move G7, which is connected to the bottom group.
White threatens both to connect the bottom group to the top left one via F5 and to connect the bottom group to the top right group via I6.
There is no way for Black to prevent both threats at once and White wins the game.

\begin{figure}
  \centering
\begin{tikzpicture}[scale=0.43]
  \begin{scope}[xshift=-0cm,yshift=-0cm,yscale=-1]
    \twboard{13}{13}

    \foreach \j in {1,...,13} {
      \node at (0, \j) {\j} ;
    }
    \node at ( 1,0) {A}; \node at ( 2,0) {B}; \node at ( 3,0) {C}; \node at ( 4,0) {D}; \node at ( 5,0) {E};
    \node at ( 6,0) {F}; \node at ( 7,0) {G}; \node at ( 8,0) {H}; \node at ( 9,0) {I}; \node at (10,0) {J};
    \node at (11,0) {K}; \node at (12,0) {L}; \node at (13,0) {M};

    \draw (4-12) -- (5-10) -- (6-8) -- (5-6);
    \node at (4-12)[tw-white] {}; \node at (5-10)[tw-white] {}; \node at  (6-8)[tw-white] {}; \node at  (5-6)[tw-white] {};

    \draw (7-3) -- (9-2);
    \node at  (7-3)[tw-white] {}; \node at  (9-2)[tw-white] {};

    \draw (11-5) -- (12-3);
    \node at (11-5)[tw-white] {}; \node at (12-3)[tw-white] {};

    \draw (4-5) -- (6-4);
    \node at  (4-5)[tw-black] {}; \node at  (6-4)[tw-black] {};

    \draw (8-6) -- (7-4) -- (9-3);
    \node at  (7-4)[tw-black] {}; \node at  (8-6)[tw-black] {}; \node at  (9-3)[tw-black] {};

    \draw (8-9) -- (10-8) -- (11-6) -- (12-4);
    \node at  (8-9)[tw-black] {}; \node at (10-8)[tw-black] {}; \node at (11-6)[tw-black] {}; \node at (12-4)[tw-black] {};
  \end{scope}
\end{tikzpicture}
  \caption{\Twixt puzzle 18 by Alex Randolph.
  White to play and win.}
  \label{fig:twixt-puzzle18}
\end{figure}
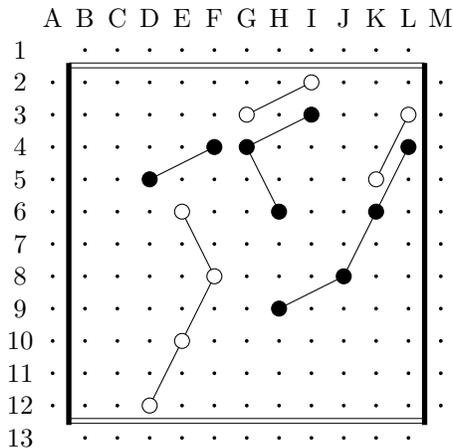

As the length of a game of \twixt is polynomially bounded, exploring the whole tree can be done with polynomial space using a minimax algorithm.
Therefore \gamename{twixt} is in \pspace.

Mazzoni and Watkins have shown that \textsc{3-sat} could be reduced to single-player \twixt, thus showing \np-completeness of the variant~\cite{MazzoniW1997}.
While it might be possible to try and adapt their work and obtain a reduction from \textsc{3-qbf} to standard two-player \twixt, we propose a simpler approach based on \hex.
The \pspace-completeness of \hex \cite{Reisch1981} has already been used to show the \pspace-completeness of \gamename{amazons}, a well-known territory game~\cite{FurtakKUB2005}.

We now present how we construct from an instance $G$ of \hex an instance $\phi(G)$ of \twixt.
We can represent a cell of \hex by the $9\times 9$ \twixt gadgets displayed in Figure~\ref{fig:twixt-cell}.
Let $n$ be the size of a side of $G$, Figure~\ref{fig:twixt-board} shows how a \twixt board can be paved by $n^2$ \twixt cell gadgets to create a \hex board.

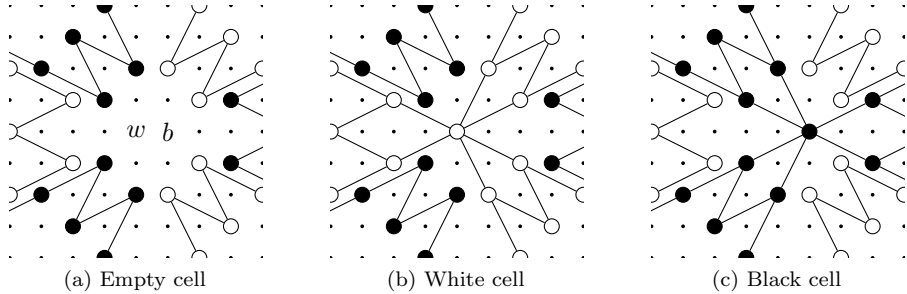
\begin{figure}
  \centering
  \subfloat[Empty cell]{
    \centering
    \begin{tikzpicture}[scale=0.42,rotate=-90]
      \clip (1, 1) rectangle (9, 9);
\foreach[count=\ci] \i in \alphabet {
  \foreach \j in {0,...,10} {
    \node (\i\j) [tw-empty] at ({\ci},{\j}) {} ;
  }
}
\foreach \i in {B0, A4, C2, C5, C10, D4, D8, F4, F8, G2, G5, G10, H0, I4} {
  \node (\i) at (\i)[tw-black] {};
}
\foreach \i in {A7, C1, C6, C9, D3, D7, F3, F7, G1, G6, G9, I7} {
  \node (\i) at (\i)[tw-white] {};
}
\node (B3) at (B3)[tw-black] {}; \node (H3) at (H3)[tw-black] {}; \node (E10) at (E10)[tw-black] {};
\node (B8) at (B8)[tw-white] {}; \node (H8) at (H8)[tw-white] {}; \node (E1) at (E1)[tw-white] {};
\draw (C5) -- (B3) -- (D4); \draw (F4) -- (H3) -- (G5); \draw (D8) -- (E10) -- (F8); \draw (C6) -- (B8) -- (D7); \draw (F7) -- (H8) -- (G6); \draw (D3) -- (E1) -- (F3);

\draw (A4) -- (C5);
\draw (A7) -- (C6);
\draw (B0) -- (C2);
\draw (H0) -- (G2);
\draw (C1) -- (D3);
\draw (C2) -- (D4);
\draw (C9) -- (D7);
\draw (C10) -- (D8);
\draw (F3) -- (G1);
\draw (F4) -- (G2);
\draw (F7) -- (G9);
\draw (F8) -- (G10);
\draw (G5) -- (I4);
\draw (G6) -- (I7);
      \node (E6) at (E6)[fill=white] {$b$};
      \node (E5) at (E5)[fill=white] {$w$};
    \end{tikzpicture}
    \label{fig:tw-empty}
  }
  \hfill
  \subfloat[White cell]{
    \centering
    \begin{tikzpicture}[scale=0.42,rotate=-90]
      \clip (1, 1) rectangle (9, 9);
\foreach[count=\ci] \i in \alphabet {
  \foreach \j in {0,...,10} {
    \node (\i\j) [tw-empty] at ({\ci},{\j}) {} ;
  }
}
\foreach \i in {B0, A4, C2, C5, C10, D4, D8, F4, F8, G2, G5, G10, H0, I4} {
  \node (\i) at (\i)[tw-black] {};
}
\foreach \i in {A7, C1, C6, C9, D3, D7, F3, F7, G1, G6, G9, I7} {
  \node (\i) at (\i)[tw-white] {};
}
\node (B3) at (B3)[tw-black] {}; \node (H3) at (H3)[tw-black] {}; \node (E10) at (E10)[tw-black] {};
\node (B8) at (B8)[tw-white] {}; \node (H8) at (H8)[tw-white] {}; \node (E1) at (E1)[tw-white] {};
\draw (C5) -- (B3) -- (D4); \draw (F4) -- (H3) -- (G5); \draw (D8) -- (E10) -- (F8); \draw (C6) -- (B8) -- (D7); \draw (F7) -- (H8) -- (G6); \draw (D3) -- (E1) -- (F3);

\draw (A4) -- (C5);
\draw (A7) -- (C6);
\draw (B0) -- (C2);
\draw (H0) -- (G2);
\draw (C1) -- (D3);
\draw (C2) -- (D4);
\draw (C9) -- (D7);
\draw (C10) -- (D8);
\draw (F3) -- (G1);
\draw (F4) -- (G2);
\draw (F7) -- (G9);
\draw (F8) -- (G10);
\draw (G5) -- (I4);
\draw (G6) -- (I7);
\draw (E5) -- (C6);
\draw (E5) -- (D3);
\draw (E5) -- (D7);
\draw (E5) -- (F3);
\draw (E5) -- (F7);
\draw (E5) -- (G6);

\node at (E5)[tw-white] {};
    \end{tikzpicture}
    \label{fig:tw-white}
  }
  \hfill
  \subfloat[Black cell]{
    \centering
    \begin{tikzpicture}[scale=0.42,rotate=-90]
      \clip (1, 1) rectangle (9, 9);
\foreach[count=\ci] \i in \alphabet {
  \foreach \j in {0,...,10} {
    \node (\i\j) [tw-empty] at ({\ci},{\j}) {} ;
  }
}
\foreach \i in {B0, A4, C2, C5, C10, D4, D8, F4, F8, G2, G5, G10, H0, I4} {
  \node (\i) at (\i)[tw-black] {};
}
\foreach \i in {A7, C1, C6, C9, D3, D7, F3, F7, G1, G6, G9, I7} {
  \node (\i) at (\i)[tw-white] {};
}
\node (B3) at (B3)[tw-black] {}; \node (H3) at (H3)[tw-black] {}; \node (E10) at (E10)[tw-black] {};
\node (B8) at (B8)[tw-white] {}; \node (H8) at (H8)[tw-white] {}; \node (E1) at (E1)[tw-white] {};
\draw (C5) -- (B3) -- (D4); \draw (F4) -- (H3) -- (G5); \draw (D8) -- (E10) -- (F8); \draw (C6) -- (B8) -- (D7); \draw (F7) -- (H8) -- (G6); \draw (D3) -- (E1) -- (F3);

\draw (A4) -- (C5);
\draw (A7) -- (C6);
\draw (B0) -- (C2);
\draw (H0) -- (G2);
\draw (C1) -- (D3);
\draw (C2) -- (D4);
\draw (C9) -- (D7);
\draw (C10) -- (D8);
\draw (F3) -- (G1);
\draw (F4) -- (G2);
\draw (F7) -- (G9);
\draw (F8) -- (G10);
\draw (G5) -- (I4);
\draw (G6) -- (I7);
\draw (E6) -- (C5);
\draw (E6) -- (D4);
\draw (E6) -- (D8);
\draw (E6) -- (F4);
\draw (E6) -- (F8);
\draw (E6) -- (G5);

\node at (E6)[tw-black] {};
    \end{tikzpicture}
    \label{fig:tw-black}
  }
  \caption{Basic gadgets representing \hex cells in \twixt.}
  \label{fig:twixt-cell}
\end{figure}

It is not hard to see from Figure~\ref{fig:tw-empty} that in each gadget of Figure~\ref{fig:twixt-board}, move $w$ (resp.~$b$) is dominating for White (resp.~Black).
That is, playing $w$ is as good for White as any other move of the gadget.
We can also see that the moves that are not part of any gadget in Figure~\ref{fig:twixt-board} are dominated for both players.
As a result, if player Black (resp.~White) has a winning strategy in $G$, then player Black has a winning strategy in $\phi(G)$.
Thus, $G$ is won by Black if and only if $\phi(G)$ is won by Black.
Therefore determining the winner in \twixt is at least as hard as in \hex, leading to the desired result.

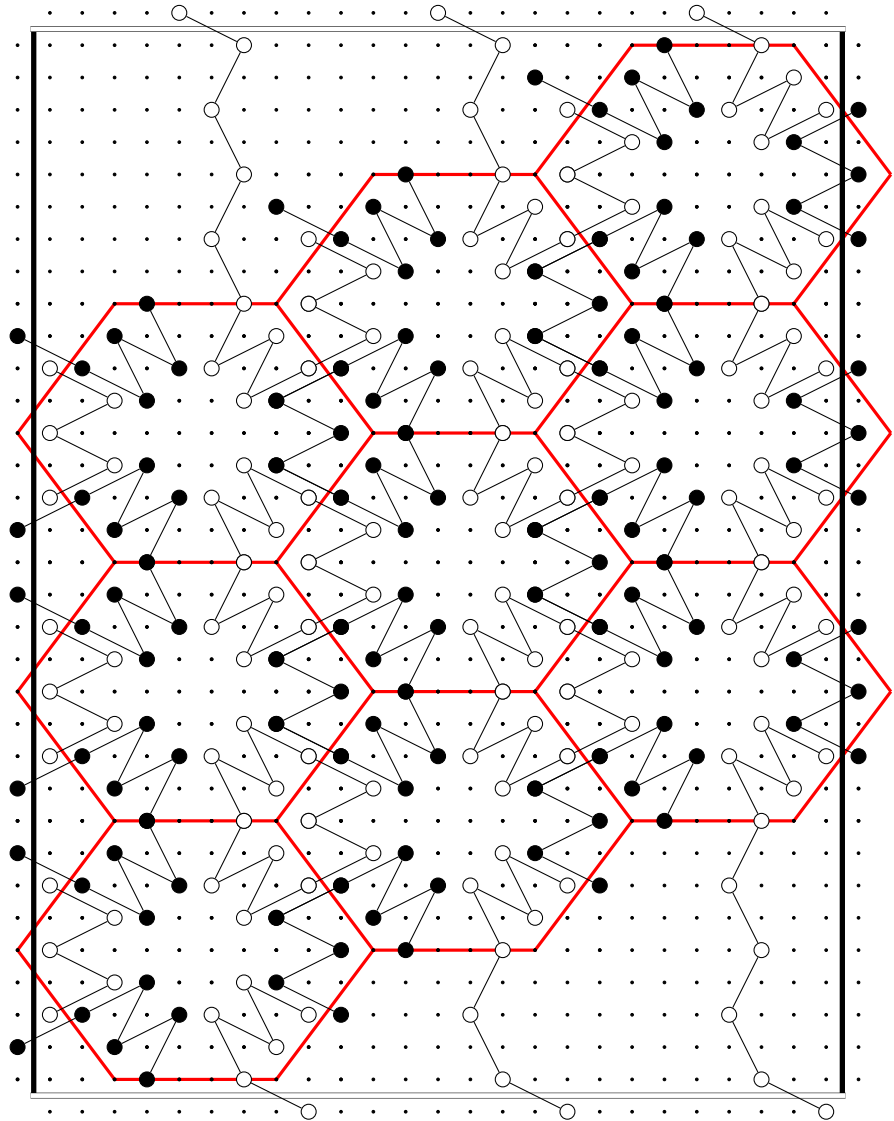
\begin{figure}
  \centering
\begin{tikzpicture}[scale=0.43]
  \begin{scope}[scale=1.0,xshift=30mm,yshift= 10mm]
    \draw[hexsid] ( 0,  0) -- ( 5,  0); \draw[hexsid] ( 0,  8) -- ( 5,  8); \draw[hexsid] ( 0, 16) -- ( 5, 16); \draw[hexsid] ( 0, 24) -- ( 5, 24);
    \draw[hexsid] ( 8,  4) -- (13,  4); \draw[hexsid] ( 8, 12) -- (13, 12); \draw[hexsid] ( 8, 20) -- (13, 20); \draw[hexsid] ( 8, 28) -- (13, 28);
    \draw[hexsid] (16,  8) -- (21,  8); \draw[hexsid] (16, 16) -- (21, 16); \draw[hexsid] (16, 24) -- (21, 24); \draw[hexsid] (16, 32) -- (21, 32);

    \draw[hexsid] (-3,  4) -- ( 0,  8); \draw[hexsid] (-3, 12) -- ( 0, 16); \draw[hexsid] (-3, 20) -- ( 0, 24);
    \draw[hexsid] ( 5,  0) -- ( 8,  4); \draw[hexsid] ( 5,  8) -- ( 8, 12); \draw[hexsid] ( 5, 16) -- ( 8, 20); \draw[hexsid] ( 5, 24) -- ( 8, 28);
    \draw[hexsid] (13,  4) -- (16,  8); \draw[hexsid] (13, 12) -- (16, 16); \draw[hexsid] (13, 20) -- (16, 24); \draw[hexsid] (13, 28) -- (16, 32);
    \draw[hexsid] (21,  8) -- (24, 12); \draw[hexsid] (21, 16) -- (24, 20); \draw[hexsid] (21, 24) -- (24, 28);

    \draw[hexsid] (-3,  4) -- ( 0,  0); \draw[hexsid] (-3, 12) -- ( 0,  8); \draw[hexsid] (-3, 20) -- ( 0, 16);
    \draw[hexsid] ( 5,  8) -- ( 8,  4); \draw[hexsid] ( 5, 16) -- ( 8, 12); \draw[hexsid] ( 5, 24) -- ( 8, 20);
    \draw[hexsid] (13, 12) -- (16,  8); \draw[hexsid] (13, 20) -- (16, 16); \draw[hexsid] (13, 28) -- (16, 24);
    \draw[hexsid] (21, 16) -- (24, 12); \draw[hexsid] (21, 24) -- (24, 20); \draw[hexsid] (21, 32) -- (24, 28);
  \end{scope}
  \begin{scope}[rotate=-90,xscale=-1]
    \begin{scope}[xshift=-1cm,yshift=-1cm]
      \twboardr{35}{27}

      \node (1-10) at (1-10)[tw-white] {};
      \draw (1-10) -- (2-8);

      \node (1-18) at (1-18)[tw-white] {}; \node (2-16) at (2-16)[tw-white] {}; \node (4-15) at (4-15)[tw-white] {};
      \draw (1-18) -- (2-16) -- (4-15) -- (6-16);

      \node (1-26) at (1-26)[tw-white] {}; \node (2-24) at (2-24)[tw-white] {}; \node (4-23) at (4-23)[tw-white] {}; \node (6-24) at (6-24)[tw-white] {}; \node (8-23) at (8-23)[tw-white] {};
      \draw (1-26) -- (2-24) -- (4-23) -- (6-24) -- (8-23) -- (10-24);

      \node (35-22) at (35-22)[tw-white] {};
      \draw (35-22) -- (34-24);

      \node (32-15) at (32-15)[tw-white] {}; \node (34-16) at (34-16)[tw-white] {}; \node (35-14) at (35-14)[tw-white] {};
      \draw (35-14) -- (34-16) -- (32-15) -- (30-16);

      \node (28-7) at (28-7)[tw-white] {}; \node (30-8) at (30-8)[tw-white] {}; \node (32-7) at (32-7)[tw-white] {}; \node (34-8) at (34-8)[tw-white] {}; \node (35-6) at (35-6)[tw-white] {};
      \draw (35-6) -- (34-8) -- (32-7) -- (30-8) -- (28-7) -- (26-8);

    \end{scope}

    \begin{scope}[xshift= 8cm,yshift=16cm] 
\foreach[count=\ci] \i in \alphabet {
  \foreach \j in {0,...,10} {
    \node (\i\j) [tw-empty] at ({\ci},{\j}) {} ;
  }
}
\foreach \i in {B0, A4, C2, C5, C10, D4, D8, F4, F8, G2, G5, G10, H0, I4} {
  \node (\i) at (\i)[tw-black] {};
}
\foreach \i in {A7, C1, C6, C9, D3, D7, F3, F7, G1, G6, G9, I7} {
  \node (\i) at (\i)[tw-white] {};
}
\node (B3) at (B3)[tw-black] {}; \node (H3) at (H3)[tw-black] {}; \node (E10) at (E10)[tw-black] {};
\node (B8) at (B8)[tw-white] {}; \node (H8) at (H8)[tw-white] {}; \node (E1) at (E1)[tw-white] {};
\draw (C5) -- (B3) -- (D4); \draw (F4) -- (H3) -- (G5); \draw (D8) -- (E10) -- (F8); \draw (C6) -- (B8) -- (D7); \draw (F7) -- (H8) -- (G6); \draw (D3) -- (E1) -- (F3);

\draw (A4) -- (C5);
\draw (A7) -- (C6);
\draw (B0) -- (C2);
\draw (H0) -- (G2);
\draw (C1) -- (D3);
\draw (C2) -- (D4);
\draw (C9) -- (D7);
\draw (C10) -- (D8);
\draw (F3) -- (G1);
\draw (F4) -- (G2);
\draw (F7) -- (G9);
\draw (F8) -- (G10);
\draw (G5) -- (I4);
\draw (G6) -- (I7);
 \end{scope}
    \begin{scope}[xshift=16cm,yshift=16cm] 
\foreach[count=\ci] \i in \alphabet {
  \foreach \j in {0,...,10} {
    \node (\i\j) [tw-empty] at ({\ci},{\j}) {} ;
  }
}
\foreach \i in {B0, A4, C2, C5, C10, D4, D8, F4, F8, G2, G5, G10, H0, I4} {
  \node (\i) at (\i)[tw-black] {};
}
\foreach \i in {A7, C1, C6, C9, D3, D7, F3, F7, G1, G6, G9, I7} {
  \node (\i) at (\i)[tw-white] {};
}
\node (B3) at (B3)[tw-black] {}; \node (H3) at (H3)[tw-black] {}; \node (E10) at (E10)[tw-black] {};
\node (B8) at (B8)[tw-white] {}; \node (H8) at (H8)[tw-white] {}; \node (E1) at (E1)[tw-white] {};
\draw (C5) -- (B3) -- (D4); \draw (F4) -- (H3) -- (G5); \draw (D8) -- (E10) -- (F8); \draw (C6) -- (B8) -- (D7); \draw (F7) -- (H8) -- (G6); \draw (D3) -- (E1) -- (F3);

\draw (A4) -- (C5);
\draw (A7) -- (C6);
\draw (B0) -- (C2);
\draw (H0) -- (G2);
\draw (C1) -- (D3);
\draw (C2) -- (D4);
\draw (C9) -- (D7);
\draw (C10) -- (D8);
\draw (F3) -- (G1);
\draw (F4) -- (G2);
\draw (F7) -- (G9);
\draw (F8) -- (G10);
\draw (G5) -- (I4);
\draw (G6) -- (I7);
 \end{scope}
    \begin{scope}[xshift=24cm,yshift=16cm] 
\foreach[count=\ci] \i in \alphabet {
  \foreach \j in {0,...,10} {
    \node (\i\j) [tw-empty] at ({\ci},{\j}) {} ;
  }
}
\foreach \i in {B0, A4, C2, C5, C10, D4, D8, F4, F8, G2, G5, G10, H0, I4} {
  \node (\i) at (\i)[tw-black] {};
}
\foreach \i in {A7, C1, C6, C9, D3, D7, F3, F7, G1, G6, G9, I7} {
  \node (\i) at (\i)[tw-white] {};
}
\node (B3) at (B3)[tw-black] {}; \node (H3) at (H3)[tw-black] {}; \node (E10) at (E10)[tw-black] {};
\node (B8) at (B8)[tw-white] {}; \node (H8) at (H8)[tw-white] {}; \node (E1) at (E1)[tw-white] {};
\draw (C5) -- (B3) -- (D4); \draw (F4) -- (H3) -- (G5); \draw (D8) -- (E10) -- (F8); \draw (C6) -- (B8) -- (D7); \draw (F7) -- (H8) -- (G6); \draw (D3) -- (E1) -- (F3);

\draw (A4) -- (C5);
\draw (A7) -- (C6);
\draw (B0) -- (C2);
\draw (H0) -- (G2);
\draw (C1) -- (D3);
\draw (C2) -- (D4);
\draw (C9) -- (D7);
\draw (C10) -- (D8);
\draw (F3) -- (G1);
\draw (F4) -- (G2);
\draw (F7) -- (G9);
\draw (F8) -- (G10);
\draw (G5) -- (I4);
\draw (G6) -- (I7);
 \end{scope}
    \begin{scope}[xshift= 4cm,yshift= 8cm] 
\foreach[count=\ci] \i in \alphabet {
  \foreach \j in {0,...,10} {
    \node (\i\j) [tw-empty] at ({\ci},{\j}) {} ;
  }
}
\foreach \i in {B0, A4, C2, C5, C10, D4, D8, F4, F8, G2, G5, G10, H0, I4} {
  \node (\i) at (\i)[tw-black] {};
}
\foreach \i in {A7, C1, C6, C9, D3, D7, F3, F7, G1, G6, G9, I7} {
  \node (\i) at (\i)[tw-white] {};
}
\node (B3) at (B3)[tw-black] {}; \node (H3) at (H3)[tw-black] {}; \node (E10) at (E10)[tw-black] {};
\node (B8) at (B8)[tw-white] {}; \node (H8) at (H8)[tw-white] {}; \node (E1) at (E1)[tw-white] {};
\draw (C5) -- (B3) -- (D4); \draw (F4) -- (H3) -- (G5); \draw (D8) -- (E10) -- (F8); \draw (C6) -- (B8) -- (D7); \draw (F7) -- (H8) -- (G6); \draw (D3) -- (E1) -- (F3);

\draw (A4) -- (C5);
\draw (A7) -- (C6);
\draw (B0) -- (C2);
\draw (H0) -- (G2);
\draw (C1) -- (D3);
\draw (C2) -- (D4);
\draw (C9) -- (D7);
\draw (C10) -- (D8);
\draw (F3) -- (G1);
\draw (F4) -- (G2);
\draw (F7) -- (G9);
\draw (F8) -- (G10);
\draw (G5) -- (I4);
\draw (G6) -- (I7);
 \end{scope}
    \begin{scope}[xshift=12cm,yshift= 8cm] 
\foreach[count=\ci] \i in \alphabet {
  \foreach \j in {0,...,10} {
    \node (\i\j) [tw-empty] at ({\ci},{\j}) {} ;
  }
}
\foreach \i in {B0, A4, C2, C5, C10, D4, D8, F4, F8, G2, G5, G10, H0, I4} {
  \node (\i) at (\i)[tw-black] {};
}
\foreach \i in {A7, C1, C6, C9, D3, D7, F3, F7, G1, G6, G9, I7} {
  \node (\i) at (\i)[tw-white] {};
}
\node (B3) at (B3)[tw-black] {}; \node (H3) at (H3)[tw-black] {}; \node (E10) at (E10)[tw-black] {};
\node (B8) at (B8)[tw-white] {}; \node (H8) at (H8)[tw-white] {}; \node (E1) at (E1)[tw-white] {};
\draw (C5) -- (B3) -- (D4); \draw (F4) -- (H3) -- (G5); \draw (D8) -- (E10) -- (F8); \draw (C6) -- (B8) -- (D7); \draw (F7) -- (H8) -- (G6); \draw (D3) -- (E1) -- (F3);

\draw (A4) -- (C5);
\draw (A7) -- (C6);
\draw (B0) -- (C2);
\draw (H0) -- (G2);
\draw (C1) -- (D3);
\draw (C2) -- (D4);
\draw (C9) -- (D7);
\draw (C10) -- (D8);
\draw (F3) -- (G1);
\draw (F4) -- (G2);
\draw (F7) -- (G9);
\draw (F8) -- (G10);
\draw (G5) -- (I4);
\draw (G6) -- (I7);
 \end{scope}
    \begin{scope}[xshift=20cm,yshift= 8cm] 
\foreach[count=\ci] \i in \alphabet {
  \foreach \j in {0,...,10} {
    \node (\i\j) [tw-empty] at ({\ci},{\j}) {} ;
  }
}
\foreach \i in {B0, A4, C2, C5, C10, D4, D8, F4, F8, G2, G5, G10, H0, I4} {
  \node (\i) at (\i)[tw-black] {};
}
\foreach \i in {A7, C1, C6, C9, D3, D7, F3, F7, G1, G6, G9, I7} {
  \node (\i) at (\i)[tw-white] {};
}
\node (B3) at (B3)[tw-black] {}; \node (H3) at (H3)[tw-black] {}; \node (E10) at (E10)[tw-black] {};
\node (B8) at (B8)[tw-white] {}; \node (H8) at (H8)[tw-white] {}; \node (E1) at (E1)[tw-white] {};
\draw (C5) -- (B3) -- (D4); \draw (F4) -- (H3) -- (G5); \draw (D8) -- (E10) -- (F8); \draw (C6) -- (B8) -- (D7); \draw (F7) -- (H8) -- (G6); \draw (D3) -- (E1) -- (F3);

\draw (A4) -- (C5);
\draw (A7) -- (C6);
\draw (B0) -- (C2);
\draw (H0) -- (G2);
\draw (C1) -- (D3);
\draw (C2) -- (D4);
\draw (C9) -- (D7);
\draw (C10) -- (D8);
\draw (F3) -- (G1);
\draw (F4) -- (G2);
\draw (F7) -- (G9);
\draw (F8) -- (G10);
\draw (G5) -- (I4);
\draw (G6) -- (I7);
 \end{scope}
    \begin{scope}[xshift= 0cm,yshift= 0cm] 
\foreach[count=\ci] \i in \alphabet {
  \foreach \j in {0,...,10} {
    \node (\i\j) [tw-empty] at ({\ci},{\j}) {} ;
  }
}
\foreach \i in {B0, A4, C2, C5, C10, D4, D8, F4, F8, G2, G5, G10, H0, I4} {
  \node (\i) at (\i)[tw-black] {};
}
\foreach \i in {A7, C1, C6, C9, D3, D7, F3, F7, G1, G6, G9, I7} {
  \node (\i) at (\i)[tw-white] {};
}
\node (B3) at (B3)[tw-black] {}; \node (H3) at (H3)[tw-black] {}; \node (E10) at (E10)[tw-black] {};
\node (B8) at (B8)[tw-white] {}; \node (H8) at (H8)[tw-white] {}; \node (E1) at (E1)[tw-white] {};
\draw (C5) -- (B3) -- (D4); \draw (F4) -- (H3) -- (G5); \draw (D8) -- (E10) -- (F8); \draw (C6) -- (B8) -- (D7); \draw (F7) -- (H8) -- (G6); \draw (D3) -- (E1) -- (F3);

\draw (A4) -- (C5);
\draw (A7) -- (C6);
\draw (B0) -- (C2);
\draw (H0) -- (G2);
\draw (C1) -- (D3);
\draw (C2) -- (D4);
\draw (C9) -- (D7);
\draw (C10) -- (D8);
\draw (F3) -- (G1);
\draw (F4) -- (G2);
\draw (F7) -- (G9);
\draw (F8) -- (G10);
\draw (G5) -- (I4);
\draw (G6) -- (I7);
 \end{scope}
    \begin{scope}[xshift= 8cm,yshift= 0cm] 
\foreach[count=\ci] \i in \alphabet {
  \foreach \j in {0,...,10} {
    \node (\i\j) [tw-empty] at ({\ci},{\j}) {} ;
  }
}
\foreach \i in {B0, A4, C2, C5, C10, D4, D8, F4, F8, G2, G5, G10, H0, I4} {
  \node (\i) at (\i)[tw-black] {};
}
\foreach \i in {A7, C1, C6, C9, D3, D7, F3, F7, G1, G6, G9, I7} {
  \node (\i) at (\i)[tw-white] {};
}
\node (B3) at (B3)[tw-black] {}; \node (H3) at (H3)[tw-black] {}; \node (E10) at (E10)[tw-black] {};
\node (B8) at (B8)[tw-white] {}; \node (H8) at (H8)[tw-white] {}; \node (E1) at (E1)[tw-white] {};
\draw (C5) -- (B3) -- (D4); \draw (F4) -- (H3) -- (G5); \draw (D8) -- (E10) -- (F8); \draw (C6) -- (B8) -- (D7); \draw (F7) -- (H8) -- (G6); \draw (D3) -- (E1) -- (F3);

\draw (A4) -- (C5);
\draw (A7) -- (C6);
\draw (B0) -- (C2);
\draw (H0) -- (G2);
\draw (C1) -- (D3);
\draw (C2) -- (D4);
\draw (C9) -- (D7);
\draw (C10) -- (D8);
\draw (F3) -- (G1);
\draw (F4) -- (G2);
\draw (F7) -- (G9);
\draw (F8) -- (G10);
\draw (G5) -- (I4);
\draw (G6) -- (I7);
 \end{scope}
    \begin{scope}[xshift=16cm,yshift= 0cm] 
\foreach[count=\ci] \i in \alphabet {
  \foreach \j in {0,...,10} {
    \node (\i\j) [tw-empty] at ({\ci},{\j}) {} ;
  }
}
\foreach \i in {B0, A4, C2, C5, C10, D4, D8, F4, F8, G2, G5, G10, H0, I4} {
  \node (\i) at (\i)[tw-black] {};
}
\foreach \i in {A7, C1, C6, C9, D3, D7, F3, F7, G1, G6, G9, I7} {
  \node (\i) at (\i)[tw-white] {};
}
\node (B3) at (B3)[tw-black] {}; \node (H3) at (H3)[tw-black] {}; \node (E10) at (E10)[tw-black] {};
\node (B8) at (B8)[tw-white] {}; \node (H8) at (H8)[tw-white] {}; \node (E1) at (E1)[tw-white] {};
\draw (C5) -- (B3) -- (D4); \draw (F4) -- (H3) -- (G5); \draw (D8) -- (E10) -- (F8); \draw (C6) -- (B8) -- (D7); \draw (F7) -- (H8) -- (G6); \draw (D3) -- (E1) -- (F3);

\draw (A4) -- (C5);
\draw (A7) -- (C6);
\draw (B0) -- (C2);
\draw (H0) -- (G2);
\draw (C1) -- (D3);
\draw (C2) -- (D4);
\draw (C9) -- (D7);
\draw (C10) -- (D8);
\draw (F3) -- (G1);
\draw (F4) -- (G2);
\draw (F7) -- (G9);
\draw (F8) -- (G10);
\draw (G5) -- (I4);
\draw (G6) -- (I7);
 \end{scope}

    \node at ( 9,  9) [tw-white] {}; \node at (17,  9) [tw-white] {}; \node at (25,  9) [tw-white] {}; \node at (13, 17) [tw-white] {}; \node at (21, 17) [tw-white] {}; \node at (29, 17) [tw-white] {};
  \end{scope}
\end{tikzpicture}
  \caption{Empty $3\times 3$ \hex board reduced to a \twixt board.}
  \label{fig:twixt-board}
\end{figure}

\begin{theorem}
  \Twixt is \pspace{}-complete.
\end{theorem}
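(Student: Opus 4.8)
The plan is to prove the two inclusions separately: that \Twixt{} is in \pspace{}, and that it is \pspace{}-hard.

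For membership, note that although links may be created and destroyed during play, every move places a new pawn on a previously empty point, so any play from a given position has length bounded by the number of empty points, hence polynomial in the board size. The game tree therefore has polynomial depth, and a recursive minimax evaluation explores it using only polynomial space: at each node we store the current line of play together with the finitely many pending choices (including which intersecting link to keep when the rules force a choice). Hence \Twixt{} $\in \pspace{}$.

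For hardness, I would reduce from \hex{}, which is \pspace{}-complete~\cite{Reisch1981}, via the map $\phi$ already sketched: given a \hex{} position $G$ on an $n\times n$ board with a designated player to move, build the \Twixt{} position $\phi(G)$ by tiling a \Twixt{} board with $n^2$ copies of the $9\times 9$ cell gadget of Figure~\ref{fig:twixt-cell}, arranged as in Figure~\ref{fig:twixt-board} so that two gadgets share \emph{connection pawns} exactly when the corresponding \hex{} cells are neighbors; an empty \hex{} cell becomes an empty-cell gadget (Figure~\ref{fig:tw-empty}), a black (resp.\ white) \hex{} stone becomes the black (resp.\ white) cell gadget of Figure~\ref{fig:tw-black} (resp.\ Figure~\ref{fig:tw-white}), the four \hex{} sides are identified with the four \Twixt{} sides through the boundary pawns and links of Figure~\ref{fig:twixt-board}, and the player to move is preserved. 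This transformation is plainly computable in time polynomial in $|G|$.

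What remains is to show that $G$ is won by Black (resp.\ White) if and only if $\phi(G)$ is, and the heart of this is a \emph{domination} analysis of the gadgets. I would establish: (i) in every empty-cell gadget, the move $b$ is at least as good for Black as any other legal move of that gadget, and symmetrically $w$ for White; (ii) every point lying outside all gadgets is dominated for both players; and (iii) once $b$ (resp.\ $w$) has been played in a gadget, that gadget behaves, for connection purposes, exactly like a black (resp.\ white) \hex{} cell, propagating the connection precisely to the connection pawns it shares with the gadgets of the neighboring \hex{} cells. Granting (i)--(iii), a winning strategy in $G$ lifts to $\phi(G)$ by answering each opponent gadget move with the corresponding $b$ or $w$, and any winning strategy in $\phi(G)$ projects back to one in $G$, which (together with the preserved player to move) yields the equivalence and hence \pspace{}-hardness. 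The main obstacle is precisely steps (i)--(ii): one must rule out every \Twixt{}-specific resource that has no \hex{} counterpart — spurious long knight's-move links between non-adjacent gadgets, links cutting diagonally through the region unused by the skeleton, and tactical exploitation of the link-removal rule — by a careful but finite inspection of the $9\times 9$ gadget and of the way the pre-placed pawns and links fence off all such possibilities. Everything else is bookkeeping.
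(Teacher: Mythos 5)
Your proposal is correct and follows essentially the same route as the paper: membership via the polynomial bound on game length and a polynomial-space minimax search, and hardness via the same \hex-to-\twixt reduction using the $9\times 9$ cell gadgets, with the same domination argument (moves $w$ and $b$ dominate within each gadget, and points outside all gadgets are dominated for both players). The paper likewise delegates the verification of the domination claims to inspection of the gadget figures, so your explicit listing of the \Twixt-specific resources to rule out is, if anything, slightly more careful than the published text.
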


Observe that the proposed reduction holds both for the classic version of \twixt as well as for the \emph{pencil and paper} version \gamename{twixtpp}.
Indeed, the reduction does not require the losing player to remove any edge, so it also proves that \gamename{twixtpp} is \pspace-hard.

\subsection{Slither}
Invented in 2010 by Corey Clark, \gamename{slither} is relatively new connection game with an increasing popularity among online board game players.
Unlike \gamename{hex} and \gamename{havannah} which are played on a hexagonally-paved board, \gamename{slither} is played on a grid and each player is trying to connect a pair of opposite edges corresponding to their color by constructing connected groups of stones.
Whereas moves in most other connection games only involve putting down a new element on the board, moves in \gamename{slither} also allow relocating previously played stones.
Another important difference between usual connections games and \gamename{slither} is that some stone configurations are forbidden in the latter.
Namely, a player is not allowed to play a stone diagonally adjacent to a pre-existing stone of their color unless one of their already placed stones would be mutually adjacent.

\subsubsection{Rules}

\gamename{Slither} is a two-player game starting on an empty $n$ by $n$ grid (or board).
Let us call the players Black (or $B$) and White (or $W$).
Black and White alternate \emph{moves}.
Before stating what a move consists of, and what the winning conditions are, we introduce some useful definitions.

\begin{figure}
  \centering
  \begin{tikzpicture}
    \sboard{7}{7}
    \sedges{7}{7}
    \sposition{2-2,3-2,3-3,2-5,6-2,5-5,6-6}{5-2,5-3,6-3,2-6,3-5}
    \draw[opacity=0.7,very thick] (0.75,2.25) -- (1.75,3.25) ;
    \draw[opacity=0.7,very thick] (0.75,3.25) -- (1.75,2.25) ;
    \draw[opacity=0.7,very thick] (2.25,2.25) -- (3.25,3.25) ;
    \draw[opacity=0.7,very thick] (2.25,3.25) -- (3.25,2.25) ;
  \end{tikzpicture}
  \caption{Some examples of allowed and forbidden configurations.
    Forbidden configurations are crossed.}
  \label{fig:slither-diagonal}
\end{figure}
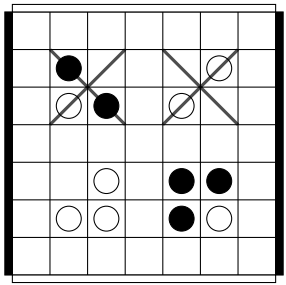

As the game proceeds, \emph{squares} of the board can be empty, or contain a black stone, or contain a white stone.
We refer to black (resp.~white) stones as the stones of player $B$ (resp.~$W$).
We say that two squares of the board are \emph{adjacent} if they are in the same row and adjacent columns, or in the same column and adjacent rows.
They are \emph{king-adjacent} if a chess king can move from one square to the other, and \emph{diagonally-adjacent} if they are king-adjacent but not adjacent.
Two stones are \emph{adjacent} (resp.~\emph{king-adjacent}, \emph{diagonally-adjacent}) if they are in adjacent (resp.~king-adjacent, diagonally-adjacent) squares.
For $P \in \{W,B\}$, let $G_P$ be the graph whose vertices are the stones of player $P$ placed in the board, and the edges encode the \emph{adjacent} relation.
That is, two vertices are linked by an edge if and only if they represent adjacent stones.
Like in the game of \gamename{go}, a \emph{group} for player $P$ is a maximal connected component in $G_P$.

A \emph{move} for player $P$ consists of an optional relocation of an existing stone of $P$ on a king-adjacent empty grid square, followed by a mandatory placement a stone of $P$ into an empty grid square (Figure~\ref{fig:slither-def-and-rules}).
For a move to be legal, the resulting position may not have two diagonally-adjacent stones of $P$ that do not also have an orthogonally-adjacent stone in common (Figure~\ref{fig:slither-diagonal}).
In what follows, we refer to this restrictive rule as the \emph{diagonal rule}.

Black wins if they form a group with at least one stone in the first and in the last column.
White wins if they form a group with at least one stone in the first and in the last row.
Informally, Black wants to connect left-right and White wants to connect top-bottom (Figure~\ref{fig:slither-winning}).

\begin{figure}[t]
  \centering
  \subfloat[Black to play and win in one move.]{
    \hspace{5mm}
    \label{fig:slither-legal}
    \begin{tikzpicture}
      \sboard{7}{7}
      \sedges{7}{7}
      \sposition{2-4,4-4,5-4,6-7,7-1,7-2,7-3,7-4,6-6,6-7,2-6,2-7,3-6,3-5,3-4}{2-5,1-3,2-3,2-2,2-1,3-1,4-1,4-2,4-3,5-2,6-2,6-3,6-4,5-6}
    \end{tikzpicture}
    \hspace{7mm}
  }
  \hfill
  \subfloat[The winning move for Black.]{
    \hspace{4mm}
    \label{fig:slither-winning}
    \begin{tikzpicture}[>=stealth']
      \sboard{7}{7}
      \sedges{7}{7}
      \node[draw,fill,circle,opacity=0.25] (u) at (5-6) {};
      \draw[->] (5-6) -- (6-5) ;
      \node[draw,fill,circle,opacity=0.7] (v) at (6-5) {};
      \node[draw,fill,circle,opacity=0.7] (w) at (7-5) {};
      \sposition{2-4,4-4,5-4,6-7,7-1,7-2,7-3,7-4,6-6,6-7,2-6,2-7,3-6,3-5,3-4}{2-5,1-3,2-3,2-2,2-1,3-1,4-1,4-2,4-3,5-2,6-2,6-3,6-4}
      \draw[very thick] (3.75,2.75) -- (2.75,2.75) -- (2.75,1.25) -- (2.25,1.25) -- (2.25,1.75) -- (1.75,1.75) -- (1.75,0.75) -- (1.25,0.75) -- (1.25,1.75) -- (0.25,1.75) ;
      \draw[very thick] (3.75,2.25) -- (3.25,2.25) -- (3.25,0.75) -- (2.25,0.75) -- (2.25,0.25) -- (0.75,0.25) -- (0.75,1.25) -- (0.25,1.25) ;
    \end{tikzpicture}
    \hspace{4mm}
  }
  \caption{Illustration of a move and a winning group on a $7\times 7$ \slither board.}
  \label{fig:slither-def-and-rules}
\end{figure}
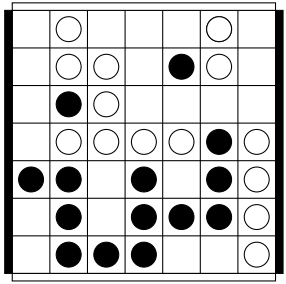
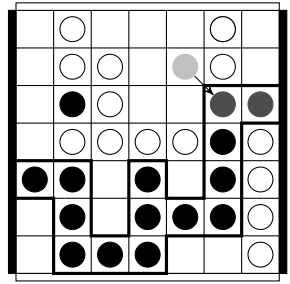

As in most connection games, a \emph{swap} rule is usually implemented.
That is, after the first move, the second player can decide either to play themself a move and the game goes on normally, or to become first player with that very same move.

\subsubsection{Computational complexity}
Here, we show that deciding if one player has a winning strategy from a given position is intractable.
Again, we present a reduction from \gamename{hex}.
A hexagonal cell of \gamename{hex} is encoded by the gadgets depicted in Figure~\ref{fig:slither-cell}.
More precisely, an empty cell (resp.~a cell containing a black stone, resp.~containing a white stone) is transformed into the portion of position of Figure~\ref{fig:slither-empty-cell} (resp.~Figure~\ref{fig:slither-black-cell}, resp.~Figure~\ref{fig:slither-white-cell}).

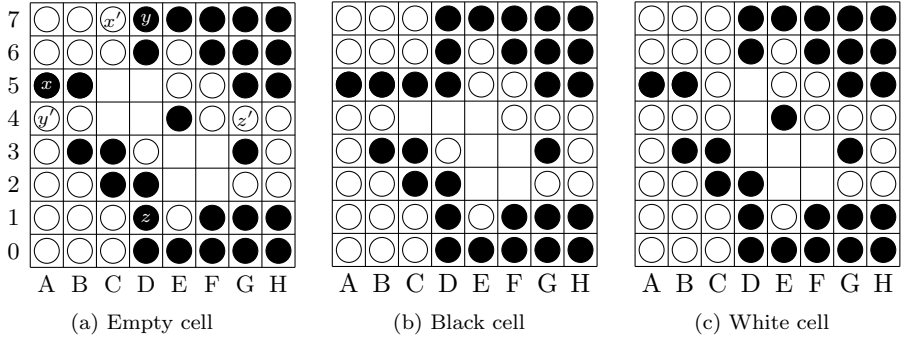
\begin{figure}
  \centering
  \subfloat[Empty cell]{\label{fig:slither-empty-cell} \begin{tikzpicture}[scale=0.88,font=\scriptsize] \sboard{8}{8}
\slithercell{0}{0}

\begin{scope}[font=\small]
\node at (0.5,0) {A}; \node at (1  ,0) {B}; \node at (1.5,0) {C}; \node at (2  ,0) {D}; \node at (2.5,0) {E}; \node at (3  ,0) {F}; \node at (3.5,0) {G}; \node at (4  ,0) {H};
\node at (0,0.5) {0}; \node at (0,1  ) {1}; \node at (0,1.5) {2}; \node at (0,2  ) {3}; \node at (0,2.5) {4}; \node at (0,3  ) {5}; \node at (0,3.5) {6}; \node at (0,4  ) {7};
\end{scope}

\node[text=white] at (1-6) {$x$}; \node at (3-8) {$x'$}; \node[text=white] at (4-8) {$y$}; \node at (1-5) {$y'$}; \node[text=white] at (4-2) {$z$}; \node at (7-5) {$z'$};

\sposition{4-4}{5-5}
 \end{tikzpicture}}
  \hfill
  \subfloat[Black cell]{\label{fig:slither-black-cell} \begin{tikzpicture}[scale=0.88] \sboard{8}{8}
\slithercell{0}{0}

\node at (0.5,0) {A}; \node at (1  ,0) {B}; \node at (1.5,0) {C}; \node at (2  ,0) {D}; \node at (2.5,0) {E}; \node at (3  ,0) {F}; \node at (3.5,0) {G}; \node at (4  ,0) {H};


\sposition{4-4}{3-6,4-6}
 \end{tikzpicture}}
  \hfill
  \subfloat[White cell]{\label{fig:slither-white-cell} \begin{tikzpicture}[scale=0.88] \sboard{8}{8}
\slithercell{0}{0}

\node at (0.5,0) {A}; \node at (1  ,0) {B}; \node at (1.5,0) {C}; \node at (2  ,0) {D}; \node at (2.5,0) {E}; \node at (3  ,0) {F}; \node at (3.5,0) {G}; \node at (4  ,0) {H};


\sposition{3-6,3-5}{5-5}
 \end{tikzpicture}}
  \caption{Basic gadgets representing \hex cells in \slither.}
  \label{fig:slither-cell}
\end{figure}

\begin{observation}\label{obs:empty-cell}
  When a player only places a stone in any empty square of the empty cell gadget, they create a configuration which is forbidden by the diagonal rule.
  Therefore, they need to also move one of their stones in the same cell gadget.
\end{observation}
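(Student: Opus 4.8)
The plan is to establish the observation by a short, entirely local case analysis on the empty-cell gadget of Figure~\ref{fig:slither-empty-cell}. A convenient first step is to rephrase the diagonal rule at the scale of a single $2\times 2$ block of the grid: a position is illegal for a player~$P$ precisely when some $2\times 2$ block of squares carries $P$-stones on one pair of diagonally opposite corners while \emph{neither} of the other two corners carries a $P$-stone --- and those other two corners are exactly the two squares orthogonally adjacent to both of the first two. With this reformulation, to show that $P$ placing a stone on an empty square~$p$ of the gadget is illegal it suffices to exhibit a single $P$-stone~$q$ of the gadget that is diagonally adjacent to~$p$ and to verify that the remaining two squares of the $2\times2$ block through~$p$ and~$q$ carry no $P$-stone; since the only change to the position is the new stone on~$p$, this is read off directly from the gadget.

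The second step is the inspection itself. In the labelling of Figure~\ref{fig:slither-empty-cell}, the empty squares of the gadget are the eight cells \texttt{C4}, \texttt{C5}, \texttt{D4}, \texttt{D5} and \texttt{E2}, \texttt{E3}, \texttt{F2}, \texttt{F3}; they form two $2\times2$ patches flanking the only two stones of the gadget that may legally be relocated, namely the white stone on \texttt{D3} and the black stone on \texttt{E4}. For each of these eight squares and each of the two colours I would point to the required diagonally-adjacent witness~$q$ --- always either \texttt{D3}, \texttt{E4}, or one of the few fixed stones bordering the patch --- and inspect the two corner squares of the corresponding block, which is just a glance at a $2\times2$ window. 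The two patches are handled identically after exchanging the row and column coordinates, so only about half of the sixteen cases need be checked independently. This settles the first sentence of the statement.

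For the second sentence, recall that a move consists of an optional relocation of one of the player's stones onto a king-adjacent empty square, followed by a mandatory placement. By the previous paragraph, a move confined to this gadget that omits the relocation yields an illegal position, so some relocation is forced. Moreover, for the placement on~$p$ to become legal the relocation must either vacate the conflicting stone~$q$ or move a $P$-stone onto one of the two corner squares of a block that would otherwise be forbidden; every square relevant to this --- the stone~$q$, the two corner squares, and the square a relocated stone would come from, which is king-adjacent to a corner --- is a non-boundary square of this gadget, since all empty squares of the gadget lie strictly in its interior, so no stone of a neighbouring gadget can take part. Hence the relocation necessarily moves one of the player's own stones belonging to this very cell gadget, as claimed. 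The only point requiring care in the argument is keeping the case enumeration exhaustive and the coordinates straight; I do not expect any conceptual obstacle.
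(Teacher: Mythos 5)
Your proposal is correct, and it actually supplies more than the paper does: the paper states this Observation without any proof, leaving the verification to an inspection of Figure~\ref{fig:slither-empty-cell}. Your $2\times2$-block reformulation of the diagonal rule is an accurate restatement, your identification of the eight empty squares (C4, C5, D4, D5 and E2, E3, F2, F3) matches the gadget, and in every one of the sixteen (square, colour) cases a witness of the kind you describe does exist (e.g.\ D3 for White on C4, B5 for Black on C4, E4 for Black on F3, G2 for White on F3), so the first sentence of the Observation indeed follows by pure inspection. Your locality argument for the second sentence --- that any repairing relocation must either vacate the offending stone or occupy a corner of the violated block, and that all such source squares are king-adjacent to interior squares of the gadget and hence lie inside it --- is the right way to justify a claim the paper leaves entirely implicit. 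The only inaccuracy is your symmetry shortcut: exchanging row and column coordinates is not a (colour-swapping or colour-preserving) automorphism of the gadget, and the gadget's actual anti-symmetry, reflection in the A7--H0 anti-diagonal with colours exchanged, maps each $2\times2$ patch to itself rather than one patch to the other. This halves the case count differently than you claim but does not affect correctness, since all sixteen cases can be, and here have been, checked directly.
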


\begin{lemma}\label{lem:black-cell}
  In a black cell $\mathcal C$ (Figure~\ref{fig:slither-black-cell}), White cannot prevent Black from having a group containing stones in $x$, $y$, and $z$.
\end{lemma}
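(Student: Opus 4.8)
The plan is to read the structure of the black cell gadget $\mathcal C$ straight off Figure~\ref{fig:slither-black-cell}, isolate the single square whose occupation by Black completes the $x$--$y$--$z$ connection, and show White can never take it. First, the combinatorics of the gadget: in $\mathcal C$ the black stones form exactly two connected components of $G_B$, one component $C_{xy}$ that already contains both $x$ and $y$ (the solid orthogonal black path linking $x$ to $y$, which uses the two black stones specific to the black cell, i.e.\ absent from the empty cell of Figure~\ref{fig:slither-empty-cell}) and one component $C_z$ containing $z$. These two components have a single empty square as a common orthogonal neighbour; call it $s$. Placing a black stone on $s$ merges $C_{xy}$ and $C_z$ into one group containing $x$, $y$, and $z$, and since stones are never removed, that group then lasts for the rest of the game. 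Moreover, placing a black stone on $s$ is always a legal Black move: every black stone diagonally adjacent to $s$ shares an orthogonally adjacent black stone with $s$, so the diagonal rule is met, and the witnessing stones belong to Black and are permanent. Hence the lemma reduces to showing that White can never occupy $s$.

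Next I would rule out every way for White to sit on $s$. A bare placement on $s$ is impossible: $s$ is diagonally adjacent to exactly one white stone of the gadget, say $\omega$, and $s$ and $\omega$ have no common orthogonally adjacent white stone, so the diagonal rule forbids it; in fact the same check applied to each of the few empty squares of $\mathcal C$ shows White can never place anywhere inside $\mathcal C$. Thus any White move that touches $\mathcal C$ must relocate a white stone within $\mathcal C$ and place its mandatory stone outside $\mathcal C$, and the only white stone whose relocation can land on $s$ is $\omega$ itself (relocating any other king-adjacent white stone onto $s$ would leave $s$ diagonally adjacent to $\omega$ with no shared white neighbour, hence illegal). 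But relocating $\omega$ must be completed by a legal placement off $\mathcal C$, and by Observation~\ref{obs:empty-cell} (placing in an empty-cell gadget requires also relocating inside that same gadget), together with the just-noted impossibility of placing inside any cell gadget, White has no legal placement once its relocation has been spent inside $\mathcal C$. So $s$ is never occupied by White, stays Black-playable, and Black takes it whenever it likes, obtaining the desired group through $x$, $y$, and $z$.

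The part I expect to need the most care is this last step: certifying that, once White's optional relocation is spent inside $\mathcal C$, there really is no legal placement left for White — that is, that White's only legal moves are empty-cell conversions, none of which touch $\mathcal C$. Should that global statement not yet be available at this point, I would instead establish a purely local fallback making $s$ into a bridge: if White ever relocates $\omega$ onto $s$, then the square just vacated by $\omega$ and the unique empty square orthogonally adjacent both to that square and to $C_{xy}$, call it $s'$, give Black a second route — Black plays $\omega$'s old square (a legal placement, joining $C_z$), then $s'$ (legal once $\omega$'s old square is black, joining $C_{xy}$) — and White, being unable to place anywhere in $\mathcal C$ and having too few relocatable white stones near the bridge, cannot occupy $s$, $\omega$'s old square, and $s'$ quickly enough to sever all routes. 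Either way the black cell acts as a permanent Black connector through $x$, $y$, and $z$, which is exactly the statement of the lemma.
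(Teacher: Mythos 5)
Your identification of the key square $s=$ C4 and of the two black components it joins is sound, but the central claim of your main line --- that White can never occupy C4 --- is false, and the paper's proof is built precisely around the fact that White \emph{can} take it. Two of your supporting steps break down. First, ``White has no legal placement once its relocation has been spent inside $\mathcal C$'' does not hold: Observation~\ref{obs:empty-cell} only concerns empty-cell gadgets, and a bare placement in a \emph{white} cell gadget (Figure~\ref{fig:slither-white-cell}), e.g.\ on its square D5, is legal (the paper's own proof notes exactly this when it says White ``can only do so in a white cell''); moreover, after relocating D3 to C4 the square E2 of $\mathcal C$ itself loses its only white diagonal neighbour and becomes a legal placement, so ``relocate D3 to C4, place on E2'' is a legal White move that occupies $s$. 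Second, you overlook that White's optional relocation can be used to make the \emph{placement} on C4 legal: relocating D3 to E2 or E4 removes C4's only diagonal white neighbour, after which placing a new white stone on C4 is legal. This is the paper's option (3), and it is not covered by your fallback, which only triggers when $\omega$ is relocated \emph{onto} $s$. (Your dismissal of the B4$\to$C4 relocation is also incomplete: it fails not for the reason you give, but because the mandatory placement would then have to go on D4 to witness the C4--D3 diagonal, and a white D4 forms a forbidden configuration with the white stone on E5.)

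The paper's actual argument is that every legal way for White to occupy C4 forces the white stone on D3 off its square, and Black then reconnects in a \emph{single} move by relocating the black stone on C2 to D3 and placing on D4, joining $z$ to the $x$--$y$ group through the second corridor D3--D4. Your fallback gestures at this bridge but spends two separate Black moves (place on D3, then on D4), which concedes White a tempo --- after Black's bare placement on D3, White can legally relocate its C4 stone to D4 while placing on E4 --- and the assertion that White ``cannot sever all routes quickly enough'' is exactly the statement that would need proof. The repair is therefore not to show that C4 is unreachable for White, but to pair each of the (three) ways White can occupy it with Black's one-move answer C2$\to$D3 plus D4, as the paper does.
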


\begin{proof}
Black stones on $x$ and $y$ are already in the same group.
Because of the diagonal rule, White cannot place a stone in cell $\mathcal C$ and move a stone in another cell gadget.
By Observation~\ref{obs:empty-cell} and the previous remark, if White moves a stone in cell $\mathcal C$, but decides to place a stone in another cell gadget, they can only do so in a white cell, which turns out to be useless.
Thus, White might as well place a stone \emph{and} move in cell $\mathcal C$.
After their move, White should occupy square C4; otherwise, Black places a stone on C4 and thereby connects their group containing $z$ to their group containing $x$ and $y$.

There are three ways for $W$ to occupy square C4: (1) move stone on B4 to C4, (2) move stone on D3 to C4, or (3) place a new stone on C4.
The first option cannot be extended into a legal move.
Indeed, the diagonal rule would impose that a stone is placed on D4, to connect the two diagonally-adjacent white stones on C4 and D3.
But then white stones on D4 and E5 would form a forbidden configuration.
In the second option, White cannot place a stone on D3 nor on D4, because of the diagonal rule.
And Black's next move would consist of moving the stone on C2 to D3, and placing a stone on D4, which connects $z$ to $x$ and $y$.
Finally, in the third option, White is forced to move their stone in D3 to a square other than D4.
And Black connects in the same manner.
\end{proof}

If one disregards the eight stones in A6, A7, B6, B7, G0, G1, H0, and H1, the diagonal A7-H0 is an axis of anti-symmetry for the empty cell gadget, in the sense that the image of a black (resp.~white) stone is a white (resp.~black) stone.
Due to this likeness, the following holds similarly.
\begin{lemma}\label{lem:white-cell}
  In a white cell (Figure~\ref{fig:slither-white-cell}), Black cannot prevent White from having a group containing stones on $x'$, $y'$, and $z'$.
\end{lemma}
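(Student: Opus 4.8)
The plan is to obtain Lemma~\ref{lem:white-cell} from Lemma~\ref{lem:black-cell} essentially for free, by pushing the proof of the latter through the colour-reversing symmetry announced just above the statement. Write $\rho$ for the reflection of the $8\times 8$ board of a cell gadget across the A7--H0 anti-diagonal: identifying the columns A, \dots, H with $1, \dots, 8$ and the rows $0, \dots, 7$, the map $\rho$ sends the square $(c,r)$ to $(8-r,\,8-c)$. First I would record that $\rho$ is the right symmetry. Namely: (i) $\rho$ is an involution, and it maps the set of eight ``interface'' stones A6, A7, B6, B7, G0, G1, H0, H1 to itself; (ii) on every other stone of the gadget $\rho$ exchanges the two colours, so it carries the empty-cell gadget of Figure~\ref{fig:slither-empty-cell} to itself and the black-cell gadget of Figure~\ref{fig:slither-black-cell} to the white-cell gadget of Figure~\ref{fig:slither-white-cell}, the three extra stones of a black cell being carried, after the colour swap, onto exactly the three extra stones of a white cell; and (iii) $\rho$ sends the distinguished squares $x$, $y$, $z$ to $x'$, $y'$, $z'$ respectively. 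Each of these is a direct reading of the figures.

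Next I would transport the proof of Lemma~\ref{lem:black-cell} through $\rho$. Since $\rho$ preserves king-adjacency (hence legality of relocations) and the diagonal rule is colour-symmetric, $\rho$ turns any legal line of play into a legal line of play with the two players interchanged; consequently a winning or avoiding strategy for one player in the original position becomes one for the other player in the $\rho$-image. Applying this to the statement of Lemma~\ref{lem:black-cell} already yields Lemma~\ref{lem:white-cell}. Spelled out, the case analysis reappears with the colours reversed: White's stones on $x'$ and $y'$ start out in the same group; by Observation~\ref{obs:empty-cell} together with the impossibility of relocating a stone across a cell boundary, Black may as well place \emph{and} relocate inside the white cell; Black is then forced to occupy the square $\rho(\mathrm{C4})$; and the three ways of doing so fail exactly as in the black-cell argument, through the diagonal-rule obstructions involving the reflected squares $\rho(\mathrm{B4})$, $\rho(\mathrm{D3})$, $\rho(\mathrm{D4})$, $\rho(\mathrm{E5})$, $\rho(\mathrm{C2})$.

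The only place where the symmetry is imperfect — hence the only place that needs genuine checking rather than mechanical transcription — is the eight interface stones of item (i): on these, $\rho$ happens to preserve colours instead of reversing them, so $\rho$ does not literally identify the black-cell gadget with the white-cell gadget. I would argue that this is harmless: those eight stones, and the cell boundary they sit on, play a purely structural role, namely ensuring via the diagonal rule that neither colour can relocate a stone out of a cell gadget, and this holds for a white cell exactly as for a black cell; moreover every square touched by the reflected case analysis lies in the interior of the cell, away from those eight stones. Verifying these two points is a finite inspection of Figure~\ref{fig:slither-white-cell}, entirely analogous to the boundary reasoning already implicit in the proof of Lemma~\ref{lem:black-cell}. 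This is the step I expect to be the main (indeed the only) obstacle; once it is in hand, Lemma~\ref{lem:white-cell} follows, and this is precisely the ``holds similarly'' the reader is asked to accept.
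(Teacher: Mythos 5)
Your proposal is correct and follows essentially the same route as the paper, which justifies Lemma~\ref{lem:white-cell} precisely by the colour-reversing anti-symmetry of the empty-cell gadget about the A7--H0 diagonal (modulo the eight interface stones) and then states that the argument of Lemma~\ref{lem:black-cell} ``holds similarly.'' You merely make explicit the reflection, the correspondence $x,y,z \mapsto x',y',z'$, and the harmlessness of the eight colour-preserved boundary stones, all of which the paper leaves implicit.
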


The cell gadgets are glued together and attached to the edges of the board as described in Figure~\ref{fig:slither-hex}.
The empty squares that do not belong to any gadgets can be filled with stones of any color without affecting neither player's optimal strategy.
For convenience, we do not represent these stones.

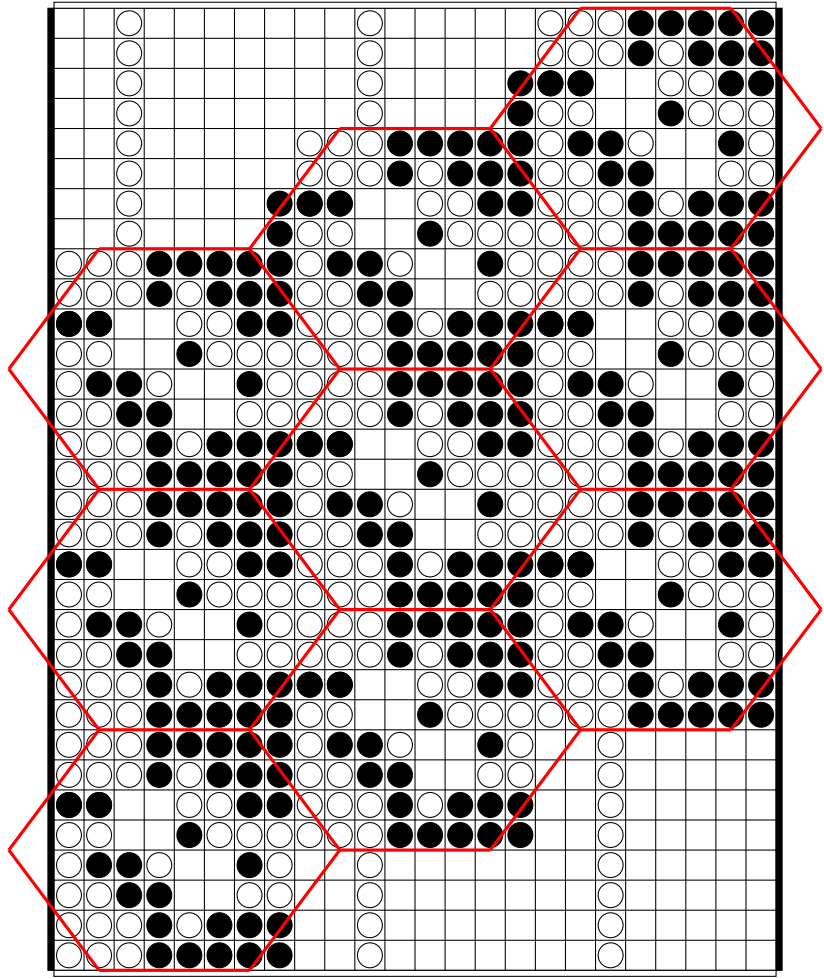
\begin{figure}
  \centering
\begin{tikzpicture}[scale=0.8]
  \sboard{24}{32}
  \sedges{24}{32}
  \slitherecell{ 0}{16} \slitherecell{ 8}{20} \slitherecell{16}{24}
  \slitherecell{ 0}{ 8} \slitherecell{ 8}{12} \slitherecell{16}{16}
  \slitherecell{ 0}{ 0} \slitherecell{ 8}{ 4} \slitherecell{16}{ 8}

  \foreach \j in {1,...,4} { \sposition{11-\j}{}}
  \foreach \j in {1,...,8} { \sposition{19-\j}{}}

  \foreach \j in {25,...,32} { \sposition{3-\j}{}}
  \foreach \j in {29,...,32} { \sposition{11-\j}{}}

  \sposition{}{8-25,8-26,16-29,16-30}

  \begin{scope}[scale=0.5,yshift=0.5cm]
    \draw[hexsid] ( 2, 0) -- ( 7, 0); \draw[hexsid] ( 2,  8) -- ( 7,  8); \draw[hexsid] ( 2, 16) -- ( 7, 16); \draw[hexsid] ( 2, 24) -- ( 7, 24);
    \draw[hexsid] (10, 4) -- (15, 4); \draw[hexsid] (10, 12) -- (15, 12); \draw[hexsid] (10, 20) -- (15, 20); \draw[hexsid] (10, 28) -- (15, 28);
    \draw[hexsid] (18,  8) -- (23,  8); \draw[hexsid] (18, 16) -- (23, 16); \draw[hexsid] (18, 24) -- (23, 24); \draw[hexsid] (18, 32) -- (23, 32);

    \draw[hexsid] (-1,  4) -- ( 2,  8); \draw[hexsid] (-1, 12) -- ( 2, 16); \draw[hexsid] (-1, 20) -- ( 2, 24);
    \draw[hexsid] ( 7,  0) -- (10,  4); \draw[hexsid] ( 7,  8) -- (10, 12); \draw[hexsid] ( 7, 16) -- (10, 20); \draw[hexsid] ( 7, 24) -- (10, 28);
    \draw[hexsid] (15,  4) -- (18,  8); \draw[hexsid] (15, 12) -- (18, 16); \draw[hexsid] (15, 20) -- (18, 24); \draw[hexsid] (15, 28) -- (18, 32);
    \draw[hexsid] (23,  8) -- (26, 12); \draw[hexsid] (23, 16) -- (26, 20); \draw[hexsid] (23, 24) -- (26, 28);

    \draw[hexsid] (-1,  4) -- ( 2,  0); \draw[hexsid] (-1, 12) -- ( 2,  8); \draw[hexsid] (-1, 20) -- ( 2, 16);
    \draw[hexsid] ( 7,  8) -- (10,  4); \draw[hexsid] ( 7, 16) -- (10, 12); \draw[hexsid] ( 7, 24) -- (10, 20);
    \draw[hexsid] (15, 12) -- (18,  8); \draw[hexsid] (15, 20) -- (18, 16); \draw[hexsid] (15, 28) -- (18, 24);
    \draw[hexsid] (23, 16) -- (26, 12); \draw[hexsid] (23, 24) -- (26, 20); \draw[hexsid] (23, 32) -- (26, 28);
  \end{scope}
\end{tikzpicture}
  \caption{Empty $3\times 3$ \gamename{hex} board reduced to a \gamename{slither} position.}
  \label{fig:slither-hex}
\end{figure}

The following observation is outlined by Figure~\ref{fig:slither-hex}.

\begin{observation}\label{obs:opt}
  When playing in a empty cell gadget, the best Black can do is to connect $x$, $y$, and $z$, and the best White can do is to connect $x'$, $y'$, and $z'$.
\end{observation}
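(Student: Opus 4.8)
The plan is to reduce the statement to a finite inspection of the empty cell gadget of Figure~\ref{fig:slither-empty-cell}, reusing the technique of the proof of Lemma~\ref{lem:black-cell} for one colour and the colour-swapping anti-symmetry of the gadget for the other. The starting point is Observation~\ref{obs:empty-cell}: any turn that places a stone on an empty square of the gadget must, to respect the diagonal rule, simultaneously relocate one of the moving player's stones already lying in that same gadget. Since a relocation is only allowed onto a king-adjacent empty square, and since the gadget is sealed off from its neighbours by the filler stones of Figure~\ref{fig:slither-hex}, such a turn neither leaks a stone into an adjacent gadget nor affects any other part of the board; hence ``playing in the empty cell gadget'' means spending a whole move on a (place, relocate) pair confined to the gadget, and there are only finitely many such pairs to examine.

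First I would record the connectivity of the black stones of Figure~\ref{fig:slither-empty-cell}: the terminals $x$, $y$, $z$ lie in at most two black groups, there is a small set of empty ``bottleneck'' squares whose occupation by black stones would fuse those groups into a single one through $x$, $y$, $z$, and no black group touches the boundary of the gadget anywhere beyond $x$, $y$, $z$ nor reaches an edge of the board. Then, by the same kind of argument as in the proof of Lemma~\ref{lem:black-cell} --- Black occupies (or relocates onto) the bottleneck, and every white attempt to keep the two groups apart is either forbidden by the diagonal rule or is immediately answered by a black move completing the merge --- I would show that Black, with one move in the gadget, can force a group containing $x$, $y$ and $z$. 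This establishes the ``can do'' half.

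For the ``cannot do better'' half I would check that, for \emph{every} legal (place, relocate) pair Black could use inside the gadget, the resulting black groups are still contained in the gadget except through $x$, $y$, $z$: none can be extended to touch the gadget's boundary at a new point or to reach an edge of the board, and a purely internal black move creates no white weakness exploitable outside the gadget; so nothing strictly stronger than a group through $x$, $y$, $z$ is attainable. Finally, as noted in the text just before Lemma~\ref{lem:white-cell}, up to the eight corner stones A6, A7, B6, B7, G0, G1, H0, H1 the empty cell gadget is anti-symmetric about the A7--H0 diagonal; this reflection swaps the two colours and maps $\{x,y,z\}$ onto $\{x',y',z'\}$, so the statement for White follows from the one for Black verbatim.

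The main obstacle is the bookkeeping in the middle two steps: the diagonal rule makes the legal moves inside the $8 \times 8$ gadget somewhat irregular, so one has to verify case by case that the only placements and relocations of any use to a player are the intended ones and that no combined move escapes the gadget or overshoots the three terminals. This is routine but tedious, and since Figure~\ref{fig:slither-hex} already displays how the gadgets are assembled and the argument is a direct rerun of the proof of Lemma~\ref{lem:black-cell}, the write-up can legitimately lean on inspection of the figures rather than enumerate every branch.
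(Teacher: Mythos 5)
Your proposal is correct and takes essentially the same route as the paper: the paper gives no formal proof of this observation, merely asserting that it is ``outlined by'' the assembled-board figure, and your planned finite inspection --- every turn spent in the gadget is a (place, relocate) pair confined to it by Observation~\ref{obs:empty-cell}, no group of either colour can reach the gadget's boundary except through the three designated terminals, and the A7--H0 anti-symmetry (which indeed maps $\{x,y,z\}$ to $\{x',y',z'\}$) transfers the claim from Black to White --- is precisely the check the paper leaves implicit. Nothing further is needed.
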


From the empty cell gadget, Black can move stone E4 to D5 and place a stone on C5, resulting in the black cell configuration.
By Lemma~\ref{lem:black-cell} and Observation~\ref{obs:opt}, it is the optimal play within this cell.
Similarly, the optimal play for White in a given empty cell, is to move stone D3 to C4 and place a stone on C5, yielding the white cell.
Thus, having chosen the cell gadget where to play, the optimal move is to connect six paths going from this cell to the six adjacent cells in a hexagonal paving (Figure~\ref{fig:slither-hex}).
Hence, the built \gamename{slither} position simulates a game of \gamename{hex}, and so, \gamename{slither} is as hard as \gamename{hex}.

\begin{theorem}
  It is \pspace-complete to decide which player has a winning strategy from a given \gamename{slither} position.
\end{theorem}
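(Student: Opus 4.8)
The plan is to establish membership in \pspace{} and \pspace-hardness separately, the latter by a reduction from \hex{} that reuses the cell gadgets and the three statements (Observation~\ref{obs:empty-cell}, Lemma~\ref{lem:black-cell}, Lemma~\ref{lem:white-cell}, Observation~\ref{obs:opt}) already established above.

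For membership, I would note that every \slither{} move places exactly one new stone: the relocation is optional and stone-count preserving, while the placement is mandatory. Hence on an $n \times n$ board a game lasts at most $n^2$ moves, and scoring a terminal position (checking for a left--right black group or a top--bottom white group) is polynomial. A standard recursive minimax evaluation therefore explores a game tree of depth $O(n^2)$ while holding a single position together with $O(n^2)$ recursion frames of polynomial size, so it runs in polynomial space and \slither{} is in \pspace{}.

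For hardness, given a \hex{} position $G$ on an $n \times n$ board together with the player to move, I would build the \slither{} position $\psi(G)$ by replacing each hexagonal cell of $G$ with the corresponding $8 \times 8$ gadget of Figure~\ref{fig:slither-cell} --- empty, black, or white --- glued along their ports and attached to the board edges exactly as in Figure~\ref{fig:slither-hex}, filling every remaining square with stones as indicated there, and letting the player to move in $\psi(G)$ be the player to move in $G$. Since the resulting board has size $O(n) \times O(n)$ and every gadget is fixed, this is computable in polynomial time. Correctness reduces to showing that optimal play in $\psi(G)$ simulates optimal play in $G$ move for move: by Observation~\ref{obs:empty-cell} a player who places inside an empty cell gadget is forced to relocate a stone \emph{within the same gadget}, so a single \slither{} move cannot advance two different cell gadgets; by Observation~\ref{obs:opt}, together with the fact that a player may always usefully claim some still-empty cell (via the moves described just before the theorem), any move that fails to claim an empty cell --- outside all gadgets, or inside an already-decided cell --- is dominated; and by Lemmas~\ref{lem:black-cell} and~\ref{lem:white-cell}, once a cell has been claimed by a player the opponent can never sever the connection of that cell's three ports. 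Thus claiming a cell for Black (resp.~White) in $\psi(G)$ corresponds to Black (resp.~White) playing in the matching empty \hex{} cell, the port structure of Figure~\ref{fig:slither-hex} reproduces the hexagonal adjacency, and Black connects the first and last columns of $\psi(G)$ (resp.~White the first and last rows) if and only if the corresponding side-to-side connection is realised in $G$. Hence Black wins $\psi(G)$ iff Black wins $G$, and with membership this yields \pspace-completeness.

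I expect the main obstacle to be the domination/tempo argument: one has to verify carefully that no player can gain by a move that does not claim a fresh empty cell --- in particular that the forced place-and-relocate structure of Observation~\ref{obs:empty-cell}, combined with the diagonal rule along the seams between adjacent gadgets in Figure~\ref{fig:slither-hex}, admits neither a ``double'' move that progresses two cells nor a beneficial waiting move, so that the correspondence between \slither{} moves and \hex{} moves is exact. (The swap rule is irrelevant here, since $\psi(G)$ is a midgame position.)
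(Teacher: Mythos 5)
Your proposal matches the paper's proof: membership via the observation that each move adds one stone so the game length is polynomial and a minimax depth-first search runs in polynomial space, and hardness via the cell gadgets of Figure~\ref{fig:slither-cell} assembled as in Figure~\ref{fig:slither-hex}, with correctness resting on Observations~\ref{obs:empty-cell} and~\ref{obs:opt} and Lemmas~\ref{lem:black-cell} and~\ref{lem:white-cell}. The domination/tempo point you flag as the main obstacle is exactly the part the paper also treats informally in the discussion preceding the theorem, so your added detail is a refinement rather than a departure.
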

\begin{proof}
The membership of this problem to \pspace{} boils down to noticing that the length of a game is bounded from above by the number of empty squares.
Indeed, at each move, one stone is added to the board.
Thus, a minimax depth-first search uses a polynomial amount of space.

The gadgets in Figure~\ref{fig:slither-cell} and their assembly as in Figure~\ref{fig:slither-hex} provide a reduction from \hex, a \pspace-hard problem.
\end{proof}

\section{Complexity of Havannah}
\label{sec:complexity-havannah}
\gamename{havannah} is a 2-player connection game played on a hexagonal board paved by hexagons.
White and Black place a stone of their color in turn in an unoccupied cell.
Stones cannot be taken, moved nor removed.
Two cells are neighbors if they share an edge.
A group is a connected component of stones of the same color via the neighbor relation.
A player wins if they realize one of the three following different structures: a circular group, called \emph{ring}, with at least one cell, possibly empty, inside; a group linking two corners of the board, called \emph{bridge}; or a group linking three edges of the board, called \emph{fork}.
With respect to the \emph{fork} winning condition, corner cells are not considered part of any edge.
Figure~\ref{fig:hav-winning} presents a board in which all three winning conditions are met by different groups of stones.

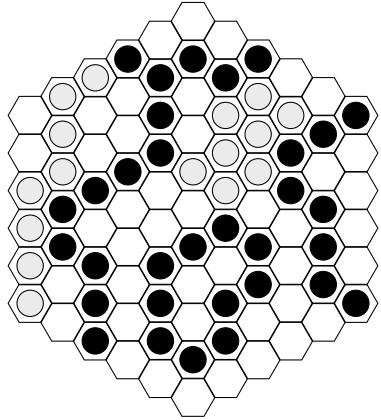
\begin{figure}
  \centering
\newcommand{\scala}{0.5}
\begin{tikzpicture}[>=stealth',scale=\scala,every node/.style={scale=2*\scala}]

  \havcoordinate{0}{18}{0}{18}
  \foreach \j in {4,...,9}  { \node[hav-empty] at  (0-\j)  {} ;}
  \foreach \j in {3,...,9}  { \node[hav-empty] at (0--\j)  {} ;}
  \foreach \j in {3,...,10} { \node[hav-empty] at  (1-\j)  {} ;}
  \foreach \j in {2,...,10} { \node[hav-empty] at (1--\j)  {} ;}
  \foreach \j in {2,...,11} { \node[hav-empty] at  (2-\j)  {} ;}
  \foreach \j in {1,...,11} { \node[hav-empty] at (2--\j)  {} ;}
  \foreach \j in {2,...,11} { \node[hav-empty] at  (3-\j)  {} ;}
  \foreach \j in {2,...,10} { \node[hav-empty] at (3--\j)  {} ;}
  \foreach \j in {3,...,10} { \node[hav-empty] at  (4-\j)  {} ;}
  \foreach \j in {3,...,9}  { \node[hav-empty] at (4--\j)  {} ;}
  \foreach \j in {4,...,9}  { \node[hav-empty] at  (5-\j)  {} ;}

  \foreach \cell in {0-4,0-5,0-6,0-7,0--7,0--8,0--9,1-10} {
    \node[hav-white] at  (\cell) {};
  }
  \foreach \cell in {3-7,2--7,3-8,3-9,3--9,4-9,3--8,3--7} {
    \node[hav-white] at  (\cell) {};
  }
  \foreach \cell in {2-3,2-4,2-5,2--5,3-6,3--5,3--4,3-4,3-3,2--2} {
    \node[hav-black] at  (\cell) {};
  }
  \foreach \cell in {1-3,1-4,1-5,0--5,0--6,1-7,1--7,2-8,2-9,2-10,1--10,2--10,3-10,3--10} {
    \node[hav-black] at  (\cell) {};
  }
  \foreach \cell in {5-4,4--4,4-7,4-8,4--5,4--6,4--8,5-9} {
    \node[hav-black] at  (\cell) {};
  }
\end{tikzpicture}
  \caption{\Hav winning conditions on a size 6 board.
    Each black group represents a winning pattern: from left to right, \emph{fork}, \emph{ring}, and \emph{bridge}.
    White groups are not winning, the left group does not constitute a fork because the corner is not part of any edge and so the group only connects two edges; the right group does not constitute a ring because it would need to enclose a non-empty surface.}
  \label{fig:hav-winning}
\end{figure}

\subsection{Preliminary results}
As the length of a game of \Hav is polynomially bounded, exploring the whole game tree with Depth-First Search can be done in polynomial space, so \Hav is in \pspace.

In our reduction, the \Hav board is large enough that the gadgets are far from the edges and the corners.
Additionally, the gadgets feature ring threats that are short enough that the bridges and forks winning conditions do not have any influence.
Before starting the reduction, we define threats and make two observations that will prove useful in the course of the reduction.

We call \emph{simple threat} a move which threatens to realize a ring on the next move.
There are only two kinds of answers to a simple threat: either win on the spot or defend by placing a stone in the cell creating this very threat.

\begin{lemma}
  \label{lem:simple-threat}
  If a player is not threatened, playing a simple threat forces the opponent to answer on the cell of the threat.
\end{lemma}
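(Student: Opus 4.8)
The plan is to argue directly from the definition of a simple threat together with the elementary fact that, in \Hav, a ring requires at least one interior cell — hence at least six stones on its boundary — so that a simple threat on move $k$ corresponds to a configuration missing exactly one boundary stone of some ring pattern. First I would fix the terminology: say the player to move (call them the threatener) plays the simple threat $t$, and let $R$ be the set of cells whose occupation by the threatener would complete a ring on the following move; by hypothesis $R \neq \emptyset$ after $t$ is played, and before $t$ was played no winning structure existed for either player (the threatener because it is their move and they had no immediate win by assumption, the opponent because it was not their turn and we are told the threatener is not themselves threatened). The point to establish is that the opponent has essentially only two replies that avoid an immediate loss: win outright if such a move exists, or occupy a cell of $R$.

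The key step is to show that, unless the opponent does one of these two things, the threatener wins on the very next move by playing some $r \in R$. This is where I would be careful: after the threat move $t$, the threatener has a family of completing cells $R$; the opponent, with a single stone placement (recall that in \Hav a move is exactly placing one stone, stones are never moved or removed), must neutralize \emph{all} of them or produce their own win. If the opponent does not play into $R$, then every cell of $R$ remains empty and available, so on the next move the threatener picks any $r \in R$ and closes a ring, winning — \emph{unless} the opponent's move was itself winning, which is the "win on the spot" alternative. Conversely, if the opponent does play a stone on some cell $c \in R$, I claim $|R| = 1$, i.e.\ $c$ is the unique cell of the threat, so that "placing a stone in the cell creating this very threat" is forced; this uses the structural assumption, implicit in the construction, that the ring threats arising in the reduction are \emph{simple} in the strong sense that exactly one cell completes them. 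More precisely, the lemma as stated is about the generic behaviour of a simple threat: a simple threat is by definition a move after which a single specified cell completes a ring, so $R$ is a singleton $\{c\}$ and the opponent must either win immediately or play $c$.

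Concretely, the proof would run: (i) recall a ring needs an enclosed non-empty region, hence cannot be completed "for free" by the passage of a turn — it genuinely requires the threatener to place the missing stone; (ii) observe that since stones are permanent, the opponent cannot dismantle the near-ring, so their only defensive resource is to pre-empt the completing cell $c$; (iii) conclude that any opponent move other than $c$ leaves $c$ empty, so the threatener plays $c$ and wins, \emph{unless} the opponent's move was already a winning move for them, giving the two stated options; (iv) note the hypothesis "if a player is not threatened" is exactly what rules out the degenerate case where the opponent has a faster win or a counter-threat that makes ignoring $c$ safe, and also ensures the threatener's own position is not already compromised so that the threat is meaningful. The main obstacle — and the only genuinely delicate point — is justifying that a single completing cell is at issue rather than several: I would handle this by leaning on the definition of \emph{simple threat} as given ("a move which threatens to realize a ring on the next move," understood in these gadgets to have a unique completion), and, if needed for full rigor, by remarking that if two or more cells completed distinct rings the opponent would be facing a double ring threat, which the gadget design and the "not threatened" hypothesis preclude. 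Everything else is routine given the permanence of stones and the one-stone-per-move rule.
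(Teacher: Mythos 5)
Your argument is correct and is essentially the paper's own (one-line) proof: since the threatening player is not themselves threatened, the opponent has no immediate win, and any reply other than occupying the completing cell lets the threatener close the ring on the next move. Your extended discussion of why the completing cell is unique is really just unpacking the paper's definition of a simple threat (as opposed to a double threat, which has at least two completing cells), so the substance of the argument is the same.
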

\begin{proof}
  Placing a stone on the cell of the threat wins the game against any other move of the opponent.
\end{proof}

A \emph{double threat} is a move which threatens to realize a ring on the next move on at least two different cells.
We will use \emph{threat} as a generic term to encompass both simple and double threats.
To be more concise, we will denote by $a_1; a_2; \dots; a_p$ a sequence of moves starting with one player playing move $a_1$, the opponent answering $a_2$, and so on.
We will specify which player starts the sequence if it is not clear from the context.
\begin{lemma}
  \label{lem:double-threat}
  If a player is not threatened, playing a double threat is winning.
\end{lemma}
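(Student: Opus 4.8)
The plan is to argue directly about the opponent's possible replies to the double threat. Suppose the player to move, say White, is not threatened (Black has no simple threat available) and plays a double threat $d$: after $d$, White menaces to complete a ring on each of at least two distinct cells, call them $c_1$ and $c_2$. I would first invoke Lemma~\ref{lem:simple-threat}-style reasoning: since before $d$ White was the one not under threat, and $d$ does not hand Black a winning move (more on this below), Black must respond to $d$ somehow. Black's move either wins immediately for Black, or it does not. If Black cannot win on the spot, then after Black's reply at least one of $c_1$, $c_2$ is still empty — because a single stone occupies only one cell — and White plays there on the next move, completing a ring and winning. So the whole argument reduces to ruling out the case where Black has a winning reply to $d$.

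The key step, then, is to show that playing a double threat cannot give Black a win, given that White was not threatened before playing $d$. Here I would use the structure of the reduction rather than a general principle: as stated in the preamble to this subsection, the board is taken large enough that all gadgets lie far from the edges and corners, so the bridge and fork winning conditions are irrelevant in the relevant region, and the only winning structure in play is a ring. For Black to win by ring on the move after $d$, Black would need a simple threat of their own already present before $d$ was played; but that contradicts the hypothesis that White was not threatened. (Formally: a ring that Black can complete in one move means there is a cell whose occupation by Black closes a Black ring, i.e. a Black simple threat, which is exactly what "White is not threatened" excludes.) Hence Black has no immediate win, and the previous paragraph applies.

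I expect the main obstacle to be stating cleanly why "$d$ does not create a Black win or a Black threat that White cannot also parry while still having a live double threat." The subtle point is that White's move $d$ is itself placing a stone, and one must make sure that this does not (a) complete a Black ring — impossible, since White places a White stone — nor (b) somehow neutralize both of White's own threats at once; but by definition of a double threat the two threatened completing cells $c_1, c_2$ are distinct and still available immediately after $d$, so Black's single reply can remove at most one of them. Once these observations are in place the proof is a two-line case split, essentially a strengthening of Lemma~\ref{lem:simple-threat}: if White is not threatened and plays a double threat, Black cannot both avoid losing immediately and cover both threatened cells, so White wins on the following move.
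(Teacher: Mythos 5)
Your argument is correct and matches the paper's proof: the hypothesis that the player is not threatened rules out an immediate win by the opponent, and since a single reply can occupy at most one of the two threatened cells, the player completes a ring on the other. The extra discussion about board topology and the move $d$ not completing a Black ring is not needed but does no harm.
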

\begin{proof}
 The player is not threatened, so their opponent cannot win next turn.
 Let $u$ and $v$ be two cells of the double threat.
 If the opponent plays in $u$, the player wins by playing in $v$.
 If the opponent plays anywhere else, the player wins by playing in $u$.
\end{proof}

A second-order threat, or \emph{2-threat}, is a move which threatens to realize a double threat on the next move.
That is, after a 2-threat is played by a player $P$, there is a set of cells $\{x, a, b\}$ such that both $P$ playing $\{x, a\}$ as well as $P$ playing $\{x, b\}$ would complete a ring.
We call the three cells involved in a 2-threat the \emph{carrier} of the threat, and in particular the cell $x$ is called the \emph{exit} of the threat.
For example, in Figure~\ref{fig:hav-11}, move $n$ is a 2-threat for Black with carrier $x$, $a$, and $b$.
Indeed, if Black follows up with $x$, then Black obtains a ring by playing $a$ or by playing $b$.

\begin{lemma}
  \label{lem:2-threat}
  If a player makes a 2-threat, then any opponent move that does not belong to the following list is losing.
  1) Immediately winning move 2) Simple threat 3) Move inside the carrier of the threat.
\end{lemma}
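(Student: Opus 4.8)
The statement sits one level above Lemmas~\ref{lem:simple-threat} and~\ref{lem:double-threat}, and the plan is to reduce to the latter. Write $P$ for the player who has just played a 2-threat, with carrier $\{x,a,b\}$ and exit $x$, and $O$ for the opponent, now to move. Suppose $O$ replies with a move $m$ that is not an immediately winning move, not a threat (I read item~(2) of the list as covering any move that threatens to complete a ring on the replier's own next turn, on one cell or on several --- a double threat would after all win for $O$ by Lemma~\ref{lem:double-threat}, so it too must be exempted), and not one of $x,a,b$. I want to show that $m$ loses, by exhibiting a winning reply for $P$, namely: $P$ plays the exit $x$.

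The first thing to pin down is what the carrier condition gives us. By definition of a 2-threat, in the position right after $P$ played it the cells $x$, $a$, $b$ were empty and there were two candidate rings $R_a$, $R_b$ such that every cell of $R_a$ except $x,a$ and every cell of $R_b$ except $x,b$ already carried a $P$-stone. Since $m\notin\{x,a,b\}$, after $O$ plays $m$ the cells $x,a,b$ are still empty, so $P$ can play $x$; and since the only empty cells of $R_a$ (resp.\ $R_b$) are $x,a$ (resp.\ $x,b$), the $O$-stone $m$ lies on neither ring and hence blocks neither --- an $O$-stone falling inside a ring's enclosed region does no harm, a ring only requiring its interior to contain a cell, not to be empty. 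Consequently, once $P$ has played $x$, playing $a$ completes $R_a$ and playing $b$ completes $R_b$: $P$ holds a double threat.

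It remains to check the hypothesis of Lemma~\ref{lem:double-threat}, that $P$ is not threatened at the instant of playing $x$, and this is exactly where the assumptions on $m$ are used. Since $m$ is neither an immediate win nor a threat, in the position after $m$ the opponent $O$ has no move completing a ring on their next turn; the extra $P$-stone $x$ creates no new ring for $O$ either (an $O$-ring is a cycle of $O$-stones, and its interior cells are present regardless), so $O$ still has none. (The bridge and fork winning conditions can be ignored throughout, since the paper already notes that in the reduction the gadgets' ring threats are too short for them to matter, so ``threat'' and ``threatened'' may be read as referring only to rings.) Thus $P$ is not threatened, has just played a double threat, and wins by Lemma~\ref{lem:double-threat}; that is, $m$ loses. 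I would close by recalling why the three exempted replies indeed need not lose: an immediately winning $m$ wins for $O$ outright; a threatening $m$ forces $P$ to defend on the next move (Lemma~\ref{lem:simple-threat}) instead of playing $x$; and a move $m\in\{x,a,b\}$ removes the shared exit or one of the two ring-completing cells, destroying one of the candidate rings.

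The main obstacle is the bookkeeping of the second paragraph: making rigorous that a reply outside the carrier leaves both candidate rings completely intact. This rests on the observation that, for a two-cell set $\{x,y\}$ to complete a ring, all the other cells of that ring must already be $P$-stones, so the carrier $\{x,a,b\}$ genuinely contains every empty cell on which the two ring threats rely; modulo that, the proof is a direct appeal to Lemma~\ref{lem:double-threat}.
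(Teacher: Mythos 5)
The paper states Lemma~\ref{lem:2-threat} without any proof, treating it as an immediate consequence of the definition of a 2-threat together with Lemma~\ref{lem:double-threat}; your argument --- answer $m$ with the exit $x$, observe that both candidate rings remain one stone from completion because every empty cell they rely on lies in the carrier $\{x,a,b\}$ (and an opponent stone in a ring's interior is harmless), then invoke Lemma~\ref{lem:double-threat} --- is exactly that intended argument, worked out correctly, including the sensible reading of item~(2) as covering double threats. The one point worth flagging is that your inference ``$m$ is neither a win nor a threat, hence $O$ has no ring-completing move after $m$'' tacitly assumes $O$ had no such move available \emph{before} $m$ was played; this ``no threat elsewhere'' hypothesis is likewise absent from the lemma's statement and is supplied only by the invariant maintained in the paper's applications (cf.\ Lemma~\ref{lem:hav-force-win} and Theorem~\ref{hav-main-th}), so your proof is no worse off than the lemma itself, but the dependence deserves an explicit mention.
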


In the following subsections we propose gadgets that encode the different parts of a \emph{simplified} instance of \GG{}, as defined in Section~\ref{sec:gg}.
These gadgets have starting points and ending points.
The gadgets are assembled so that the ending point of a gadget coincides with the starting point of the next one.
The resulting instance of \gamename{havannah} is such that both players must enter in the gadgets by a starting point and leave it by an ending point otherwise they lose.
Wires and curves will enable us to encode the edges of the input graph.
While figures and lemmas are mostly presented from White's point of view, all the gadgets and lemmas work exactly the same way with colors reversed.

\subsection{Vertex gadgets}
Recall from Section~\ref{sec:gg} that simplified \GG{} instances only feature four types of vertices: $(1,1)$-, $(2, 1)$, $(0, 2)$-, and $(1,2)$-vertices.
Figure~\ref{fig:hav-gadgets} depicts how these four types of vertices are mapped into \Hav gadgets.
The diagrams illustrate the White version of the gadgets, corresponding to vertices belonging to the first player in \GG.
Each gadget can naturally by rotated by multiples of 60$^\circ$ and the colors can be swapped to obtain the Black gadgets.

In Figure~\ref{fig:hav-12} and~\ref{fig:hav-02}, after Black plays $n$, White can elect to play $c_1$ or $c_2$.
In the former case, Black is forced to reply $c_2$ which forces White to e\emph{x}it the gadget via $x_1$.
In the latter case, Black replies $c_1$ and then White exits the gadget via $x_2$.

\begin{figure}
  \centering
  \subfloat[$(1,1)$-vertex gadget.
    \label{fig:hav-11}
  ]{
    \centering
\begin{tikzpicture}[>=stealth',scale=0.5]
  \havcoordinate{1}{4}{1}{6}
  \foreach \i in {1,...,4} {
    \foreach \j in {1,...,6} {
      \node [hav-empty] at (\i-\j) {};
    }
  }
  \foreach \i in {1,...,4} {
    \foreach \j in {1,...,6} {
      \node [hav-empty] at (\i--\j) {};
    }
  }

  \foreach \i in {1-5,1--4,2-4,2--1,2--2,3-4,3--1,3--3,4-1,4-3,4--2} {
    \node at (\i)[hav-black] {};
  }
  \foreach \i in {1-4,1-6,1--3,1--5,2-3,2-5,2--4,3-3,3--2,4-2} {
    \node at (\i)[hav-white] {};
  }
  \draw[thick,->] (1--6) -- (2--5);
  \draw[thick,->] (3--4) -- (4--3);

  \node at (2--3) {$n$};
  \node at (3-1) {$b$};
  \node at (3-2) {$a$};
  \node at (4--1) {$x$};
\end{tikzpicture}
  }
  \hfill
  \subfloat[$(2,1)$-vertex gadget, also known as \emph{join}.
    \label{fig:hav-21}
  ]{
    \centering
\begin{tikzpicture}[>=stealth',scale=0.50]
  \havcoordinate{0}{6}{-1}{9}
  \foreach \i in {1,...,6} {
    \foreach \j in {0,...,9} {
      \node [hav-empty] at (\i-\j) {};
    }
  }
  \foreach \i in {0,...,6} {
    \foreach \j in {-1,...,8} {
      \node [hav-empty] at (\i--\j) {};
    }
  }

  \twoone{1}{2}{black}{white}

  \node at (1--6)  {$n_1$};
  \node at (2-6)   {$n_1'$};
  \node at (5--6)  {$n_2$};
  \node at (5-6)   {$n_2'$};
  \node at (3---1) {$x$};

  \foreach \i in {0--7,1-7,3-0,3-1,4-0,4-1,6-7,6--7} {
    \node [hav-black] at (\i) {};
  }
  \foreach \i in {0--6,0--8,1-6,1-8,1--7,3--0,3--1,5--7,6-6,6-8,6--6,6--8} {
    \node [hav-white] at (\i) {};
  }
  \draw[thick,->] (1-9) -- (2-8);
  \draw[thick,->] (6-9) -- (5-8);
  \draw[thick,->] (2--1) -- (2---1);
\end{tikzpicture}
  }\\
  \subfloat[$(0,2)$-vertex gadget, the starting vertex.
  ]{
    \centering
\begin{tikzpicture}[>=stealth',scale=0.5]
  \havcoordinate{0}{7}{0}{11}
  \foreach \i in {0,...,4} {
    \foreach \j in {1,...,10} {
      \node[hav-empty] at (\i-\j) {};
    }
  }
  \foreach \i in {0,...,3} {
    \foreach \j in {1,...,10} {
      \node[hav-empty] at (\i--\j) {};
    }
  }

  \startingvertex{1}{3}
  \foreach \i/\j in {0/1,0/2,0/3,0/4,0/5,1/1,1/2,1/3,1/5,1/8, 
                     2/9,3/8,3/10 
  } {
    \pgfmathtruncatemacro{\ii}{\i}
    \pgfmathtruncatemacro{\jj}{\j}
    \node[hav-black] at (\ii--\jj) {};
  }
  \foreach \i/\j in {2/4, 
                     2/7,2/9,3/7,3/8,3/10,4/9 
  } {
    \pgfmathtruncatemacro{\ii}{\i}
    \pgfmathtruncatemacro{\jj}{\j}
    \node[hav-black] at (\ii-\jj) {};
  }
  \foreach \i/\j in {1/4, 
                     2/7,2/8,3/9 
  } {
    \pgfmathtruncatemacro{\ii}{\i}
    \pgfmathtruncatemacro{\jj}{\j}
    \node[hav-white] at (\ii--\jj) {};
  }
  \foreach \i/\j in {1/2,1/3,1/4, 
                     3/9 
  } {
    \pgfmathtruncatemacro{\ii}{\i}
    \pgfmathtruncatemacro{\jj}{\j}
    \node[hav-white] at (\ii-\jj) {};
  }
  \draw[thick,->] (0-4) -- (0-2);
  \draw[thick,->] (2-10) -- (3-11);
  \node at (2--5) {$n$};
  \node at (2-5)  {$c_1$};
  \node at (2--6) {$c_2$};
  \node at (1-1)  {$x_2$};
  \node at (4-10) {$x_1$};
  \node at (2-8)  {$a_1$};
  \node at (1--7) {$b_1$};
  \node at (1-5)  {$a_2$};
  \node at (1-6)  {$b_2$};
  \node at (3--4) {$e$};
  \node at (4-4) {$f$};
\end{tikzpicture}
    \label{fig:hav-02}
  }
  \hfill
  \subfloat[$(1,2)$-vertex gadget, also known as \emph{split}.
  ]{
    \centering
\begin{tikzpicture}[>=stealth',scale=0.5]
  \havcoordinate{0}{7}{0}{10}
  \foreach \i in {3,...,7} {
    \foreach \j in {1,...,10} {
      \node [hav-empty] at (\i-\j) {};
    }
  }
  \foreach \i in {2,...,6} {
    \foreach \j in {1,...,10} {
      \node [hav-empty] at (\i--\j) {};
    }
  }

  \foreach \i/\j in {5/3,6/2,6/4, 
                     5/6,5/7,5/8,5/9, 
                     3/2,3/5,4/3,4/6 
  } {
    \pgfmathtruncatemacro{\ii}{\i}
    \pgfmathtruncatemacro{\jj}{\j}
    \node[hav-white] at (\ii--\jj) {};
  }
  \foreach \i/\j in {5/5,6/3,6/5,7/2,7/4, 
                     3/2,4/3,4/5,4/7 
  } {
    \pgfmathtruncatemacro{\ii}{\i}
    \pgfmathtruncatemacro{\jj}{\j}
    \node[hav-white] at (\ii-\jj) {};
  }
  \foreach \i/\j in {6/3, 
                     4/5,4/7,5/5,
                     2/2,3/1,3/4,4/2 
  } {
    \pgfmathtruncatemacro{\ii}{\i}
    \pgfmathtruncatemacro{\jj}{\j}
    \node[hav-black] at (\ii--\jj) {};
  }
  \foreach \i/\j in {6/4,7/3, 
                     5/7,5/9,5/10,6/6,6/7,6/8,6/9,6/10, 
                     3/1,3/3,4/2,4/4,5/3,5/4 
  } {
    \pgfmathtruncatemacro{\ii}{\i}
    \pgfmathtruncatemacro{\jj}{\j}
    \node[hav-black] at (\ii-\jj) {};
  }

  \draw[thick,->] (5-2) -- (3--0);
  \draw[thick,->] (6--6) -- (6--9);
  \draw[thick,->] (7-1) -- (6-2);

  \node at (5--4)  {$n$};
  \node at (4--4)  {$c_1$};
  \node at (5-6)   {$c_2$};
  \node at (5--10) {$x_1$};
  \node at (2--1)  {$x_2$};
  \node at (5-8)   {$a_1$};
  \node at (4--8)  {$b_1$};
  \node at (3--3)  {$a_2$};
  \node at (3-4)   {$b_2$};
  \node at (4-6)   {$e$};
  \node at (3--6)  {$f$};
\end{tikzpicture}
    \label{fig:hav-12}
  }
  \caption{White gadgets before being used.
    Black e\emph{n}ters the gadget by playing $n$ (either $n_1$ or $n_2$ in Figure~\ref{fig:hav-21}).
    In each case, optimal play leads White to e\emph{x}it the gadget via $x$ (either $x_1$ or $x_2$ in the choice gadgets Figure~\ref{fig:hav-12} and~\ref{fig:hav-02}).
    In both choice gadgets, exactly one of the sequences $c_1; c_2$ or $c_2; c_1$ is played between the entry and the exit of the gadget, up to White's decision.
  }
  \label{fig:hav-gadgets}
\end{figure}
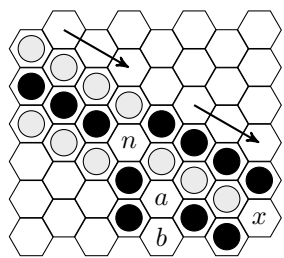
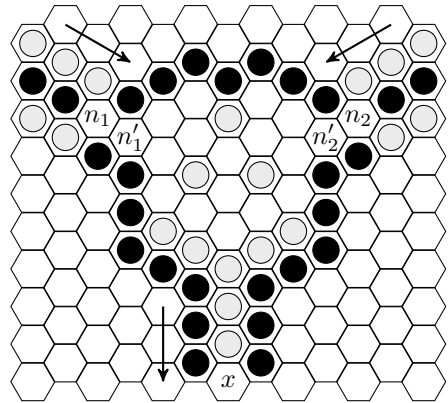
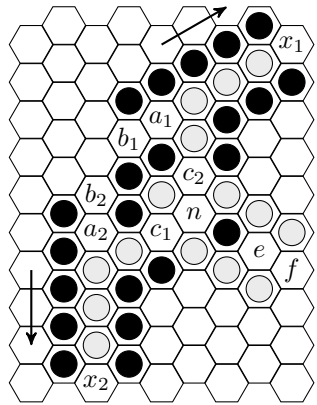
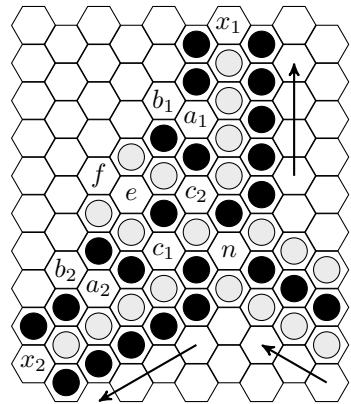






Let $(G, v_0)$ be a simplified instance of \GG{}. 
$G$ being bipartite, we denote by $V_1$ the side of the partition containing $v_0$, and $V_2$ the other side.
Player 1 moves the token from vertices of $V_1$ to vertices of $V_2$ and player 2 moves the token from $V_2$ to $V_1$.
We denote by $\phi$ the reduction from \GG{} to \Hav{}.
Each of Player 1's vertices is encoded by an instance of the corresponding gadget of Figure~\ref{fig:hav-gadgets}, and each of Player 2's vertices is encoded similarly with colors reversed.
Wires and curves are used to connect the gadgets.
As an example, we provide the reduction from the \GG{} instance from Figure~\ref{fig:gg-instance} in Figure~\ref{fig:hav-reduction}.

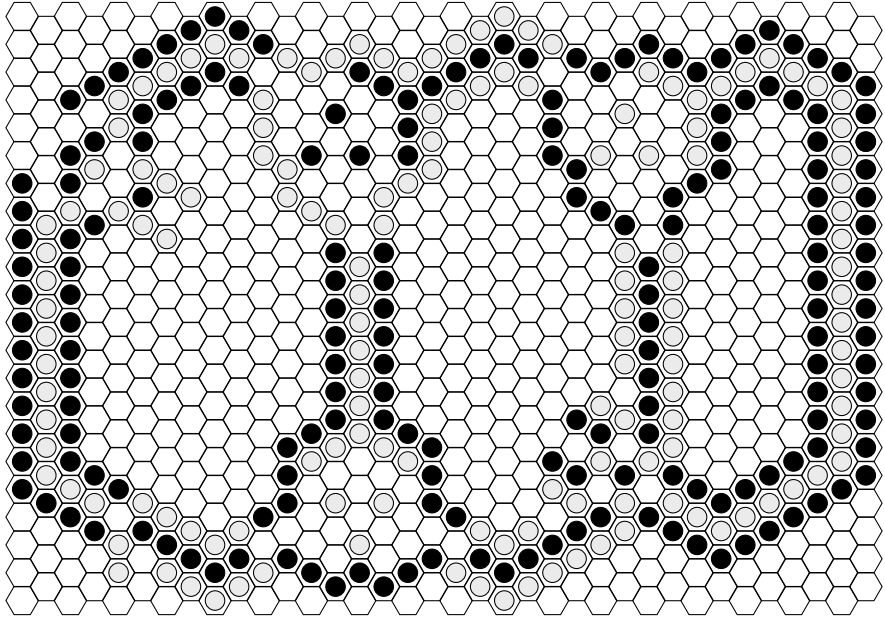
\begin{figure}
  \centering
\newcommand{\offseti}{0}
\newcommand{\offsetj}{0}
\newcommand{\scal}{0.37}
\begin{tikzpicture}[>=stealth',scale=\scal,every node/.style={scale=2*\scal}]

  \havcoordinate{0}{18}{5}{26}
  \foreach \i in {1,...,18} {
    \foreach \j in {6,...,26} {
      \node[hav-empty] at (\i-\j) {} ;
    }
  }
  \foreach \i in {0,...,17} {
    \foreach \j in {5,...,26} {
      \node[hav-empty] at  (\i--\j) {} ;
    }
  }

  \renewcommand{\offseti}{0}
  \renewcommand{\offsetj}{15}
  \startingvertex{\offseti+1}{\offsetj+3}
  \foreach \i/\j in {0/0,0/1,0/2,0/3,0/4,0/5,1/0,1/1,1/2,1/3,1/5,1/8, 
                     2/9,3/8,3/10 
  } {
    \pgfmathtruncatemacro{\ii}{\i + \offseti}
    \pgfmathtruncatemacro{\jj}{\j + \offsetj}
    \node[hav-black] at (\ii--\jj) {};
  }
  \foreach \i/\j in {2/4, 
                     2/7,2/9,3/7,3/8,3/10,4/9,4/11 
  } {
    \pgfmathtruncatemacro{\ii}{\i + \offseti}
    \pgfmathtruncatemacro{\jj}{\j + \offsetj}
    \node[hav-black] at (\ii-\jj) {};
  }
  \foreach \i/\j in {1/4, 
                     2/7,2/8,3/9 
  } {
    \pgfmathtruncatemacro{\ii}{\i + \offseti}
    \pgfmathtruncatemacro{\jj}{\j + \offsetj}
    \node[hav-white] at (\ii--\jj) {};
  }
  \foreach \i/\j in {1/0,1/1,1/2,1/3,1/4, 
                     3/9,4/10 
  } {
    \pgfmathtruncatemacro{\ii}{\i + \offseti}
    \pgfmathtruncatemacro{\jj}{\j + \offsetj}
    \node[hav-white] at (\ii-\jj) {};
  }

  \renewcommand{\offseti}{0}
  \renewcommand{\offsetj}{4}
  \foreach \i/\j in {0/5,0/6,0/7,0/8,0/9,0/10,1/4,1/6,1/7,1/8,1/9,1/10, 
                     2/5,3/3
  } {
    \pgfmathtruncatemacro{\ii}{\i + \offseti}
    \pgfmathtruncatemacro{\jj}{\j + \offsetj}
    \node[hav-black] at (\ii--\jj) {};
  }
  \foreach \i/\j in {1/5,2/4,2/6,3/4 
  } {
    \pgfmathtruncatemacro{\ii}{\i + \offseti}
    \pgfmathtruncatemacro{\jj}{\j + \offsetj}
    \node[hav-black] at (\ii-\jj) {};
  }
  \foreach \i/\j in {
                     1/5,2/2,2/3,3/2,3/4 
  } {
    \pgfmathtruncatemacro{\ii}{\i + \offseti}
    \pgfmathtruncatemacro{\jj}{\j + \offsetj}
    \node[hav-white] at (\ii--\jj) {};
  }
  \foreach \i/\j in {1/6,1/7,1/8,1/9,1/10, 
                     2/5,3/5 
  } {
    \pgfmathtruncatemacro{\ii}{\i + \offseti}
    \pgfmathtruncatemacro{\jj}{\j + \offsetj}
    \node[hav-white] at (\ii-\jj) {};
  }

  \renewcommand{\offseti}{2}
  \renewcommand{\offsetj}{1}
  \twooneb{\offseti+3}{\offsetj+5}{black}{white}
  \foreach \i/\j in {7/5,8/4,8/6, 
                     2/4,2/6,3/5,
                     5/11,5/12,5/13,5/14 
  } {
    \pgfmathtruncatemacro{\ii}{\i + \offseti}
    \pgfmathtruncatemacro{\jj}{\j + \offsetj}
    \node[hav-white] at (\ii--\jj) {};
  }
  \foreach \i/\j in {8/5,8/7, 
                     2/5,2/7,3/5,3/7
  } {
    \pgfmathtruncatemacro{\ii}{\i + \offseti}
    \pgfmathtruncatemacro{\jj}{\j + \offsetj}
    \node[hav-white] at (\ii-\jj) {};
  }
  \foreach \i/\j in {8/5, 
                     2/5
  } {
    \pgfmathtruncatemacro{\ii}{\i + \offseti}
    \pgfmathtruncatemacro{\jj}{\j + \offsetj}
    \node[hav-black] at (\ii--\jj) {};
  }
  \foreach \i/\j in {8/6, 
                     2/6,3/6, 
                     5/12,5/13,5/14,6/12,6/13,6/14 
  } {
    \pgfmathtruncatemacro{\ii}{\i + \offseti}
    \pgfmathtruncatemacro{\jj}{\j + \offsetj}
    \node[hav-black] at (\ii-\jj) {};
  }

  \renewcommand{\offseti}{2}
  \renewcommand{\offsetj}{12}
  \twoonec{\offseti+3}{\offsetj+7}{white}{black}
  \foreach \i/\j in {
                     2/12,2/14,3/13, 
                     7/12,8/13 
  } {
    \pgfmathtruncatemacro{\ii}{\i + \offseti}
    \pgfmathtruncatemacro{\jj}{\j + \offsetj}
    \node[hav-black] at (\ii--\jj) {};
  }
  \foreach \i/\j in {5/4,5/5,5/6,6/4,6/5,6/6, 
                     3/12,3/14, 
                     7/12,8/13 
  } {
    \pgfmathtruncatemacro{\ii}{\i + \offseti}
    \pgfmathtruncatemacro{\jj}{\j + \offsetj}
    \node[hav-black] at (\ii-\jj) {};
  }
  \foreach \i/\j in {5/4,5/5, 
                     2/13, 
                     7/11,7/13,8/12,8/14 
  } {
    \pgfmathtruncatemacro{\ii}{\i + \offseti}
    \pgfmathtruncatemacro{\jj}{\j + \offsetj}
    \node[hav-white] at (\ii--\jj) {};
  }
  \foreach \i/\j in {
                     3/13,
                     7/13,8/12,8/14 
  } {
    \pgfmathtruncatemacro{\ii}{\i + \offseti}
    \pgfmathtruncatemacro{\jj}{\j + \offsetj}
    \node[hav-white] at (\ii-\jj) {};
  }

  \renewcommand{\offseti}{7}
  \renewcommand{\offsetj}{11}
  \twoonec{\offseti+4}{\offsetj+8}{black}{white}
  \foreach \i/\j in { 
                     4/14, 
                     8/13,9/13,10/5,10/6,10/7,10/8,10/9,10/10,10/11,10/12 
  } {
    \pgfmathtruncatemacro{\ii}{\i + \offseti}
    \pgfmathtruncatemacro{\jj}{\j + \offsetj}
    \node[hav-white] at (\ii--\jj) {};
  }
  \foreach \i/\j in {6/5,6/6,6/7,7/5,7/6,7/7, 
                     4/13,4/15, 
                     8/13,9/14,10/13 
  } {
    \pgfmathtruncatemacro{\ii}{\i + \offseti}
    \pgfmathtruncatemacro{\jj}{\j + \offsetj}
    \node[hav-white] at (\ii-\jj) {};
  }
  \foreach \i/\j in {6/5,6/6, 
                     8/12,8/14,9/12,9/14,10/13 
  } {
    \pgfmathtruncatemacro{\ii}{\i + \offseti}
    \pgfmathtruncatemacro{\jj}{\j + \offsetj}
    \node[hav-black] at (\ii--\jj) {};
  }
  \foreach \i/\j in {
                     4/14, 
                     8/14,9/13,9/15,10/5,10/6,10/7,10/8,10/9,10/10,10/11,10/12,10/14,11/5,11/6,11/7,11/8,11/9,11/10,11/11,11/11,11/12,11/13 
  } {
    \pgfmathtruncatemacro{\ii}{\i + \offseti}
    \pgfmathtruncatemacro{\jj}{\j + \offsetj}
    \node[hav-black] at (\ii-\jj) {};
  }

  \renewcommand{\offseti}{8}
  \renewcommand{\offsetj}{5}
  \pgfmathtruncatemacro{\ii}{4 + \offseti}
  \pgfmathtruncatemacro{\iii}{5 + \offseti}
  \pgfmathtruncatemacro{\jj}{5 + \offsetj}
  \node[hav-black] at (\iii-\jj) {};
  \node[hav-white] at (\ii--\jj) {};

  \foreach \i/\j in {5/3,6/2,6/4,7/2,7/4,8/3,8/5,9/4, 
                     5/6,5/7,5/8,5/9,5/10, 
                     3/2,3/5,4/3,4/6 
  } {
    \pgfmathtruncatemacro{\ii}{\i + \offseti}
    \pgfmathtruncatemacro{\jj}{\j + \offsetj}
    \node[hav-black] at (\ii--\jj) {};
  }
  \foreach \i/\j in {6/3,6/5,7/2,7/4,8/3,8/5,9/4,9/6,9/7,9/8,9/9,9/10,10/5,10/6,10/7,10/8,10/9,10/10, 
                     3/2,4/3,4/5,4/7 
  } {
    \pgfmathtruncatemacro{\ii}{\i + \offseti}
    \pgfmathtruncatemacro{\jj}{\j + \offsetj}
    \node[hav-black] at (\ii-\jj) {};
  }
  \foreach \i/\j in {6/3,7/3,8/4,9/5,9/6,9/7,9/8,9/9,9/10, 
                     4/7,5/5,
                     3/1,3/4,4/2 
  } {
    \pgfmathtruncatemacro{\ii}{\i + \offseti}
    \pgfmathtruncatemacro{\jj}{\j + \offsetj}
    \node[hav-white] at (\ii--\jj) {};
  }
  \foreach \i/\j in {6/4,7/3,8/4,9/5, 
                     5/7,5/9,5/10,6/6,6/7,6/8,6/9,6/10, 
                     3/1,3/3,4/2,4/4,5/3,5/4 
  } {
    \pgfmathtruncatemacro{\ii}{\i + \offseti}
    \pgfmathtruncatemacro{\jj}{\j + \offsetj}
    \node[hav-white] at (\ii-\jj) {};
  }
\end{tikzpicture}
  \caption{\gamename{Havannah} gadgets representing the \GG{} instance from Figure~\ref{fig:gg-instance}.
  Black to play first.}
  \label{fig:hav-reduction}
\end{figure}



\subsection{Properties}
We now establish a few properties that will help us prove that Player 1 wins the \GG instance if and only if White wins the corresponding \Hav position.
The main idea of the reduction is that optimal play in the \Hav position consists of a sequence of White defending a 2-threat by making a 2-threat which Black defends by making their own 2-threat and so on and so forth.
Under optimal play, the current 2-threat is in the gadget that corresponds to the token vertex in the \GG instance.

\begin{lemma}
  \label{lem:hav-force-win}
  If White plays the 2-threat in one of their $(2, 1)$-vertex gadgets (Figure~\ref{fig:hav-21-network}) and Black has no winning sequence of simple threats elsewhere on the board, then White can force a win.
\end{lemma}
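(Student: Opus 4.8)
The plan is to fix the position $\Pi$ reached just after White's 2-threat inside the $(2,1)$-vertex gadget, with carrier $\{x,a,b\}$ and exit $x$, and to show that from $\Pi$ every Black continuation loses. The only ingredients are Lemmas~\ref{lem:simple-threat}, \ref{lem:double-threat} and~\ref{lem:2-threat}, applied repeatedly, together with a direct reading of Figure~\ref{fig:hav-21-network}; no new combinatorial principle is needed.

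First I would pin down Black's legal replies. By Lemma~\ref{lem:2-threat}, from $\Pi$ Black must play (i) an immediately winning move, (ii) a simple threat, or (iii) a move in $\{x,a,b\}$. Option (i) is impossible: the gadgets lie far from the corners and the edges of the board, so only the \emph{ring} condition can come into play, and one checks on Figure~\ref{fig:hav-21-network} that in $\Pi$ no black group is a single move away from enclosing a cell. Option (ii) is controlled by the hypothesis: by Lemma~\ref{lem:simple-threat} White must answer Black's simple threat on its threat cell, that cell lies outside $\{x,a,b\}$ (the carrier cells being already accounted for in the gadget) so White's reply neither disturbs the 2-threat nor creates a black threat, and Black is back to exactly the same decision; iterating, Black plays a sequence of simple threats, and since by hypothesis no such sequence wins for Black, Black must after finitely many moves fall back on option (iii). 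So we may assume Black plays in $\{x,a,b\}$.

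Next I would run the forced line through the join. If Black plays $a$ (the case of $b$ is symmetric) White answers $x$: then $b$ is the only cell that closes the ring, so White has a simple threat, Black must parry at $b$ (Lemma~\ref{lem:simple-threat} and the hypothesis again), and White emerges with a stone strictly closer to the exit of the gadget and two dead black stones at $a,b$ --- a position at least as good for White as the main line. Hence we may assume Black plays the exit $x$. This kills the present double-threat potential, but by the layout of the join it is White's move again with a fresh 2-threat available, whose carrier sits one step nearer the exit; White plays it and the analysis of the previous paragraph repeats verbatim. Every round consumes new cells strictly closer to the exit and the gadget is finite, so after a bounded number of rounds White reaches the decisive move, at which point the two incoming arms of the join feed one and the same ring and White threatens to close it on two distinct cells: White holds a \emph{double} threat, Black is unthreatened, and White wins by Lemma~\ref{lem:double-threat}.

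The logical skeleton above is short; the real work --- and the main obstacle --- is the geometric bookkeeping on Figure~\ref{fig:hav-21-network}. One must: (a) for every move the argument labels a ``2-threat'' or a ``double threat'', name the ring(s) it threatens and check that no shorter ring, and --- given the placement --- no \emph{bridge} or \emph{fork}, is simultaneously created for either colour; (b) confirm that Black's simple-threat detours can never touch the carrier cells nor otherwise alter the local picture, so that ``the same 2-threat'' genuinely recurs; and (c) verify that the unused incoming arm of the join cannot be converted by Black into an independent ring threat that would break the forcing. Item (a) is routine hexagonal-coordinate casework; item (c), where the specific design of the join gadget rather than the generic threat lemmas carries the argument, is the most delicate and is the step I would write out in full detail.
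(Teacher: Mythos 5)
Your proof is built on a misreading of what ``the 2-threat'' in the $(2,1)$-vertex gadget is, and the resulting argument does not go through. You model White's move as a single 2-threat with a three-cell carrier $\{x,a,b\}$ and an exit $x$, and then try to win by an iterative march: Black parries in the carrier, White re-establishes a 2-threat ``one step nearer the exit'', and so on until a double threat appears. The crucial step --- that after Black occupies the exit cell a fresh 2-threat is automatically available closer to the exit --- is asserted from ``the layout of the join'' but never justified, and it is in fact the opposite of how the gadgets behave: playing in the carrier (in particular at the exit) is precisely the legitimate defence allowed by Lemma~\ref{lem:2-threat}, and in the rest of the reduction that reply simply hands the 2-threat on to the next gadget rather than losing. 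An iteration of single 2-threats along a wire never terminates in a win by itself.

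The actual content of the lemma is that move 1 in Figure~\ref{fig:hav-21-network} creates \emph{four} distinct 2-threats simultaneously, with carriers $\{k,l,m\}$, $\{k,l,n\}$, $\{k,m,n\}$ and $\{l,m,n\}$ --- the four 3-element subsets of $\{k,l,m,n\}$. Since these carriers have empty intersection, no single Black move lies in the carrier of every threat, so option (iii) of Lemma~\ref{lem:2-threat} is unavailable for at least one of the four threats no matter what Black plays; options (i) and (ii) are excluded by the hypothesis that Black has no winning sequence of simple threats. That one-line counting argument is the whole proof, and it is exactly the combinatorial feature of the join gadget that your write-up never identifies. Your treatment of Black's simple-threat detours (option (ii)) is broadly consistent with the paper's intent, but without the four-carrier overlap the case where Black answers inside $\{x,a,b\}$ is not a loss for Black, so the lemma is not established.
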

\begin{proof}
  With move 1 in Figure~\ref{fig:hav-21-network}, Player White makes four distinct 2-threats with respective carriers: $\{k,l,m\}$, $\{k,l,n\}$, $\{k,m,n\}$, and $\{l,m,n\}$.
  The intersection of all these carriers is empty, so no single Black move is inside the carrier of all threats.
  By Lemma~\ref{lem:2-threat}, the only non-losing Black moves would be those satisfying criterion 1) or 2).
  Since we assumed that Black had no winning sequence of simple threats, no move satisfies criterion 1) nor criterion 2).
  Therefore, White can force a win.
\end{proof}

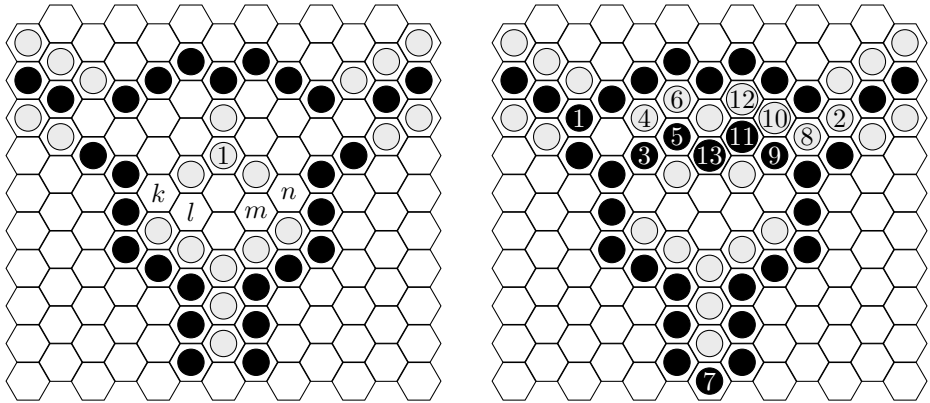
\begin{figure}
  \centering
  \subfloat[White's 2-threats network in the $(2,1)$ gadget.
    Black cannot defend against this threat locally and would need to find a winning sequence of simple threats elsewhere on the board.
  ]{
    \centering
    \hspace{-2mm}
\begin{tikzpicture}[>=stealth',scale=0.50]
  \havcoordinate{0}{6}{-1}{9}
  \foreach \i in {1,...,6} {
    \foreach \j in {0,...,9} {
      \node [hav-empty] at (\i-\j) {};
    }
  }
  \foreach \i in {0,...,6} {
    \foreach \j in {-1,...,8} {
      \node [hav-empty] at (\i--\j) {};
    }
  }

  \twoone{1}{2}{black}{white}

  \node at (2--4) {$k$};
  \node at (3-4)  {$l$};
  \node at (4-4)  {$m$};
  \node at (4--4) {$n$};

  \foreach \i in {0--7,1-7,3-0,3-1,4-0,4-1} {
    \node [hav-black] at (\i) {};
  }
  \foreach \i in {0--6,0--8,1-6,1-8,1--7,3--0,3--1} {
    \node [hav-white] at (\i) {};
  }
  \foreach \i in {6-7,6--7} {
    \node [hav-black] at (\i) {};
  }
  \foreach \i in {5--7,6-6,6-8,6--6,6--8} {
    \node [hav-white] at (\i) {};
  }

  \node [hav-white] at (3--5) {1};
\end{tikzpicture}
    \hspace{-4mm}
    \label{fig:hav-21-network}
  }
  \hfill
  \subfloat[
    Main winning line for Black when White does not exit the gadget with $x$ in Figure~\ref{fig:hav-21}, assuming Black had entered from $n_1$ in Figure~\ref{fig:hav-21}.
    Move 2 and 8 are interchangeable for White and are both losing.]{
    \centering
    \hspace{-2mm}
\begin{tikzpicture}[>=stealth',scale=0.50]
  \havcoordinate{0}{6}{-1}{9}
  \foreach \i in {1,...,6} {
    \foreach \j in {0,...,9} {
      \node [hav-empty] at (\i-\j) {};
    }
  }
  \foreach \i in {0,...,6} {
    \foreach \j in {-1,...,8} {
      \node [hav-empty] at (\i--\j) {};
    }
  }

  \twoone{1}{2}{black}{white}

  \node [hav-black] at (1--6) {$1$};
  \node [hav-white] at (5--6)  {$2$};
  \node [hav-black] at (2--5) {$3$};
  \node [hav-white] at (2--6)  {$4$};
  \node [hav-black] at (3-6) {$5$};
  \node [hav-white] at (3-7) {$6$};
  \node [hav-black] at (3---1) {$7$};
  \node [hav-white] at (5-6)   {$8$};
  \node [hav-black] at (4--5) {$9$};
  \node [hav-white] at (4--6)   {$10$};
  \node [hav-black] at (4-6)  {$11$};
  \node [hav-white] at (4-7)   {$12$};
  \node [hav-black] at (3--5) {$13$};

  \foreach \i in {0--7,1-7,3-0,3-1,4-0,4-1,6-7,6--7} {
    \node [hav-black] at (\i) {};
  }
  \foreach \i in {0--6,0--8,1-6,1-8,1--7,3--0,3--1,5--7,6-6,6-8,6--6,6--8} {
    \node [hav-white] at (\i) {};
  }
\end{tikzpicture}
    \hspace{-4mm}
    \label{fig:hav-21-threats}
  }
  \caption{Possible developments on a White $(2,1)$-vertex gadget.}
\end{figure}

This 2-threat network remains valid even if the $(2, 1)$ gadget has been entered and exited before move 1 is played.
Note also that there is no preemptive way for Black to ``break'' the network: If Black were to play in 1, then White could play the cell just beneath 1 and create a winning 2-threat network.
Similarly, move $k$ to $n$ from Black before 1 is played would let White create another 2-threat network in the same gadget.

The dual result is true for Black's $(2, 1)$ gadget.


\begin{corollary}
  \label{cor:no-waste}
  A move by a player $P$ that does maintain the existence of a sequence of winning threats for $P$ is a losing move.
\end{corollary}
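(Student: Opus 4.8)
The plan is to show directly that the opponent of $P$ wins as soon as $P$ fails to maintain a winning threat sequence, reducing everything to Lemma~\ref{lem:hav-force-win} once a free tempo has changed hands. Write $Q$ for the opponent of $P$; recall that all the gadgets and lemmas of this section hold verbatim with the two colors swapped, so nothing below depends on whether $P$ is White or Black. Assume it is $P$'s turn, $P$ has a winning sequence of threats available, and $P$ plays a move $m$ after which $P$ no longer has one. If $m$ itself completes a ring we are done, so assume it does not.

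First I would handle the case in which $Q$ has a live threat (simple or double) on the board immediately before $m$ is played. Answering that threat is part of every winning threat sequence for $P$: by Lemma~\ref{lem:simple-threat} (resp.\ Lemma~\ref{lem:double-threat}), not answering it lets $Q$ complete a ring on the next move. Hence $m$, which maintains no winning threat sequence for $P$, in particular fails to answer $Q$'s threat, so $Q$ wins immediately. This leaves the main case: after $m$ it is $Q$'s turn, $Q$ is unthreatened, and $P$ has no winning threat sequence --- in other words $P$ has just handed $Q$ a free tempo.

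In the main case I would have $Q$ play the winning 2-threat network of Figure~\ref{fig:hav-21-network} inside one of $Q$'s own $(2,1)$-vertex gadgets. By the remark following Lemma~\ref{lem:hav-force-win}, that network is available even if the gadget has already been entered and exited, and it cannot have been broken pre-emptively by $P$; moreover, since $P$ has no winning threat sequence at all, $P$ certainly has no winning sequence of \emph{simple} threats elsewhere on the board. The hypotheses of Lemma~\ref{lem:hav-force-win} (colors swapped if $P$ is Black) are therefore met, and $Q$ forces a win. Hence $m$ is losing in every case.

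The step I expect to be the real obstacle is making the last paragraph rigorous: one must guarantee that, for every position of the reduced \Hav{} instance $\phi(G, v_0)$ that can arise along such a line, the side that has just obtained the free tempo really does own a gadget in which a winning 2-threat network in the sense of Lemma~\ref{lem:hav-force-win} can be launched, still in a usable state. This is exactly where the construction of $\phi$ and the simplifications of Section~\ref{sec:gg} come in: the simplified instance is arranged --- adding a disconnected component if necessary --- so that each part of the bipartition carries a spare choice vertex, which gives each of White and Black a gadget that is never needed for routing the token and can therefore always serve as the fallback source of such a network.
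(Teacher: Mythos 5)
Your proposal is correct and follows essentially the same route as the paper: the paper's proof likewise observes that the simplified \GG{} instance guarantees each player a spare gadget in which the winning 2-threat network of Lemma~\ref{lem:hav-force-win} can be launched once the opponent forfeits their threat sequence. Your additional preliminary case (the opponent already having a live threat) and the explicit check that $P$ then has no winning sequence of simple threats are details the paper leaves implicit, but they do not change the argument.
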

\begin{proof}
  Because we are reducing from a \emph{simplified} \GG instance, we know that each player has at least one $(2, 1)$-vertex.
  Suppose that a player makes a move that leaves the board without a potential sequence of winning threat for them, then the opponent can make the winning 2-threat network in one of their $(2, 1)$-vertices and force a win by Lemma~\ref{lem:hav-force-win}.
\end{proof}

Corollary~\ref{cor:no-waste} ensures that not only should the players defend against the opponent's 2-threats, but that their defense should create 2-threats for the opponent to defend.


Before showing that all the gadgets behave as expected, let us notice that the exchange $3; 4; 5; 6$ in Figure~\ref{fig:hav-21-threats} can be played in any $(2, 1)$ vertex previously entered from $n_1$, using Figure~\ref{fig:hav-21}'s notation.
This exchange (or the symmetrical one $9; 10; 11; 12$ when the gadget has been entered from $n_2$) does not harm Black and the White moves are forced, so we will assume that the exchange always takes place as soon as available, even if White correctly exits the gadget in $x$.


\begin{lemma}
  \label{lem:hav-21}
  If Black enters a White $(2,1)$-vertex gadget via $n_1$ or $n_2$ (Figure~\ref{fig:hav-21}), and White does not have a threat, then White is forced to play in the exit point $x$.
\end{lemma}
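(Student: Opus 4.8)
The plan is to squeeze White's reply first with Lemma~\ref{lem:2-threat} and then finish with the forced refutation already drawn in Figure~\ref{fig:hav-21-threats}. Reading Figure~\ref{fig:hav-21}, Black's entry move $n_1$ is a 2-threat: once $n_1$ is on the board, Black completes a ring by playing the pair $\{x, n_2\}$ and also by playing the pair $\{x, n_2'\}$, so the carrier of this 2-threat is $\{x, n_2, n_2'\}$ with exit $x$. By hypothesis White has no winning threat, hence in particular no simple threat and no immediately winning move; Lemma~\ref{lem:2-threat} therefore leaves White exactly the three carrier cells $x$, $n_2$, $n_2'$ as candidates for a non-losing reply. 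Since the $(2,1)$-gadget is invariant under the reflection that swaps its two input wires (and hence exchanges $n_1$ with $n_2$ while fixing $x$), it is enough to treat entry from $n_1$; entry from $n_2$ then follows verbatim, with the exchange $3;4;5;6$ replaced by its mirror $9;10;11;12$, as indicated in the paragraph preceding the lemma.

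It then remains to kill the replies $n_2$ and $n_2'$, and for this I would simply replay the line of Figure~\ref{fig:hav-21-threats}. After White answers $n_2$ (move~$2$), Black plays the exchange $3;4;5;6$, in which every White move is the unique defence of a simple threat by Lemma~\ref{lem:simple-threat}; Black then occupies the exit $x$ itself (move~$7$), which forces White's move~$8$ onto the last free cell of $\{n_2,n_2'\}$ and then the mirror exchange $9;10;11;12$; finally Black plays move~$13$, which is precisely the centre of the winning 2-threat network of Figure~\ref{fig:hav-21-network}, so Black wins by Lemma~\ref{lem:hav-force-win}. The hypothesis that White has no winning threat is what keeps White from escaping any of these forced exchanges by interposing a counter-threat, and it is also what licenses the closing appeal to Lemma~\ref{lem:hav-force-win}. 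Because the two bad carrier cells are interchangeable for White (moves~$2$ and~$8$ in the figure), this single line refutes both; combined with the first paragraph, every White move other than $x$ is immediately losing, which is the assertion.

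The step I expect to be the real work is the local verification inside Figure~\ref{fig:hav-21-threats}: one must check, move by move, that each Black move truly is a simple threat (and that move~$13$ indeed instantiates the full network of Figure~\ref{fig:hav-21-network}), so that White's reply is genuinely forced by Lemma~\ref{lem:simple-threat}, and conversely that none of these Black moves accidentally closes a forbidden self-ring or concedes a ring to White; this amounts to tracking the empty-cell stone pattern of the gadget around $n_1, n_1', x, n_2, n_2'$. The logical skeleton -- Lemma~\ref{lem:2-threat} confines White to the carrier, the figure eliminates the two non-exit carrier cells -- is short, so essentially all the effort is in confirming that the displayed ring-threat diagram is accurate and exhaustive up to the stated interchangeability. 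A minor point worth stating explicitly is that the hypothesis ``White does not have a threat'' must be read, consistently with Corollary~\ref{cor:no-waste}, as ``White has no winning sequence of threats'', which is exactly what is needed both to discard options~1 and~2 of Lemma~\ref{lem:2-threat} at the first move and to rule out any threat-based escape further down the forced line.
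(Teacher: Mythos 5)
Your proof follows the paper's argument essentially verbatim: Black's entry is a 2-threat with carrier $\{x,n_2,n_2'\}$, Lemma~\ref{lem:2-threat} confines White to the carrier, the forced line of Figure~\ref{fig:hav-21-threats} (with moves 2 and 8 interchangeable) eliminates $n_2$ and $n_2'$, and symmetry handles entry via $n_2$. The only cosmetic difference is that you attribute Black's win after move 13 to the network of Lemma~\ref{lem:hav-force-win}, where the paper simply calls the whole line a winning sequence of simple threats; the substance is the same.
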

\begin{proof}
  When Black plays in $n_1$, it creates a 2-threat with carrier $\{x, n_2, n_2'\}$.
  Suppose that White plays in $n_2$, then Black has a winning sequence of simple threats as displayed in Figure~\ref{fig:hav-21-threats}.
  If instead White plays in $n_2'$, then the same winning sequence is available for Black, simply swapping White move 2 for White move 8 in Figure~\ref{fig:hav-21-threats}.
  Invoking Lemma~\ref{lem:2-threat}, we know that White needs to play in the carrier, so we can conclude that White needs to play in $x$.

  The proof when Black plays in $n_2$ is symmetrical.
\end{proof}

\begin{lemma}
\label{lem:hav-11}
  If Black enters a White $(1, 1)$-vertex gadget in $n$ (Figure~\ref{fig:hav-11}), and White does not have a threat, then White is forced to play in the exit point $x$.
\end{lemma}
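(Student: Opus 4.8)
The proof will follow the template of Lemma~\ref{lem:hav-21}. The plan is to observe that Black's move $n$ is a 2-threat whose carrier is the three-cell set $\{x,a,b\}$ (the very cells labelled in Figure~\ref{fig:hav-11}, for which it was already noted that playing $x$ and then $a$, or $x$ and then $b$, completes a ring), and then to eliminate every cell of that carrier except the exit $x$.

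First I would invoke Lemma~\ref{lem:2-threat}: since $n$ is a 2-threat and, by hypothesis, White has neither an immediate winning move nor a simple threat, every White reply is losing unless it lies in the carrier $\{x,a,b\}$. It then remains to rule out $a$ and $b$; we treat $a$, the case of $b$ being analogous. Suppose White answers $n$ with $a$. The ring Black was threatening can now be completed only through $b$, so Black's move $x$ becomes a simple threat with the unique defence $b$; by Lemma~\ref{lem:simple-threat} White is forced to reply $b$. After the exchange $n;a;x;b$ it is Black to move, the exit cell $x$ is occupied, and White has spent the two stones $a$ and $b$ entirely inside the gadget without advancing the token or building any threat of their own; hence White has failed to maintain a winning sequence of threats and loses by Corollary~\ref{cor:no-waste}. (Alternatively, exactly as was done for the $(2,1)$ gadget in Figure~\ref{fig:hav-21-threats}, one can exhibit the explicit winning line of simple threats for Black.) Consequently the only non-losing move in the carrier, and therefore White's only non-losing move at all, is $x$.

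The delicate point, just as in the $(2,1)$ case, is the local verification underlying the elimination of $a$ and $b$: one has to check, against the concrete hexagonal configuration of Figure~\ref{fig:hav-11}, that after $n;a;x;b$ Black really has no further obligation inside the gadget, that the pair of White stones on $a$ and $b$ cannot be turned into a ring threat or into a connection useful to White, and hence that Corollary~\ref{cor:no-waste} genuinely applies (or, in the alternative route, that the claimed sequence of Black simple threats is forcing and ends in a ring). This is a finite case analysis on a handful of cells; every other ingredient of the argument is a direct citation of the threat lemmas of this section.
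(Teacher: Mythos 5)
Your proposal is correct and follows essentially the same route as the paper's proof: identify Black's $n$ as a 2-threat with carrier $\{x,a,b\}$, invoke Lemma~\ref{lem:2-threat} to confine White to the carrier, and eliminate $a$ and $b$ via Corollary~\ref{cor:no-waste} because those replies create no counter-threat for White. The only difference is cosmetic — you spell out the forcing continuation $n;a;x;b$ before applying Corollary~\ref{cor:no-waste}, whereas the paper applies it to the move $a$ (or $b$) directly.
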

\begin{proof}
  When Black plays in $n_1$, it creates a 2-threat with carrier $\{x, a, b\}$.
  White playing in $a$ or in $b$ does not create any new threat for Black, so by Corollary~\ref{cor:no-waste} these moves are losing.
  Invoking Lemma~\ref{lem:2-threat}, we know that White needs to play in the carrier, so we can conclude that White needs to play in $x$.
\end{proof}

\begin{lemma}
  \label{lem:hav-12}
  If Black enters a White $(1,2)$-vertex gadget in $n$ (Figure~\ref{fig:hav-12}), and neither player has a threat available elsewhere on the board, then optimal play dictates that at most two sequences of moves are possible, $c_1; c_2; x_1$ and $c_2; c_1; x_2$, up to White's choice.
\end{lemma}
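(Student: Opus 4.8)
The plan is to run a short local forcing analysis entirely inside the gadget of Figure~\ref{fig:hav-12}, using Lemma~\ref{lem:2-threat} and Corollary~\ref{cor:no-waste} at each step exactly as in the proof of Lemma~\ref{lem:hav-11}, and then to invoke the left--right symmetry of the gadget to cover the second branch.

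\textbf{Step 1 (White must answer in $c_1$ or $c_2$).} First I would read off from Figure~\ref{fig:hav-12} that Black's move $n$ is a 2-threat whose carrier is the triple consisting of $c_1$, $c_2$, and a third cell $z$ (one of $e$, $f$). Black has just moved, so Black is not threatened; White has no immediate win, and by hypothesis no threat elsewhere on the board. Checking the figure shows White also has no simple threat available inside the gadget, so by Lemma~\ref{lem:2-threat} White's only non-losing replies lie in the carrier $\{c_1, c_2, z\}$. Playing $z$ produces no counter-threat for White, and since by hypothesis White has no threat elsewhere, Corollary~\ref{cor:no-waste} makes $z$ losing. Hence White plays $c_1$ or $c_2$.

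\textbf{Step 2 (the $c_1$ branch, then its mirror).} Suppose White plays $c_1$. From the figure, $c_1$ lies in the carrier of $n$'s threat (so by Lemma~\ref{lem:2-threat} it does not immediately succumb) and is itself a 2-threat, whose carrier contains $c_2$. Re-running the argument of Step~1 with the colours exchanged: Black has no immediate win and (inside the gadget or, by hypothesis, elsewhere) no simple threat, so Black must reply inside this new carrier; every carrier cell other than $c_2$ fails to create a counter-threat for Black and is therefore losing by Corollary~\ref{cor:no-waste}. So Black is forced to play $c_2$. One more iteration: Black's $c_2$ is a 2-threat whose carrier has exit $x_1$ and whose two other cells ($a_1$ and $b_1$ in the notation of Figure~\ref{fig:hav-12}) create no counter-threat for White, hence are losing by Corollary~\ref{cor:no-waste}; so White is forced to play $x_1$ and exit the gadget. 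This shows the only move sequence played inside the gadget in this branch is $c_1; c_2; x_1$. The branch in which White instead plays $c_2$ is entirely parallel --- with $c_1$ in the role of $c_2$, $x_2$ in the role of $x_1$, and $a_2, b_2$ in the roles of $a_1, b_1$ --- and yields $c_2; c_1; x_2$. Since which of $c_1$, $c_2$ White plays in Step~1 is White's free choice, these are the only two sequences compatible with optimal play, as claimed.

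\textbf{Main obstacle.} The only real work is the figure inspection underpinning each forcing step: confirming that $n$, $c_1$, and (symmetrically) $c_2$ are genuine 2-threats with exactly the stated carriers, that the three ``off-chain'' carrier cells ($z$; the non-$c_2$ cells of $c_1$'s carrier; and $a_1, b_1$) really produce no counter-threat, and that at no point does either player have a disruptive simple threat inside the gadget. This is a finite check of ring-completion patterns, no harder than the analogous checks for the $(1,1)$- and $(2,1)$-gadgets (Lemmas~\ref{lem:hav-11} and~\ref{lem:hav-21}); the hypothesis that neither player has a threat elsewhere is precisely what allows Lemma~\ref{lem:2-threat} and Corollary~\ref{cor:no-waste} to be applied at each step without reasoning about the rest of the board. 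A minor point to record is that the analysis is unaffected if the gadget has already been entered and exited earlier (as with the $(2,1)$-gadget), since the relevant threat patterns involve only the cells shown.
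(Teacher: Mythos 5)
Your two forced branches ($c_1;c_2;x_1$ and its mirror $c_2;c_1;x_2$) are argued exactly as in the paper: White's $c_1$ is a 2-threat with carrier $\{c_2,e,f\}$, the non-$c_2$ carrier cells create no counter-threat and die by Corollary~\ref{cor:no-waste}, Lemma~\ref{lem:2-threat} confines Black to the carrier, hence $c_2$ is forced, and one more iteration with carrier $\{x_1,a_1,b_1\}$ forces $x_1$; the paper then dismisses the mirror branch with the same one-line appeal to symmetry you make. So Step~2 is the paper's proof.

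The divergence is Step~1, and that is where your argument is thinnest. To show White must answer $c_1$ or $c_2$ you assert that Black's $n$ is a 2-threat with carrier exactly $\{c_1,c_2,z\}$ for some $z\in\{e,f\}$, confine White to that carrier by Lemma~\ref{lem:2-threat}, and kill $z$ by Corollary~\ref{cor:no-waste}. The paper never identifies the carrier of $n$ at all; instead it refutes any White deviation constructively: if White plays a non-threat elsewhere, Black plays $c_1$ (forcing White $c_2$) and then $x_2$, creating a double threat at $a_2$ and $b_2$. Two things about that line should give you pause. First, it takes Black \emph{two} further stones to manufacture a double threat, and the double threat lands on $a_2,b_2$ --- cells your three-cell carrier argument never touches --- which suggests the geometry of $n$'s threat is not as tidy as ``exit plus two ring cells among $c_1,c_2,e,f$.'' Second, even granting that $c_1,c_2$ lie in $n$'s carrier (which is forced if $c_1,c_2$ are to be non-losing replies under Lemma~\ref{lem:2-threat}), your Step~1 still owes the verification that the third carrier cell creates no counter-threat, and you have only promised that inspection rather than performed it. The safe fix is to replace Step~1 by the paper's explicit refutation: exhibit the Black continuation $c_1$; $c_2$; $x_2$ and observe that the final move is a double threat, so Lemma~\ref{lem:double-threat} finishes. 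With that substitution your write-up coincides with the paper's proof.
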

\begin{proof}
  Let us first show that after Black $n$ and White $c_1$, the sequence $c_2; x_1$ is forced.
  White's $c_1$ creates a 2-threat with carrier $\{c_2, e, f\}$.
  Black playing in $e$ or in $f$ does not create any new threat for White, so by Corollary~\ref{cor:no-waste} these moves are losing.
  Invoking Lemma~\ref{lem:2-threat}, we know that Black needs to play in the carrier, so we can conclude that Black needs to play in $c_2$.
  Black's $c_2$ creates a 2-threat with carrier $\{x_1, a_1, b_1\}$.
  White's playing $a_1$ or $b_1$ does not create any new threat for Black, so by the same reasoning, we obtain that White is forced to play $x_1$.

  The same style of argumentation shows that if White starts with $c_2$, then Black needs to reply $c_1$ and $x_2$ is forced for White.

  Now suppose that after Black $n$, White plays a non-threat elsewhere on the board.
  Black plays in $c_1$, forcing White $c_2$.
  Black then plays $x_2$ creating a double threat in $a_2$ and $b_2$ and White has lost.
\end{proof}

\begin{lemma}
  \label{lem:hav-02}
  When Black initiates in a White $(0,2)$-vertex gadget in $n$ (Figure~\ref{fig:hav-02}), if neither player has a threat available elsewhere on the board, then optimal play dictates that at most two sequences of moves are possible, $c_1; c_2; x_1$ and $c_2; c_1; x_2$, up to White's choice.
\end{lemma}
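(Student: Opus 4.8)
The plan is to mirror the proof of Lemma~\ref{lem:hav-12} almost verbatim, since the $(0,2)$-vertex gadget of Figure~\ref{fig:hav-02} is, from the entry cell $n$ onward, a relabelled copy of the \emph{split} gadget of Figure~\ref{fig:hav-12}, the only difference being that the $(0,2)$ gadget carries no incoming wire, which is irrelevant once Black has committed to $n$. So I would first argue that after Black plays $n$ and White answers $c_1$, the continuation $c_2; x_1$ is forced. White's $c_1$ is a 2-threat with carrier $\{c_2, e, f\}$ (a claim to be read off the diagram); since playing $e$ or $f$ creates no new threat for White, Corollary~\ref{cor:no-waste} rules these out, and Lemma~\ref{lem:2-threat} then forces Black into the carrier, i.e.\ onto $c_2$. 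Black's $c_2$ is in turn a 2-threat with carrier $\{x_1, a_1, b_1\}$, and the same argument (neither $a_1$ nor $b_1$ gives Black a threat, so Corollary~\ref{cor:no-waste} together with Lemma~\ref{lem:2-threat} applies) forces White onto $x_1$. By the symmetry of the gadget, if White instead answers $n$ with $c_2$, then Black is forced to reply $c_1$ and White is then forced onto $x_2$.

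Next I would rule out every other White reply to $n$. If White answers $n$ with a move that is not a threat anywhere on the board, then Black replies $c_1$ and, by the same kind of 2-threat analysis with the colours in the ``active'' role swapped, White is forced to $c_2$; then Black plays $x_2$ and obtains a double threat at $a_2$ and $b_2$. Since White is assumed to have no threat of their own, Lemma~\ref{lem:double-threat} makes this winning for Black. If White answers $n$ with a threat, that is excluded by the hypothesis that neither player has a threat available elsewhere on the board, together with the fact — again to be checked on the diagram — that White has no \emph{local} threat in this gadget other than the $c_1$ and $c_2$ moves already treated. Combining the cases, the only lines consistent with optimal play are $c_1; c_2; x_1$ and $c_2; c_1; x_2$, the choice between them being White's, which is exactly the statement.

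The main obstacle is the purely diagrammatic verification underpinning every ``creates a 2-threat with carrier \dots'' and ``creates no new threat'' assertion: one must confirm from Figure~\ref{fig:hav-02} that $c_1$ and $c_2$ really are 2-threats with the stated carriers $\{c_2,e,f\}$ and $\{x_1,a_1,b_1\}$ (and their mirror images), that the auxiliary cells $a_i$, $b_i$, $e$, $f$ are genuine dead ends in the sense that playing them neither wins nor threatens, and that no stray ring threat hides in the surrounding stones. Since the gadget is finite and essentially a copy of the one in Lemma~\ref{lem:hav-12}, this is routine but is where the real content lies. A secondary point worth an explicit sentence is why the special status of this gadget as the \GG{} starting vertex (in-degree $0$) does not disturb the argument: in the simulated game the chain of alternating 2-threats still enters it through $n$ played by Black, just as for any other White vertex gadget, and Corollary~\ref{cor:no-waste} forbids White from ``cashing in'' the tempo that the absent incoming edge might seem to offer.
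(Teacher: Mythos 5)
Your proposal is correct and matches the paper's approach exactly: the paper's own proof of this lemma is a one-line deferral stating it is ``virtually identical'' to that of Lemma~\ref{lem:hav-12}, whose argument (the 2-threat with carrier $\{c_2,e,f\}$ after White's $c_1$, Corollary~\ref{cor:no-waste} ruling out $e$ and $f$, Lemma~\ref{lem:2-threat} forcing $c_2$, then the carrier $\{x_1,a_1,b_1\}$ forcing $x_1$, the symmetric $c_2;c_1;x_2$ line, and the punishment of a White non-threat elsewhere by $c_1; c_2; x_2$ with a double threat at $a_2,b_2$) is precisely what you reproduce. Your closing remarks on the diagrammatic verification and on why the absent incoming edge is immaterial are sensible additions but not required by the paper's argument.
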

\begin{proof}
  The proof is virtually identical to that of Lemma~\ref{lem:hav-12}.
\end{proof}

From now on, whenever we consider choice gadgets, we simply write that after Black enters in $n$, White plays in $x_1$ or $x_2$ to exit.
That is, we omit the intermediate sequences $c_1; c_2$ or $c_2; c_1$ and leave them implicit.

\begin{lemma}
  \label{lem:hav-reentering}
  If Black has no threat, then Black re-entering a White $(2,1)$ gadget after it was visited before is a losing move (Figure~\ref{fig:hav-21-reentering}).
\end{lemma}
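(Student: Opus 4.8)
The plan is to show that re-entering an already–spent $(2,1)$ gadget is a ``dead'' move for Black, so that White can immediately set up the 2-threat network of Figure~\ref{fig:hav-21-network} in one of their own $(2,1)$ gadgets — which exists because $(G,v_0)$ is \emph{simplified} — and force a win by Lemma~\ref{lem:hav-force-win} (exactly as in the proof of Corollary~\ref{cor:no-waste}).

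First I would describe the configuration of Figure~\ref{fig:hav-21-reentering}. By Lemma~\ref{lem:hav-21} the gadget was first entered at one of $n_1, n_2$ — say $n_1$ — and White was forced to exit at $x$; moreover, by the standing convention fixed before Lemma~\ref{lem:hav-21}, the exchange $3;4;5;6$ of Figure~\ref{fig:hav-21-threats} has already been carried out. Hence the cells $n_1$, $n_1'$, $x$, and the four cells of that exchange are all occupied. Since $n_1$ is taken, Black's only way to re-enter is to play $n_2$ (the symmetric case where the first visit used $n_2$ is handled the same way).

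The heart of the argument is the claim, to be checked directly on Figure~\ref{fig:hav-21-reentering}, that playing $n_2$ in the spent gadget gives Black nothing: it is not immediately winning (Black has no threat by hypothesis), and it creates no ring threat for Black — the potential 2-threat with carrier $\{x,n_1,n_1'\}$, the mirror image of the one produced by $n_1$ in Lemma~\ref{lem:hav-21}, is inert because all three of its cells, and in particular its exit $x$, are already occupied, and one verifies that no new ring threat arises from the interaction of the $n_2$ stone with the stones left by the first traversal. At worst the re-entry is a simple threat, in which case White answers on the threat cell (Lemma~\ref{lem:simple-threat}) without creating anything for Black. Either way, once Black has played $n_2$ the board carries no potential sequence of winning threats for Black, so by Corollary~\ref{cor:no-waste} the move is losing; equivalently, if Black still owed a reply to a pending White 2-threat, $n_2$ is outside its carrier and is none of the exceptions of Lemma~\ref{lem:2-threat}.

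The main obstacle is precisely this case check: one must go through Figure~\ref{fig:hav-21-reentering} and confirm that the $n_2$ stone, together with \emph{every} stone already present from the first visit and the forced exchange, admits no one-move ring completion and no double threat, so that the gadget is genuinely exhausted. The remainder then follows mechanically from the earlier lemmas and the existence of a White $(2,1)$ gadget guaranteed by the \emph{simplified} form of the instance; one should also recall the robustness remark after Lemma~\ref{lem:hav-force-win}, which ensures that White can still build its killing network in a $(2,1)$ gadget that has itself been traversed.
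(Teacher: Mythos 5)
Your central claim --- that re-entering the spent $(2,1)$ gadget is a ``dead'' move that ``creates no ring threat for Black'' so that ``the board carries no potential sequence of winning threats for Black'' --- is the opposite of what actually happens, and it is the point on which the paper's own two-line proof turns. As the paper notes, Black's re-entry at $n_2$ \emph{does} create a new threat: the winning sequence of simple threats $a;b;c;d;2$ in Figure~\ref{fig:hav-21-reentering} (the tail of the losing line of Figure~\ref{fig:hav-21-threats}). Because such a sequence now exists, your appeal to Corollary~\ref{cor:no-waste} breaks down: that corollary is proved by having White erect the 2-threat network in some other $(2,1)$ gadget and invoke Lemma~\ref{lem:hav-force-win}, whose hypothesis is precisely that Black has \emph{no} winning sequence of simple threats elsewhere on the board. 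Right after the re-entry that hypothesis fails, so White cannot simply ignore the move and play elsewhere; your ``at worst it is a simple threat, answer it by Lemma~\ref{lem:simple-threat}'' fallback also does not cover a multi-move forcing sequence.

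The correct argument, and the one the paper gives, is local and much shorter: White answers the re-entry by playing cell $2$ of Figure~\ref{fig:hav-21-reentering}, which is simultaneously (i) the unique way to kill the sequence $a;b;c;d;2$ and (ii) the move $1$ of Figure~\ref{fig:hav-21-network}, i.e.\ it creates the four-fold 2-threat network of Lemma~\ref{lem:hav-force-win} \emph{inside the same gadget}. Since by hypothesis Black has no threat, Lemma~\ref{lem:hav-force-win} then gives White a forced win. So the re-entry is losing not because it is inert, but because its only by-product is a threat whose unique refutation is itself a winning move for White. You should replace the ``gadget is exhausted'' case check with the identification of the sequence $a;b;c;d;2$ and the observation that White's defensive reply doubles as the network-creating move.
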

\begin{proof}
  When Black re-enters the gadget, the only new threat it creates is the winning sequence for Black starting with $a; b; c; d; 2$.
  This sequence can be defended against by White playing 2 in this gadget and creating the network of 2-threats seen in Lemma~\ref{lem:hav-force-win}.
\end{proof}

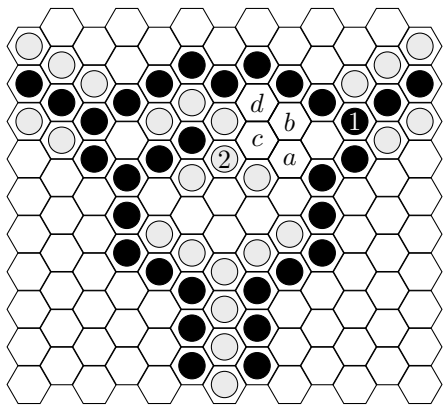
\begin{figure}
  \centering
  \hspace{-2mm}
\begin{tikzpicture}[>=stealth',scale=0.50]
  \havcoordinate{0}{6}{-1}{9}
  \foreach \i in {1,...,6} {
    \foreach \j in {0,...,9} {
      \node [hav-empty] at (\i-\j) {};
    }
  }
  \foreach \i in {0,...,6} {
    \foreach \j in {-1,...,8} {
      \node [hav-empty] at (\i--\j) {};
    }
  }

  \twoone{1}{2}{black}{white}

  \node [hav-black] at (1--6)  {};
  \node [hav-white] at (3---1) {};
  \node [hav-black] at (2--5)  {};
  \node [hav-white] at (2--6)  {};
  \node [hav-black] at (3-6)   {};
  \node [hav-white] at (3-7)   {};

  \node [hav-black] at (5--6) {$1$};
  \node [hav-empty] at (4--5) {$a$};
  \node [hav-empty] at (4--6)   {$b$};
  \node [hav-empty] at (4-6)  {$c$};
  \node [hav-empty] at (4-7)   {$d$};
  \node [hav-white] at (3--5) {$2$};

  \foreach \i in {0--7,1-7,3-0,3-1,4-0,4-1} {
    \node [hav-black] at (\i) {};
  }
  \foreach \i in {0--6,0--8,1-6,1-8,1--7,3--0,3--1} {
    \node [hav-white] at (\i) {};
  }
  \foreach \i in {6-7,6--7} {
    \node [hav-black] at (\i) {};
  }
  \foreach \i in {5--7,6-6,6-8,6--6,6--8} {
    \node [hav-white] at (\i) {};
  }

\end{tikzpicture}
  \hspace{-4mm}
  \caption{Reentering a White $(2,1)$-vertex gadget.
    The moves $n_1$ and $x$ from Figure~\ref{fig:hav-21} were played earlier, and Black reenters the gadget with $n_2$ as move 1.
    Move 2 wins the game for White.}
  \label{fig:hav-21-reentering}
\end{figure}

\begin{theorem}
  \label{hav-main-th}
  \gamename{Havannah} is \pspace{}-complete.
\end{theorem}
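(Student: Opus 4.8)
\section*{Proof of Theorem~\ref{hav-main-th}}

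The plan is to combine the membership argument from the preliminary results with a reduction $\phi$ from \emph{simplified} \GG{}, using the gadgets and properties established above. Membership in \pspace{} is immediate: a game of \Hav{} lasts at most as many moves as there are cells of the board, so a depth-first minimax search runs in polynomial space. For hardness, given a simplified instance $(G, v_0)$ of \GG{} --- which by Section~\ref{sec:gg} we may assume is bipartite, planar, of degree at most $3$, has $v_0$ as a $(0,2)$-vertex, contains at least one $(1,2)$-vertex in each part of the bipartition, and otherwise only $(1,1)$-, $(1,2)$-, and $(2,1)$-vertices --- we build the \Hav{} position $\phi(G)$ as in Figure~\ref{fig:hav-reduction}: each vertex of $V_1$ is replaced by the matching White gadget of Figure~\ref{fig:hav-gadgets}, each vertex of $V_2$ by the corresponding Black gadget, and the planar edge set is realised by wires and curves, on a board large enough that all gadgets lie far from the corners and edges, so that only the ring winning condition is ever relevant and every ring threat occurring in a gadget is ``local''.

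Next I would prove the key invariant: under optimal play the position always carries exactly one ``live'' 2-threat, located in the gadget encoding the vertex currently holding the \GG{} token, and it is the turn of the player who, in \GG{}, must move the token out of that vertex. The base case is Black entering the $(0,2)$-starting gadget via $n$; the inductive step is precisely the behaviour proved in Lemmas~\ref{lem:hav-21}, \ref{lem:hav-11}, \ref{lem:hav-12}, and~\ref{lem:hav-02}: the player to move is forced --- by Lemma~\ref{lem:2-threat} together with Corollary~\ref{cor:no-waste} --- to defend by exiting through the unique exit $x$ of a non-choice gadget, or through one of $x_1, x_2$ of a choice gadget; crucially, in a choice gadget it is the owner of that vertex who picks the exit, which corresponds exactly to that player picking a successor of the token in \GG{}. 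Exiting a gadget through an exit point simultaneously creates, via the connecting wire, the entry 2-threat of the next gadget, so the token advances along an edge of $G$. Corollary~\ref{cor:no-waste} guarantees that neither player may ``waste'' a move outside the current chain of forced threats, and Lemma~\ref{lem:hav-reentering} together with its Black-coloured mirror guarantees that steering the token into an already-visited $(2,1)$ gadget is immediately losing --- the \Hav{} counterpart of the rule that a \GG{} token may not revisit a vertex (note that in a simplified instance $(2,1)$-vertices are the only ones with more than one incoming wire, so these are the only gadgets that can be re-entered).

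With the invariant in hand, the simulation is faithful in both directions. If Player~$1$ has a winning strategy in $(G, v_0)$, White plays in $\phi(G)$ by always exiting the current gadget through the exit dictated by that strategy; eventually the \GG{} token reaches a vertex all of whose out-neighbours are already visited, i.e.\ Black is to move but every Black reply steers the token into a visited gadget and loses by Lemma~\ref{lem:hav-reentering}, while any Black deviation from the forced line along the way loses by Corollary~\ref{cor:no-waste} or by a direct ring. Conversely, a winning White strategy in $\phi(G)$ restricts --- by the same lemmas, which show every non-forced move is losing --- to one making only exit moves, and reading off the sequence of exits yields a winning strategy for Player~$1$ in $(G, v_0)$. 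Hence Player~$1$ wins $(G, v_0)$ if and only if White wins $\phi(G)$. Since $\phi$ is clearly polynomial-time computable and the above restriction of \GG{} is \pspace-hard~\cite{LichtensteinS1980}, \Hav{} is \pspace-hard, and with the membership observed above it is \pspace-complete.

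The delicate point --- and the one I would spend most care on --- is the completeness of the gadget case analysis: one must check that in every gadget, and after any prefix of forced exchanges, the \emph{only} non-losing reply is the intended exit move, ruling out not merely the obviously bad moves but also cross-gadget interactions, e.g.\ whether a player could profitably ignore the current 2-threat by launching a faster threat sequence elsewhere, or by pre-emptively sabotaging a future $(2,1)$-network. The remarks following Lemma~\ref{lem:hav-force-win} --- that the $(2,1)$ 2-threat network cannot be broken pre-emptively and survives an earlier traversal of the gadget --- together with the ``far from the board's boundary'' hypothesis that neutralises the bridge and fork conditions, are what close these loopholes; I would fold them explicitly into the statement of the invariant and verify them gadget by gadget.
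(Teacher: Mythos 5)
Your proposal is correct and follows essentially the same route as the paper's proof: \pspace{} membership via bounded game length, hardness via the gadget reduction $\phi$ from simplified \GG{}, the single-2-threat invariant maintained by Lemmas~\ref{lem:hav-21}--\ref{lem:hav-02} together with Corollary~\ref{cor:no-waste}, and termination via forced re-entry into a $(2,1)$ gadget (Lemma~\ref{lem:hav-reentering}). Your closing remark about verifying the absence of cross-gadget interactions is exactly the role the paper assigns to the ``no threat elsewhere'' invariant and to the discussion following Lemma~\ref{lem:hav-force-win}.
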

\begin{proof}
  We have already mentioned that \Hav{} $\in$ \pspace{} and the gadgets presented in this Section constitute a polynomial time reduction from a \pspace-complete problem.
  We shall now prove that the reduction is sound, that is: player 1 is winning in $(G,v_0)$ if and only if White is winning in $\phi((G,v_0))$ when Black starts.

  The simulation of \GG on \Hav proceeds as follows.
  The initial position has a single 2-threat, in the unique $(0, 2)$ gadget and no potential simple threats otherwise.
  Optimal play according to the prescription of Lemmas~\ref{lem:hav-21} through~\ref{lem:hav-02}, that is, entering a gadget with $n$ (or $n_1$ or $n_2$) and exiting it with $x$ (or $x_1$ or $x_2$) maintains the invariant that the position has a single 2-threat up until a $(2, 1)$ gadget is re-entered, at which point the game ends.

  Black's first move is forced and is entering White's $(0, 2)$ gadget with $n$.
  On the one hand, the \Hav gadget containing the current 2-threat, \emph{i.e.}, the one having most recently been entered, corresponds to the token vertex in \GG.
  On the other hand, a player re-entering a $(2, 1)$ gadget in \Hav loses, so it cannot be part of a winning strategy for that player.

  As a result, we can map any \Hav winning strategy for White to a \GG winning strategy for Player 1 and any \Hav winning strategy for Black to a \GG winning strategy for Player 2.
  Lemmas~\ref{lem:hav-21} through~\ref{lem:hav-02} together with the ``no threat elsewhere invariant'' guarantee that exiting the most recently entered gadget is the optimal \Hav play for both players.
  Since the number of vertices is finite, one player will be forced to re-enter a $(2, 1)$ gadget under optimal play.
  This means that the theoretical outcome of the translated instance is not a draw, so one player has indeed a winning strategy in \Hav and finding it solves \GG.
\end{proof}

\section{Parameterized Complexity of Short Generalized Hex}
\label{sec:parameterized-complexity-hex}
\paragraph{Generalized Hex}
\gamename{Shannon's vertex switching game} is more commonly called \GHex since it generalizes \hex played on a hexagonal tiling to any (potentially non planar) graphs.
A \GHex-instance is a graph $G$ with two specified vertices $s$ and $t$, each containing a white pebble.
White (Player~1) and Black (Player~2) alternate in playing pebbles of their color in unoccupied vertices of the graph.
White wins if they manage to create an $(s,t)$-path where each vertex contains a white pebble.
Black wins if they prevent White from doing so.
When a black pebble is played in a vertex, one can equivalently imagine that this vertex has been removed from the graph.
Any edge having a white pebble at its two endpoints can be equivalently contracted to a single vertex containing a white pebble.
That way, White wins if $s$ and $t$ end up within the same vertex.
In all generality, one can assume that the initial instance already contains white pebbles in some other vertices than $s$ and $t$.

\paragraph{Parameterized complexity}
In a nutshell, parameterized complexity aims at solving hard problems (be it \np-hard, \pspace-hard, etc.) in time $f(k)n^{O(1)}$, called \fpt time, (for fixed-parameter tractable), where $n$ is the size of the instance, $f$ is a computable function, and $k$ is a \emph{parameter} of the problem.
The parameter can take various forms: \emph{size of the solution} for optimization problems, \emph{treewidth} or \emph{maximum degree} for graph problems, \emph{size of the alphabet} for word problems, to name a few.
Assuming the problem we are trying to solve is \np-hard, function $f$ has to be superpolynomial, unless \p $=$ \np.
When such an \fpt algorithm exists, $f$ is usually exponential.
However, if our parameter $k$ is \emph{small} compared to the size of the instance $n$, what we obtain is that the exponential blow-up is limited to the small value $k$.
Some problems, like finding a clique of size $k$ in a graph having $n$ vertices, does not admit an algorithm running in time $f(k)n^{O(1)}$ for any computable function $f$, unless \textsc{3-sat} can be solved in subexponential-time which is believed unlikely.
Phrased with the terminology of parameterized complexity, \textsc{clique} is unlikely to be \fpt parameterized by the size of the solution.
There is a whole hierarchy of problems beyond \fpt: $W[1] \subseteq W[2] \subseteq \ldots \subseteq W[SAT] \subseteq W[P] \subseteq AW[*]$.
For our purpose here, we do not need to define formally any of those classes.
Instead, we just mention that \textsc{clique} is $W[1]$-complete for \fpt reductions, where an \fpt reduction may blow-up the size of the instance $n$ only polynomially but the new parameter can be any computable function of the old parameter.
A problem $\Pi$ is $W[1]$-hard with respect to $\kappa$ if there is an \fpt reduction from \textsc{clique} parameterized by the size of a solution $k$ to $\Pi$ parameterized by $\kappa$.
It is highly suspected that \fpt $\neq W[1]$.
Therefore, showing that a problem is $W[1]$-hard with respect to some parameter gives a strong evidence that there is no \fpt algorithm.
The class $AW[*]$ is a superset of $W[1]$ which encompasses hard parameterized problems with alternation.
We now describe the basic $AW[*]$-complete problem (again, $\Pi$ is $AW[*]$-hard if there is an \fpt reduction from this problem to $\Pi$).
Given a Boolean formula $\phi$ and a partition $V_1, V_2, \ldots, V_k$ of its variables, two players Existential and Universal alternate in choosing which single variable of $V_i$ is set to true (all the other variables of $V_i$ are set to false).
Existential chooses in $V_1$, then Universal chooses in $V_2$, and so on.
Existential wins iff $\phi$ is satisfied by the resulting assignment.
This problem is $AW[*]$-complete parameterized by the number $k$ of partite sets.

As an easy exercise, one can check that if there is an \fpt reduction from a problem $A$ to a problem $B$ and $B$ admits an \fpt algorithm, then so does $A$.
Two recent and extensive textbooks on parameterized complexity present these concepts in far greater detail~\cite{Cygan15,DowneyF13}.

\paragraph{Short games}
In their introductory book to parameterized complexity \cite{DowneyF99}, Downey and Fellows devote a small section to \emph{$k$-move games}.
The $k$-move (or \emph{short}) variant of a two-player game consists of deciding if Player $1$ can win in at most $k$ moves.
More formally, if one identifies a strategy to a decision tree, one needs to determine if Player $1$ has a winning strategy of depth at most $k$.
The problem is parameterized by $k$.
The PhD thesis of Allan Scott \cite{ScottThesis} is dedicated to short games.
The problem \textsc{short chess} was shown $AW[*]$-complete \cite{ScottS2008}.
Pursuit-evasion games (also known as cops and robber games) have been studied within the framework of short games \cite{ScottS10} as well as infinite two-player games on graphs \cite{Bjorklund03}.

The parameterized complexity of \sGHex was first asked in the book of Downey and Fellows \cite{DowneyF99} and is still open \cite{FominM2012,DowneyF13}.
We answer partially this question by showing that \sGHex is unlikely to be \fpt.
More precisely, we prove that \sGHex is $W[1]$-hard.
Nevertheless, this problem could be higher on the intractability hierarchy.
It is conjectured $AW[*]$-complete and Scott proved that it belongs to $AW[*]$.
From a purely practical perspective, our partial answer is already satisfactory in asserting that we should not expect to find an \fpt algorithm solving \sGHex{}.

\paragraph{Multicolored clique}
In \textsc{multicolored clique}, one is given a graph where the vertices are partitioned into $k$ sets called \emph{color classes} and the goal is to find a clique of size $k$ that intersects each color class exactly once.
The \textsc{multicolored clique} is also $W[1]$-complete and it is often more convenient to design \fpt reductions from this variant than from \textsc{clique}.

\begin{theorem}
\SGHex is $W[1]$-hard.
\end{theorem}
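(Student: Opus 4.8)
The plan is to give an \fpt reduction from \textsc{multicolored clique}. Given a graph $H$ whose vertex set is partitioned into color classes $C_1,\dots,C_k$, we build a \GHex instance $(G,s,t)$ with some pre-placed white pebbles and set the move budget to exactly $k$. The design principle is that White's only sensible moves are \emph{selection} moves, one per color class: playing such a move for a vertex $v\in C_i$ commits White to including $v$ in a candidate clique. The graph $G$ is arranged so that, once White has selected one vertex $v_i\in C_i$ for every $i$, the white pebbles connect $s$ to $t$ if and only if $\{v_1,\dots,v_k\}$ is a clique of $H$, i.e.\ a multicolored clique. Since we will make Black's moves harmless, the ``game'' collapses to the one-shot question of whether White can name a multicolored clique, which is exactly the source of \wone-hardness.

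Concretely, $G$ is a long series of gadgets laid out between $s$ and $t$. First come $k$ \emph{selection bundles}, one per color: the bundle for $C_i$ has a left port and a right port, and for every $v\in C_i$ a parallel lane through a dedicated empty vertex $u_v$ linking the two ports; every other vertex of the bundle is pre-coloured white, so the bundle becomes traversable precisely when White plays some $u_v$ with $v\in C_i$. Then come $\binom{k}{2}$ \emph{edge-verification bundles}, one per pair $\{i,i'\}$: for every edge $\{v,v'\}\in E(H)$ with $v\in C_i$ and $v'\in C_{i'}$, a lane links the ports of this bundle while passing through $u_v$, a fresh pre-white vertex $z_{v,v'}$, and $u_{v'}$. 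Hence the pair-bundle for $\{i,i'\}$ becomes traversable exactly when White has selected a $v\in C_i$ and a $v'\in C_{i'}$ adjacent in $H$. The only initially empty vertices of $G$ are the $u_v$'s, and $s\sim t$ demands that all $k$ selection bundles and all $\binom{k}{2}$ edge bundles be traversable; so White needs exactly $k$ well-chosen moves to connect $s$ to $t$, and this is possible precisely when some choice of one vertex per color is a multicolored clique.

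To neutralise Black, who makes at most $k-1$ moves, every single cut vertex of the layout (the ports of the bundles and the connectors between consecutive bundles) is replaced by a bundle of $k+1$ parallel pre-white vertices, and each selection vertex $u_v$ is replaced by $k+1$ parallel copies $u_v^{(1)},\dots,u_v^{(k+1)}$ (the edge-verification lanes being duplicated accordingly). Then Black can neither sever the backbone nor block all copies of any one selection vertex, so in the YES case White fixes a multicolored clique at the outset and, on its $i$-th move, plays any still-free copy of its chosen vertex of $C_i$; after White's $k$-th move the white subgraph connects $s$ to $t$ whatever Black did. In the NO case the budget of $k$ moves forces White to spend exactly one move per color (a selection bundle without a White move is non-traversable, and every selection bundle starts non-traversable), so White ends with one selected vertex per color; these cannot be pairwise adjacent, some edge-verification bundle stays non-traversable, and White loses. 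As $G$ has size polynomial in $|H|$ and $k$ and the new parameter equals the old one, this is an \fpt reduction, whence \SGHex is \wone-hard.

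The main obstacle is calibrating the move budget so that it is simultaneously tight in the NO direction — White must be unable to ``cheat'' by, e.g., skipping a color, which is why everything but the selection copies is pre-coloured white and why the bundles sit in series behind unavoidable $(k+1)$-thick cuts — and slack enough in the YES direction to absorb all of Black's interference, which is the role of the $(k+1)$-fold redundancy. Checking that these two demands are compatible, and in particular that no stray connectivity in the thickened graph lets White win with fewer or differently placed moves, is the part that needs the most care.
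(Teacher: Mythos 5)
Your overall plan (an \fpt reduction from \textsc{multicolored clique} with one selection move per colour class and a linearly bounded parameter) matches the paper's, but the core of your construction does not work: the edge-verification bundles do not actually verify adjacency. The culprit is that each selection vertex $u_{v'}$ with $v'\in C_{i'}$ is simultaneously (i) adjacent to both ports of its own selection bundle, which are tied by pre-white connectors into the global series backbone, and (ii) the terminal vertex of some lane of the edge bundle $\{i,i'\}$, hence adjacent to that bundle's right port. So as soon as White plays one vertex per class (any vertex having at least one neighbour in each other class), every edge bundle is crossed ``sideways'': $s$ reaches the left port of edge bundle $\{i,i'\}$ through the selection chain, and the right port is reached through $u_{v'}$ and the ports of selection bundle $i'$ --- no adjacency between the chosen $v\in C_i$ and $v'\in C_{i'}$ is ever required, so $s$ and $t$ get connected even when the selected vertices form no clique. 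The NO direction therefore fails, and this is not a calibration issue you can patch with thicker cuts: in \GHex connectivity is inherently disjunctive, and a static gadget that becomes traversable only when two \emph{specific} pebbles are both present cannot be built this way, because each of those pebbles must also touch the backbone in order to be selectable at all.

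The paper avoids this by abandoning the ``neutralize Black'' idea altogether; Black's interference is precisely what enforces the conjunction. White must descend a binary tree, under a budget $q=2(2k+\lceil\log\binom{k}{2}\rceil+4)-1$ that accounts for genuine move-for-move interaction, to a leaf $l_{i,j}(b)$ essentially of Black's choosing, and the final white path uses only \emph{one} endpoint of the witnessing edge. The adjacency check is done on the Black side: Lemma~\ref{lem:induced-matching} shows that if the selected $d_i,d_j$ are non-adjacent then $\bigcup_{a,xy\in E_{ij},x\in\{d_i,d_j\}}\{w_{xy}(a,1)w_{xy,x}(a,1)\}$ is an \emph{induced matching} separating $s$ from $t$, so Black wins by pairing; exactly when $d_id_j\in E$ the two relevant pairs share the vertex $w_{e}(a,1)$, the matching collapses, and White gets through. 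If you want to keep your one-shot framing you would need a fundamentally different edge gadget; as written, your reduction maps NO-instances to YES-instances.
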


\begin{proof}
From an instance $G=(V=V_1 \uplus V_2 \uplus \ldots \uplus V_k, E)$ of the $W[1]$-hard \textsc{multicolored clique}, we build an equivalent instance $(H,s,t)$ of \sGHex in the following way.
For each vertex $v \in V$, we add four vertices $v^0$, $v^1$, $v^2$, $v^3$ to $H$ and a path $sv^0v^1v^2v^3t$.
For every $i \in [k]$, we add a vertex $t_i$, and we link $t_i$ to $t$ and to all the vertices $v^2$ such that $v \in V_i$.
Similarly, we add a vertex $s_i$ that we link to $s$ and to all the vertices $v^1$ such that $v \in V_i$.
That finishes the construction of the selector gadgets (see Figure~\ref{fig:selector}).

\begin{figure}
\centering
\begin{tikzpicture}[scale=0.8]

\node[draw,circle,inner sep=-0.25cm] (s) at (0.7,-3) {$s$} ;
\node[draw,circle,inner sep=-0.25cm] (t) at (0.7,7.5) {$t$} ;

\foreach \i in {1,...,3} {
  \node[draw,circle,inner sep=-0.28cm] (t\i) at (5 * \i - 11.5,4.5) {$t_\i$} ;
  \node[draw,circle,inner sep=-0.28cm] (s\i) at (5 * \i - 11.5,0) {$s_\i$} ;
  \node (sv\i) at (5 * \i - 10.2,1.5) {} ;
  \node (ev\i) at (5 * \i - 8.3,1.5) {} ;
  \foreach \j in {1,...,6} {
    \foreach \k in {0,...,3} {
      \node[draw,circle] (u\i\j\k) at (5 * \i - 11 + 0.5 * \j,1.5 * \k) {} ;
    }
    \draw (u\i\j1) -- (s\i) -- (s) -- (u\i\j0) -- (u\i\j1) -- (u\i\j2) -- (u\i\j3) -- (t) -- (t\i) -- (u\i\j2) ;
  }
  \node[draw,rounded corners,rectangle,very thick,fit=(sv\i) (ev\i)] (v\i) {} ;
  \node (tv\i) at (5 * \i - 11, 1.5) {$V_\i$} ;
}
\foreach \k in {0,...,3} {
  \node at (-4,1.5 * \k + 0.55) {$u^\k$} ;
  \node at (0,1.5 * \k + 0.55) {$v^\k$} ;
}

\end{tikzpicture}
\caption{The selector gadgets with $k=3$, and some vertices $u \in V_1$ and $v \in V_2$.}
\label{fig:selector}
\end{figure}
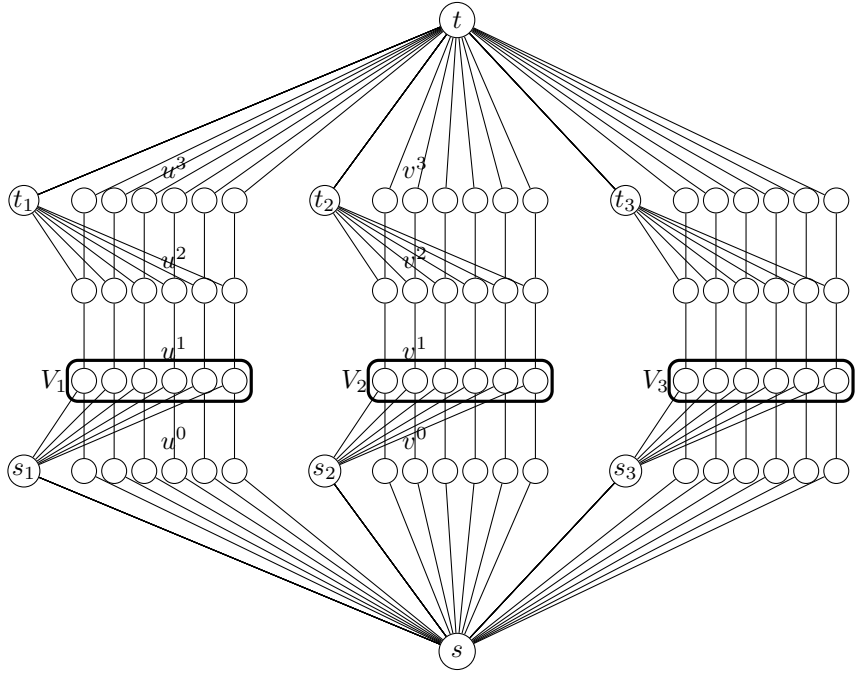

Then, we add two copies of an almost complete binary tree (i.e., all levels except the last are completely filled) with ${k \choose 2}$ leaves, and link both of their roots to a new vertex $r$, itself linked to $t$.
For each leaf of the second to last level, we replace the edge incident to it by two parallel paths of length $2$.
This has the same effect of \emph{weighting} the edge by $2$ (see Figure~\ref{fig:edge-gadget}), so that each leaf is at the \emph{same distance} from the root.
We make the slight abuse of still calling the whole structure a tree.
Within the $b$-th copy of the two binary trees, each leaf represents a distinct set $E_{ij}:=E(V_i,V_j)$ with $i \neq j$, and is denoted by $l_{i,j}(b)$.

For each edge $e=uv \in E$ and $a,b \in [2]$, we add three vertices $w_e(a,b)$, $w_{e,u}(a,b)$ and $w_{e,v}(a,b)$.
We link $w_e(a,b)$ to $w_{e,u}(a,b)$ and $w_{e,v}(a,b)$.
We also link $w_{e,u}(a,b)$ to $u^1$ and $w_{e,v}(a,b)$ to $v^1$.
Finally, for any $b \in [2]$, we link $l_{i,j}(b)$ to all the vertices $w_e(a,b)$ such that $e \in E_{ij}$ and $a \in [2]$ (see Figure~\ref{fig:edge-gadget}).
For each $b \in [2]$, we denote by $T_b$ the $b$-th binary tree together with all the vertices $w_e(a,b)$ and $w_{e,u}(a,b)$, for $a \in [2]$, $e \in E$ and $u \in V$.
We observe that $T_1$ and $T_2$ are disjoint trees.
The tree induced by $T_1$, $T_2$, and $r$ is denoted by $T$, and is seen as rooted in $r$.

This ends the construction of $H$ which has $O(|V(G)|+|E(G)|+k^2)$ vertices.
We ask if there is a winning strategy for White whose depth is at most $q := 2(2k+\lceil \log {k \choose 2} \rceil + 4)-1=O(k)$.
First, if there is a multicolored clique in $G$, we will exhibit a strategy for White that forces a win in at most $q$ moves.
Second, if there is no multicolored clique, we will give a winning strategy for Black. 

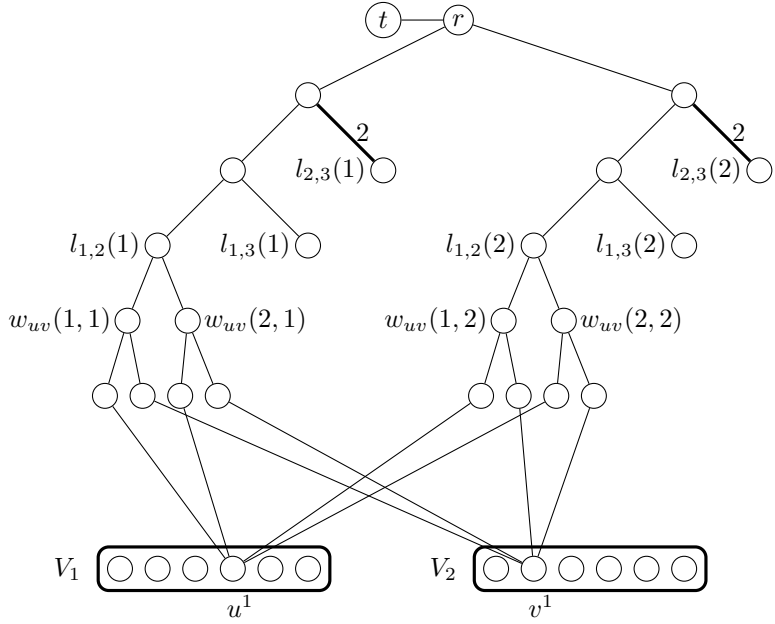
\begin{figure}
\centering
\begin{tikzpicture}[scale=1]
\node[draw,circle,inner sep=-0.25cm] (t) at (0,7) {$t$} ;
\node[draw,circle,inner sep=-0.22cm] (r) at (1,7) {$r$} ;
\draw (r) -- (t) ;

\foreach \b in {1,2} {
  \node[draw,circle] (a1\b) at (-6+5 * \b,6) {} ;
  \node[draw,circle] (l23\b) at (-5+5 * \b,5) {} ;
  \node[draw,circle] (a2\b) at (-7+5 * \b,5) {} ;
  \node[draw,circle] (l12\b) at (-8+5 * \b,4) {} ;
  \node[draw,circle] (l13\b) at (-6+5 * \b,4) {} ;

  \node (tl12\b) at (-8.7+5 * \b,4) {$l_{1,2}(\b)$} ;
  \node (tl13\b) at (-6.7+5 * \b,4) {$l_{1,3}(\b)$} ;
  \node (tl23\b) at (-5.7+5 * \b,5) {$l_{2,3}(\b)$} ;

  \node[draw,circle] (e1\b) at (-8.4+5 * \b,3) {} ;
  \node[draw,circle] (e2\b) at (-7.6+5 * \b,3) {} ;

  \node (te1\b) at (-9.3+5 * \b,3) {$w_{uv}(1,\b)$} ;
  \node (te2\b) at (-6.7+5 * \b,3) {$w_{uv}(2,\b)$} ;

  \node[draw,circle] (u1\b) at (-8.7+5 * \b,2) {} ;
  \node[draw,circle] (v1\b) at (-8.2+5 * \b,2) {} ;
  \node[draw,circle] (u2\b) at (-7.7+5 * \b,2) {} ;
  \node[draw,circle] (v2\b) at (-7.2+5 * \b,2) {} ;

  \draw (u1\b) -- (e1\b) -- (v1\b) ;
  \draw (u2\b) -- (e2\b) -- (v2\b) ;

  \draw (e1\b) -- (l12\b) -- (e2\b) ;

  \draw (r) -- (a1\b) ;
  \draw (a2\b) -- (a1\b) ;
  \draw[very thick] (a1\b) -- node [anchor=west] {$2$} (l23\b) ;
  \draw (l12\b) -- (a2\b) -- (l13\b) ;
}

\foreach \i in {1,2} {
  \foreach \j in {1,...,6} {
    \node[draw,circle] (w\i\j) at (5 * \i - 9 + 0.5 * \j,-0.3) {} ;
  }
  \node[draw,rounded corners,rectangle,very thick,fit=(w\i1) (w\i6)] (v\i) {} ;
  \node (tv\i) at (5 * \i - 9.2, -0.3) {$V_\i$} ;
}

\foreach \b in {1,2} {
  \foreach \a in {1,2} {
    \draw (u\a\b) -- (w14) ;
    \draw (v\a\b) -- (w22) ;
  }
}

\node (ttu1) at (-1.9,-0.8) {$u^1$} ;
\node (ttv1) at (2.1,-0.8) {$v^1$} ;

\end{tikzpicture}
\caption{Edge gadgets for $k=3$ (only one edge $uv \in E_{1,2}$ is represented).}
\label{fig:edge-gadget}
\end{figure}

\paragraph{Normal play}
If there is a multicolored clique $C$ in $G$, then White can win in $q$ moves in $H$.
Here, \emph{move} corresponds to what is usually called a \emph{ply} or a \emph{half-move} in games such as chess (where a move is actually a "move" from White \emph{plus} a "move" from Black).
Let $C=\{c_1,c_2,\ldots,c_k\}$ with $c_i \in V_i$ for each $i \in [k]$.
The first $4k$ moves are the following.
For each $i \in [k]$, White plays in $c_i^1$.
If Black does not answer in $\{s_i,t_i,c_i^0,c_i^2,c_i^3\}$, White plays in $c_i^2$ and wins in four more moves; indeed, $s$ and $c_i^1$ can be connected via $c_i^0$ or $s_i$, while $c_i^2$ and $t_i$ can be connected via $c_i^3$ or $t_i$.
Black plays in $s_i$ and White answers in $c_i^0$.
Then Black is forced to play in $\{c_i^2,c_i^3,t_i\}$.

On move $4k+1$, White plays in $r$ and will play the child in $T$ of their last move in a subtree where Black has \emph{not} played until they play in $l_{i,j}(b)$ for some $i \neq j \in [k]$ and $b \in [2]$.
As $C$ is a clique, there is an edge $e$ between $c_i$ and $c_j$.
By construction, the six vertices $w_e(1,b)$, $w_e(2,b)$, $w_{e,c_i}(1,b)$, $w_{e,c_i}(2,b)$, $w_{e,c_j}(1,b)$, and $w_{e,c_j}(2,b)$ are still empty.
White's next move is $w_e(a,b)$ where $a$ is such that Black has \emph{not} played in $\{w_e(a,b), w_{e,c_i}(a,b), w_{e,c_j}(a,b)\}$.
Then, White's final move is either $w_{e,c_i}(a,b)$ or $w_{e,c_j}(a,b)$, finishing a white path from $t$ all across $T$ to a vertex of the clique to $s$.
Following this strategy, the total number of moves is at most $4k$ in the selector gadgets, plus twice the height of $T$ minus $1$, i.e., $2(\lceil \log {k \choose 2} \rceil + 4)-1$.
This corresponds exactly to $q$ moves. 

\paragraph{Proper defense}
If there is no multicolored clique in $G$, we give a winning strategy for Black: not only White cannot force a win in at most $q$ moves but he cannot win at all.
A \emph{$(y,z)$-cut} $S$ of a graph $J$ is a set of edges of $J$ whose removal makes that $y$ and $z$ are in two different connected components.
We say that $S$ \emph{separates} $y$ and $z$.
The following lemma presents a situation where Black can easily block their opponent.

\begin{lemma}\label{lem:induced-matching}
If there is an $(s,t)$-cut $M$ of $H$ such that no vertex of $M$ contains a white pebble and $M$ is an induced matching, then Black wins, even if it is White to play.
\end{lemma}

\begin{proof}
Whenever White plays a vertex $u$ in $M$, Black answers the other endpoint of the unique edge of $M$ which touches $u$.
\end{proof}

If White plays in some vertex $v^2$, then Black plays in $v^1$ and White has not achieved anything.
If White plays in some vertex $v^0$, $v^3$, $s_i$, or $t_i$, then Black plays $r$ and wins by Lemma~\ref{lem:induced-matching}, since the induced matching $\bigcup_{v \in V} \{v^1v^2\}$ separates $s$ from $t$.
If White plays in a vertex $v^1$ such that $v \in V_i$, Black plays in $s_i$.
Then, White should play in $v^0$ (indeed, it is pointless for White to play in $u^d$ for some $d \in \{0,1\}$ and $u \in V_i$, since Black can answer in $u^{1-d}$) and Black plays in $t_i$.
Now, if White plays in another vertex $u^1$ with $u \in V_i$ (for the same $i$), then Black plays $r$ and wins by Lemma~\ref{lem:induced-matching} with the $(s,t)$-cut $\bigcup_{v \in V \setminus \{u\}} \{v^1v^2\} \cup \{u^2u^3\}$.

So far, we showed that, for each $i \in [k]$, White can play in at most one vertex of the form $v^1$ with $v \in V_i$, and then has to play in $r$.
We distinguish two cases: (a) for each $i \in [k]$, White has played in a vertex $v^1$ with $v \in V_i$ when they play in $r$ (b) there is some $i \in [k]$ such that White has not yet played in a vertex $v^1$ with $v \in V_i$ when they play in $r$.

We start with case (a).
Let $A=\{d_1^1,d_2^1,\ldots,d_k^1\}$ be the vertices which now contain a white pebble, with $d_i^1 \in V_i$ for each $i \in [k]$.
As, $S := \{d_1,d_2,\ldots,d_k\}$ is not a clique in $G$, there are two vertices $d_i$ and $d_j$ such that $d_id_j \notin E$.
Black plays in the root of $T_2$, forcing White to play in the root of $T_1$.
Then, Black will systematically block the path which does \emph{not} go to the leaf $l_{i,j}(1)$.
At this point, $M' := \bigcup_{v \in V \setminus S} \{v^0v^1\} \cup \bigcup_{v \in S} \{v^2v^3\} \cup \bigcup_{a \in [2], xy \in E_{ij}, x \in \{d_i,d_j\}} \{w_{xy}(a,1)w_{xy,x}(a,1)\}$ separates $s$ and $t$.
Also, there is no white pebble in any of the endpoints of $M'$.
Finally, $M'$ is an induced matching.
That observation is based on the key argument that $\bigcup_{a \in [2], xy \in E_{ij}, x \in \{d_i,d_j\}} \{w_{xy}(a,1)w_{xy,x}(a,1)\}$ cannot contain both $w_{xy}(a,1)w_{xy,x}(a,1)$ and $w_{xy}(a,1)w_{xy,y}(a,1)$ since that would imply that $\{x,y\}=\{d_i,d_j\}$, contradicting that $d_id_j \notin E$.
Therefore, Black wins by Lemma~\ref{lem:induced-matching}.


Now, we consider case (b).
Black plays $s_i$ where $i$ is such that White has not played in any vertex $v^1$ with $v \in V_i$ yet.
Now, whenever White plays in $u^1 \in V_i$, Black plays in $u^0$.
This way, White cannot connect any vertices of $V_i$ to $s$.
When White goes back to playing in $T$, Black forces him to the leaves $l_{i,j}(1)$ (and/or $l_{i,j}(2)$) for some arbitrary $j \in [k]$ and wins by Lemma~\ref{lem:induced-matching} similarly to case (a).
\end{proof}

Under the Exponential Time Hypothesis (ETH) that asserts that \textsc{3-sat} has no subexponential algorithm, \textsc{multicolored clique} cannot be solved in time $f(k)(n+m)^{o(k)}$ for any computable function $f$ \cite{Chen06}.
As our reduction linearly preserves the parameter, we have shown that \sGHex cannot be solved in the same running time $f(k)(n+m)^{o(k)}$ unless the ETH fails, where $n$ and $m$ are the number of vertices and edges of the graph.
A simple procedure solving \sGHex in time $O^*(n^k)$ (where $O^*$ suppresses polynomial factors) would consist of expanding all the continuations of length up to $k$ and decide if Player $1$ wins in this game subtree.

A class $\mathcal C$ of graphs has \emph{bounded local treewidth} if there is a function $F_{\mathcal C}$ such that for any $G \in \mathcal C$, $\text{tw}(G) \leqslant F_{\mathcal C}(\text{diam}(G))$ where $\text{tw}(G)$ is the treewidth of $G$ and $\text{diam}(G)$ its diameter.

\begin{theorem}
\SGHex is \fpt in every class of graphs with bounded local treewidth and closed by edge contractions.
\end{theorem}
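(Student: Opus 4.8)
The plan is to decide \sGHex by first shrinking the instance to a graph of treewidth bounded in the parameter $k$ and then solving a short game there. Let $(G,s,t)$ be the instance, with $k$ the bound on the number of White moves. First I would \emph{contract every edge both of whose endpoints carry a white pebble}; since $\mathcal C$ is closed under edge contractions, the resulting graph $G'$ is still in $\mathcal C$, and this is an equivalent \sGHex-instance with the same move budget. If $s$ and $t$ get merged White has already won. Otherwise $s$ and $t$ are distinct and the pre-coloured white vertices of $G'$ now form an independent set; consequently, on any $(s,t)$-path that White can complete within $k$ moves the pre-coloured white vertices form an independent set of the path and account for all but at most $k$ of its vertices, so the path has length at most $2k$. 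Hence, after discarding any component not containing $s$, if $\text{dist}_{G'}(s,t) > 2k$ we output ``no''; otherwise we let $H$ be the subgraph of $G'$ induced by the ball of radius $2k$ around $s$ (equivalently, by the vertices $v$ with $\text{dist}_{G'}(s,v)+\text{dist}_{G'}(v,t) \le 2k$).

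Next I would argue that \sGHex has the same winner on $G'$ and on $H$: every path White can complete in $k$ moves lies inside $H$, so playing outside $H$ is pointless for White, while a move outside $H$ is never advantageous for the blocking player, so Black may be assumed to stay inside $H$ as well (the one point to verify here is that a winning White strategy can be taken to stay in $H$, treating Black's out-of-$H$ moves as passes, which never help Black). It then remains to bound $\text{tw}(H)$, and this is the step I expect to be the main obstacle. We have $\text{diam}(H) = O(k)$, but $H$ is only an \emph{induced subgraph} of $G' \in \mathcal C$, and the class need not be closed under taking induced subgraphs, so the hypothesis $\text{tw}(\cdot) \le F_{\mathcal C}(\text{diam}(\cdot))$ does not apply to $H$ directly. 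The fix is to promote the hypothesis to its genuinely local form using the closure under contractions: contract each connected component of $G'$ minus the ball of radius $2k+1$ around $s$ to a single vertex, obtaining $\tilde G \in \mathcal C$; each contracted vertex is adjacent only to vertices at distance exactly $2k+1$ from $s$, so $\text{diam}(\tilde G) = O(k)$, whence $\text{tw}(\tilde G) \le F_{\mathcal C}(O(k))$; and since $H$ is an induced subgraph of $\tilde G$ and treewidth is minor-monotone, $\text{tw}(H) \le F_{\mathcal C}(O(k))$, a function of $k$ alone.

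Finally I would solve \sGHex on $H$. One first computes a tree decomposition of $H$ of width $O_k(1)$ in \fpt time (e.g.\ by Bodlaender's algorithm). Then ``White wins within $k$ moves'' is expressible by an MSO sentence of length $O(k)$ over $H$: it quantifies the $O(k)$ alternating White and Black moves together with their legality, and asserts that at some stage the set of white-occupied vertices contains an $(s,t)$-path, a property that is MSO-definable since connectivity is. By Courcelle's theorem this sentence is decidable in time $g(k)\cdot|H|$; alternatively one runs a direct alternating dynamic program over the tree decomposition. Composing the preprocessing, the distance test, the extraction of $H$, and this last step yields an algorithm running in time $g(k)\cdot|G|^{O(1)}$, proving that \sGHex is \fpt on $\mathcal C$. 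Beyond the treewidth-localisation argument, the remaining things to check carefully are that both transformations — the contraction of white--white edges and the restriction to $H$ — preserve both the winner and the bound $k$ on the number of White moves.
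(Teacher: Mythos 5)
Your proof is correct, but it takes a genuinely different route from the paper's. After the common first step (contracting white--white edges, which is where the paper also uses closure under contractions, and bounding the length of any completable $(s,t)$-path by $O(k)$), the paper does \emph{no} explicit localisation: it writes the entire $k$-move game as a single first-order sentence of size depending only on $k$ (alternating first-order quantifiers for the moves, plus an explicit length-$\le k+2$ path formula, which is FO precisely because the path is short) and then invokes the Frick--Grohe metatheorem that FO model checking is fixed-parameter linear on classes of bounded local treewidth. You instead carry out the locality argument by hand: restrict the game to the ball $H$ of radius $O(k)$ around $s$, argue that out-of-$H$ moves are wasted for both players (the standard Maker--Breaker ``extra stones never hurt'' argument covers the one direction you flag), bound $\operatorname{tw}(H)$, and finish with Courcelle's theorem or a direct dynamic program on a bounded-width tree decomposition. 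Your treewidth-localisation step is a genuine addition: with the paper's diameter-based definition of bounded local treewidth, an induced ball is not directly controlled since the class need not be closed under induced subgraphs, and your second use of contraction-closure (collapsing the components outside the ball to get a bounded-diameter member of the class, then using minor-monotonicity of treewidth) is exactly the right patch. What each approach buys: the paper's is shorter and delegates all locality to the metatheorem (at the cost of the non-elementary tower of height $\Theta(k)$ it itself laments); yours is more self-contained and transparent, and the direct-DP variant at the end is the only one of the two with a realistic chance of avoiding part of that blow-up, although the Courcelle variant with $k$ quantifier alternations would suffer the same tower. The only imprecision worth noting is the parenthetical about passes, which conflates the two directions of the equivalence (it is White's out-of-$H$ moves that must be converted to in-$H$ moves via strategy stealing when going from $G'$ to $H$, and Black's that are treated as passes when going from $H$ to $G'$); this does not affect correctness.
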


\begin{proof}
As we said above, we can contract the maximal connected components of vertices containing white pebbles (\emph{white connected components}) into a single vertex containing a white pebble.
Indeed, the condition \emph{there is an $(s,t)$-path in which each vertex contains a white pebble} is equivalent to \emph{$s$ and $t$ are in the same supervertex in the graph obtained by contracting every edge whose two endpoints contain a white pebble}.
Thus, we can now assume that, initially, the white connected components have all size $1$.
Let $W$ be the set of vertices containing initially a white pebble in this new graph.
After at most $k$ moves, the length of a potential $(s,t)$-path with white pebbles is bounded by $2\lceil \frac{k}{2} \rceil + 1 \leqslant k+2$.
Indeed, White plays at most $\lceil \frac{k}{2} \rceil$ new pebbles and the path can contain at most $\lceil \frac{k}{2} \rceil + 1$ initially placed white pebbles (every other two).
It implies that the problem is expressible in first-order logic by the formula $\bigvee_{k' \leqslant k}[\exists v_1 \forall v_2 \exists v_3 \forall v_4 \ldots \exists v_{k'} \bigwedge_{i \neq j \in [k']} v_i \neq v_j \land \bigvee_{l \leqslant k+2}[\exists u_1, u_2, \ldots, u_l \in W \cup \{v_1, v_3, \ldots, v_{k'}\}~u_1=s \land u_l=t \land E(u_1,u_2) \land E(u_2,u_3) \land \ldots \land E(u_{l-1},u_l)]]$.
Frick and Grohe showed that any graph problem expressible in first-order logic can be solved in linear time in classes of graph with local bounded treewidth \cite{Frick01}.
Yet, this algorithm is far from practical since its dependence in the formula is an exponential tower of height the number of quantifier alternations, so in our case height $\Theta(k)$.
\end{proof}

\begin{corollary}
\SGHex is \fpt in planar graphs.
\end{corollary}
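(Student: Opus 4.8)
The plan is to observe that the class of planar graphs satisfies both hypotheses of the preceding theorem, so that the corollary is a direct instantiation. First I would check closure under edge contractions: planar graphs form a minor-closed class (equivalently, by Wagner's theorem, the graphs with no $K_5$ and no $K_{3,3}$ minor), and contracting a single edge produces a minor, so a contraction of a planar graph is planar. Second I would invoke the classical fact that planar graphs have bounded local treewidth. Concretely, Baker's layering technique (later sharpened by Eppstein) gives that every planar graph $G$ has $\mathrm{tw}(G) \leqslant 3\,\mathrm{rad}(G)+1 \leqslant 3\,\mathrm{diam}(G)+1$, so the bounding function $F_{\mathcal C}$ in the definition can be taken linear. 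With these two ingredients in hand, the preceding theorem applied with $\mathcal C$ the class of planar graphs immediately yields that \SGHex restricted to planar input graphs is \fpt.

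The only point that deserves an explicit sentence is that the algorithm underlying the previous theorem begins by contracting every edge both of whose endpoints carry a white pebble; the closure property above is exactly what guarantees that the contracted graph is still planar, hence still of bounded local treewidth, so that Frick and Grohe's linear-time model-checking meta-theorem for first-order formulas over classes of bounded local treewidth remains applicable. I do not anticipate any real obstacle here: both facts used are standard, and the corollary is obtained purely by instantiating the hypotheses of the theorem, not by any new combinatorial argument. (As in the theorem, the resulting algorithm is of mostly theoretical interest, since the dependence on the formula — and hence on $k$ — is an exponential tower whose height grows with $k$.)
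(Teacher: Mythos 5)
Your proposal is correct and matches the paper's own proof, which likewise simply notes that planar graphs are closed under edge contractions and have bounded local treewidth (citing Robertson and Seymour for the latter) and then instantiates the preceding theorem. The extra detail you supply (Wagner's theorem, the explicit linear bound $\mathrm{tw}(G)\leqslant 3\,\mathrm{diam}(G)+1$, and the remark about why contraction-closure is needed before invoking Frick--Grohe) is accurate but not a different argument.
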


\begin{proof}
Planar graphs are closed by edge contractions and have bounded local treewidth \cite{Robertson84}.
\end{proof}

\begin{corollary}
\textsc{Short hex} is \fpt.
\end{corollary}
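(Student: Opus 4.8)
The plan is to give a polynomial-time, parameter-preserving reduction from \SHex{} to \sGHex{} restricted to planar graphs, and then invoke the previous corollary. Given a \hex{} board — a parallelogram paved by hexagons — together with the two opposite sides $S_1$ and $S_2$ that the first player (say White) aims to connect, I would build a graph $H$ as follows: its vertices are the cells of the board together with two fresh vertices $s$ and $t$; two cell-vertices are adjacent exactly when the corresponding cells are neighbors; $s$ is made adjacent to every cell of $S_1$ and $t$ to every cell of $S_2$. The \sGHex{} instance is $(H,s,t)$, with $s$ and $t$ carrying the two initial white pebbles, and I keep the bound on the length of the solution equal to $k$. This is clearly computable in polynomial time and does not touch the parameter.

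Next I would check two things. First, that $H$ is planar: starting from the natural planar drawing of the hexagonal board, all four sides of the parallelogram lie on the outer face, so $s$ can be placed in the outer region near $S_1$ and $t$ near $S_2$ and the new edges drawn without crossings. Second, that the reduction is faithful: a set of White stones forms a connecting group between $S_1$ and $S_2$ in \hex{} if and only if, after prepending $s$ and appending $t$, those vertices support an $(s,t)$-path all of whose vertices are white in $H$, and conversely deleting $s,t$ from such a path yields a White connecting group. Since Black's legal moves in \hex{} correspond exactly to Black's moves on the cell-vertices of $H$, and neither $s$ nor $t$ is ever available to either player, the two game trees coincide; hence White has a winning strategy of depth at most $k$ in \hex{} iff White has one of depth at most $k$ in \sGHex{} on $(H,s,t)$.

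Finally, since planar graphs are closed under edge contraction and have bounded local treewidth, the corollary that \sGHex{} is \fpt{} in planar graphs applies to $(H,s,t)$; composing it with the reduction yields an \fpt{} algorithm for \SHex{}. I expect the only delicate point to be confirming genuine planarity of $H$ — namely that \emph{both} of White's target sides can be joined to apex vertices without introducing crossings, which holds precisely because those sides lie on the boundary of the board — together with the (routine) verification that the translation changes neither the parameter nor the value of the game under optimal play. Note that one cannot shortcut the argument by applying the theorem directly to the class of \hex{} graphs themselves, since that class is not closed under edge contraction; routing through the planar augmented graph $H$ is what makes the hypotheses of the theorem available.
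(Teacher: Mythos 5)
Your proposal is correct and follows essentially the same route as the paper, whose entire proof is the one-line observation that the underlying graph of \hex{} is planar; you have simply made explicit the standard encoding (cells as vertices, apex vertices $s$ and $t$ attached to the two target sides on the outer face) and the routine checks of planarity, parameter preservation, and faithfulness that the paper leaves implicit. Your closing remark that the hypotheses of the theorem are invoked for the class of all planar graphs rather than for the class of \hex{} graphs themselves is a fair and accurate clarification of what the paper's terse argument is actually using.
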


\begin{proof}
The underlying graph of \gamename{hex} is planar.
\end{proof}

\section{Ultra-weak Solutions of Connection games}
\label{sec:ultra-weak}
In \citeauthor{Allis1994}'s game solving hierarchy~\cite[Section 1.5]{Allis1994}, a game is said to be \emph{ultra-weakly} solved if the game-theoretic value of the initial position has been determined.
\Hex is unusual among strategy games in that an ultra-weak solution is known for any board size.

In this section, we first recall how \hex can be ultra-weakly solved to prove that the starting position is a theoretical first-player win.
We then consider \twixt, \Hav, and \slither, and show that although some of the arguments carry over, it is impossible to ultra-weakly solve these games with the same method.

\subsection{Hex}
\label{subsec:ultra-weak-hex}

When John Nash discovered/invented the game \hex, one of his motivations was to find a non-trivial game with a non-constructive proof that the first player has a winning strategy in the initial position.
The proof has two steps, prove that the initial position is not a draw, then prove that the initial position is not a second-player win.

The outcome of the initial position is not a game-theoretical draw because no \hex position can end in a draw.
This is a well-known fact but the proof is non-trivial.
We refer to \citet{Bogomolny2016} for a discussion of the issue and further references.
\begin{lemma}
  \label{lem:hex-no-draws}
  Draws cannot occur in \gamename{hex}.
\end{lemma}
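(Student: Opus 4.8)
The plan is to establish the stronger, purely combinatorial fact usually called the \emph{Hex theorem}: in \emph{every} colouring of all the cells of the parallelogram board by black and white, at least one of the two players owns a monochromatic group joining their designated pair of opposite sides. Since a play of \hex adds one stone per move and never removes one, a game lasts at most as many moves as there are cells, and when no legal move remains the board is full; applying the Hex theorem to that full colouring exhibits a winning group, so the terminal position is not a draw. (That this winning group is unique — i.e.\ that the two players cannot both have connected — is a separate fact, provable by a Jordan-curve argument; it is only needed to make ``the winner'' well defined and is not required to rule out draws.)

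To prove the Hex theorem I would first enlarge the board by gluing four one-cell-thick border strips along its sides: white strips along the two sides White must connect and black strips along the two sides Black must connect. Going around the rhombus the side colours alternate, so each of the four corners is a spot where a white strip abuts a black strip; I would fix, once and for all, a convention for the corner cells so that at each corner exactly one unit edge of the hexagonal tiling has a white cell on one side and a black cell on the other. Call an edge of the tiling \emph{active} if, ignoring the outermost boundary of the enlarged figure (where one side is ``outside''), it has a white cell on one side and a black cell on the other. The key local observation is that around any vertex of the tiling lie either two or three cells, so the number of active edges meeting that vertex is even unless the vertex is one of the four corners, where it is exactly one: three cells coloured with two colours have an even number of bichromatic adjacent pairs, and two cells give zero or, only at a corner under our convention, one. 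Hence the active edges form a graph of maximum degree $2$ whose only odd-degree vertices are the four corners; they therefore decompose into simple paths and cycles with exactly four path-endpoints, i.e.\ exactly two maximal paths, each joining two distinct corners.

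Next I would trace one such path $P$ from a corner $c$ to a corner $c'$. Walking along $P$, a short check at each (degree-$2$) vertex shows that the colour on each side of $P$ stays locally constant, so the cells on the ``black side'' of $P$ form a set $S_B$ in which consecutive cells along $P$ share an edge; thus $S_B$ lies inside a single black group, and symmetrically the cells on the other side form a white set $S_W$ inside a single white group. Near $c$ the black cell bordering $P$ is adjacent to the black strip meeting at $c$, and likewise near $c'$; since $\{c,c'\}$ are two distinct corners and any white side and any black side of the rhombus meet in exactly one corner, $c$ and $c'$ cannot share both a black border and a white border. Therefore either the two black strips touched by $S_B$ are the two different Black sides — so $S_B$ witnesses a Black win — or the two white strips touched by $S_W$ are the two different White sides — so $S_W$ witnesses a White win. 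Either way the full board has a winner, which proves the Hex theorem and hence the lemma.

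The step I expect to be the main obstacle is getting the boundary bookkeeping exactly right: pinning down the four corner conventions, verifying the degree-$0/1/2$ dichotomy uniformly at interior, edge, and corner vertices of the enlarged tiling, and checking that the black-side cells along $P$ are genuinely edge-connected rather than merely meeting at a vertex, so that they really lie in one group. These are elementary but fiddly case analyses; everything else reduces to a one-line parity count together with the observation that adjacent sides of the rhombus share exactly one corner.
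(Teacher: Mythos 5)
The paper does not actually prove Lemma~\ref{lem:hex-no-draws}: it states that the fact is well known but non-trivial and defers entirely to the cited discussion of \citet{Bogomolny2016}. Your proposal therefore supplies something the paper omits, and what you supply is the standard combinatorial proof of the Hex theorem (essentially Gale's boundary-path argument): augment the board with monochromatic border strips, observe that the bichromatic edges of the tiling form a graph whose only odd-degree vertices are the four corners, trace one of the two resulting corner-to-corner paths, and use the fact that two distinct corners cannot be incident to both the same black side and the same white side to conclude that one colour class connects its pair of sides. The outline is correct, and your parity count at interior vertices (three mutually adjacent hexagons, hence $0$ or $2$ bichromatic pairs) and the edge-connectivity of consecutive same-coloured cells along the path (any two of the three hexagons around a vertex share an edge) are both sound. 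Two small points you should make explicit when writing it up: first, the conclusion needs the group of \emph{board} cells, not border-strip cells, to touch both of a player's sides --- this follows by taking a path in $S_B$ from one black strip to the other and restricting to the segment strictly between its last strip-one cell and its first strip-two cell, whose endpoints are board cells adjacent to the respective strips; second, as you anticipate, the corner convention must be checked to produce exactly one active edge per corner and none elsewhere on the outer boundary. Neither is a gap in the idea, only in the bookkeeping you already flagged. Your approach proves the stronger ``full board has a winner'' statement, which is exactly what the deferred reference establishes, so your proof is a legitimate self-contained replacement for the citation.
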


To derive the second step, Nash developed a \emph{strategy-stealing argument} which can be summed up as follows~\cite{vanRijswijck2006}.
Suppose for a contradiction that the second player has a winning strategy $\sigma$ in the initial position.
Then a winning strategy for the first player can be obtained as: start with a random move, then apply $\sigma$ pretending that the initial move did not occur.
If $\sigma$ ever recommends to play on the location of the initial move, then play another random move and carry on.
Given that having an additional random stone on the board cannot hurt the first player, then we have developed a winning strategy for the first player too.
Since the second and first player cannot both have a winning strategy at the same time, we can conclude that our hypothesis does not hold.
Therefore the initial \hex position is not a second-player win when the swap rule is not used.
\begin{lemma}
  \label{lem:hex-not-second}
  \Hex is not a theoretical second-player win.
\end{lemma}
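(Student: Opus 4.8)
The plan is to carry out John Nash's \emph{strategy-stealing argument} in full detail. I would start by assuming, towards a contradiction, that the second player has a winning strategy $\sigma$ from the empty board, and then turn $\sigma$ into a winning strategy for the \emph{first} player, whom I will call Alice. Alice opens with an arbitrary move $a_0$, which she declares a \emph{phantom} stone that she will mentally erase. From then on, after each move $m$ by her opponent, Alice answers with the move prescribed by $\sigma$ in the imagined game in which she is the second player, her opponent's stones are exactly the real ones, and her own stones are the real ones minus the current phantom. If $\sigma$ ever tells her to play on the cell currently occupied by the phantom, she instead plays a fresh arbitrary move $a_i$ and promotes $a_i$ to be the new phantom. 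I would check that this prescription is always legal: as long as the game is not over there is an empty cell, and at every stage the real board is obtained from the imagined board by adding a single white stone, so exactly one phantom is maintained throughout and the move parity of the imagined game is correct.

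The second ingredient, and really the heart of the matter, is a monotonicity observation specific to \hex: adding one more stone of one's own color to a position can never convert a win into a non-win, because stones are never moved or removed, so any winning chain present in the imagined position survives verbatim in the real position, which contains all the same white cells and no extra black ones. I would state this as a short lemma and prove it directly from the definition of ``connecting two sides'' recalled in Section~\ref{subsec:ultra-weak-hex}. Granting it, whenever $\sigma$ makes the imagined second player complete a winning connection, Alice has completed the very same connection in the real game; conversely, Alice's opponent cannot have won the real game earlier, since the opponent's real stones are exactly the imagined first player's stones, so an opponent win would make the imagined first player the winner, contradicting that $\sigma$ is winning for the imagined second player.

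Finally, because every game of \hex is finite (no cell is played twice) and, by Lemma~\ref{lem:hex-no-draws}, cannot end in a draw, the imagined game is decided after finitely many moves, necessarily in favour of the imagined second player; hence Alice wins the real game, i.e.\ the first player has a winning strategy from the empty board. Since a single position cannot be won by both players, the hypothesis that $\sigma$ exists is false, which is exactly the claim. The main difficulty I expect is not conceptual but a matter of bookkeeping: making the phantom-swapping construction watertight, in particular verifying that the imagined play really is a legitimate $\sigma$-consistent game at every step and that only one phantom is ever in play. The two substantive facts, monotonicity and the absence of draws, are respectively elementary and already established.
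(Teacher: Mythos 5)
Your proposal is Nash's strategy-stealing argument, which is exactly the proof the paper gives (sketched in the paragraph preceding the lemma): open with an arbitrary move, simulate the hypothetical second-player strategy $\sigma$ while maintaining one ``extra'' stone, rely on the monotonicity of \hex{} (an additional own-colored stone never hurts) and on the absence of draws to conclude, and derive a contradiction since both players cannot have winning strategies. Your version is correct and merely makes the phantom-stone bookkeeping more explicit than the paper does.
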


The strategy-stealing argument applies to \hex for two reasons.
First, it is a symmetrical game, \emph{i.e.}, for any position, there is a map from moves that are legal for the second player to moves that are legal for the first player.
Second, there are no \emph{zugzwangs}: having an additional move never hurts.

This leads to the ultra-weak solution for \hex.
\begin{theorem}
  On any board size, the initial position in \hex is a theoretical first-player win.
\end{theorem}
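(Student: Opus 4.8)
The plan is to deduce the statement directly from the two lemmas just proved, together with the classical determinacy theorem for finite perfect-information games. First I would note that every play of \hex is finite: a stone is placed at every move, stones are never removed, and the board has only $n^2$ cells, so after at most $n^2$ moves the board is full and the game is over. Consequently the game tree is finite, and backward induction (Zermelo's theorem) applies: from the initial position, either the first player has a winning strategy, or the second player has a winning strategy, or the position is a theoretical draw.

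Next I would rule out the draw using Lemma~\ref{lem:hex-no-draws}: since no \hex position whatsoever can end in a draw, in particular the empty position is not a theoretical draw. Hence one of the two players has a winning strategy from the start. Then Lemma~\ref{lem:hex-not-second} — Nash's strategy-stealing argument, which uses that \hex is symmetric and zugzwang-free — tells us the second player does not have such a strategy. The only remaining possibility is that the first player has a winning strategy from the initial position, which is exactly the claim, and it holds for every board size because both lemmas are stated for arbitrary $n$.

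Since the substantive work is packaged in Lemmas~\ref{lem:hex-no-draws} and~\ref{lem:hex-not-second}, no genuine obstacle remains here; the argument is essentially bookkeeping. The one point worth stating carefully is the appeal to determinacy: one should make explicit that a completely filled \hex board is a win for exactly one player (a consequence of Lemma~\ref{lem:hex-no-draws} applied to terminal positions), so that backward induction on the finite game tree actually produces a pure winning strategy for one of the two sides rather than leaving open the possibility of a third outcome. It is also worth remarking, as Nash intended, that this proof is non-constructive: it identifies the winner without exhibiting a winning strategy.
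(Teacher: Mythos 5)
Your proposal is correct and follows essentially the same route as the paper: invoke Zermelo's theorem for finite perfect-information games, then eliminate the draw outcome via Lemma~\ref{lem:hex-no-draws} and the second-player win via Lemma~\ref{lem:hex-not-second}. The additional remarks on filled boards and non-constructiveness are sound but not needed beyond what the paper already records.
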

\begin{proof}
  \Hex is a turn-taking finite two-person game of perfect information with no chance.
  Zermelo's Theorem states that any such game is either a first-player win, a second-player win, or a draw.
  Lemma~\ref{lem:hex-no-draws} and~\ref{lem:hex-not-second} rule out two possibilities and \hex can thus only be a first-player win.
\end{proof}

\subsection{\Twixt and \Hav}

\Twixt and \Hav admit draws in theory but they occur very rarely in practice.
The fact is relatively well-known in the game community, for instance \citet[Section 3.4.5]{Ewalds2012} states that draws are possible for \Hav sizes above 3~\cite[Section 3.4.5]{Ewalds2012}.
Yet, to the best of our knowledge, a proof for arbitrary sizes has not appeared in the related literature.
We give here a general construction for both games illustrated in Figure~\ref{fig:draws}.

\begin{proposition}
  It is possible to reach a drawn position from an empty \twixt board of any size, and from an empty \Hav board of any size strictly larger than 3.
\end{proposition}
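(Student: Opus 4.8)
The plan is to describe, for each game, one explicit \emph{terminal} position that is a draw, and then check that it can be reached by a legal alternating sequence of moves. The only global ingredient needed for the reachability part is that all the winning conditions involved are \emph{monotone}: a top--bottom white path and a left--right black path in \twixt, and a monochromatic ring, bridge, or fork in \Hav, are all preserved when further stones of the relevant colour are added to the board (a ring, being a cycle of stones enclosing at least one cell, survives verbatim; groups only grow). Hence if $\pi$ is a position in which the board is full and neither player has a winning configuration, then no position obtained from $\pi$ by deleting stones contains a winning configuration either. Consequently, provided the two colour-counts of $\pi$ differ by at most one and the majority colour is that of the player to move first, \emph{any} alternating play that places exactly the stones of $\pi$ is legal, never ends the game before the board is full, and terminates at $\pi$. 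So it suffices to exhibit, for every \twixt size and every \Hav size strictly larger than $3$, one full, balanced, win-free two-colouring of the board.

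For \twixt this is immediate. Colour the cells like a chessboard: every knight's move joins cells of opposite colours, so within a colour class no two pawns are ever a knight's move apart, no link of either colour is ever created, and in particular neither player ever connects two opposite sides. The resulting full board is therefore a draw, it is balanced (the two colour classes of an $n\times n$ board differ in size by at most one), and Figure~\ref{fig:draws} displays one such position. Since this uses no link at all, the crossing rule is irrelevant and the same position works for \gamename{twixtpp}.

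For \Hav we use the ``near-striped'' pattern of Figure~\ref{fig:draws} on the size-$m$ hexagonal board with $m>3$. Colour the board in straight alternating strips, so that each colour class is a disjoint union of straight rows and is therefore acyclic; this already rules out rings for both players. The rows that run along, or between, two corners of the hexagon (the top, bottom, and equatorial rows) are then broken into monochromatic pieces, and the colours near the six corners are fixed so that (i) no monochromatic component contains two of the six corner cells, so that no bridge is formed, and (ii) no monochromatic component is incident to three of the six sides, so that no fork is formed. Verifying (i) and (ii), together with the acyclicity of the modified boundary strips, is a finite check that is uniform in $m$ because the colouring is periodic away from a bounded neighbourhood of the boundary; the figure exhibits one period and the treatment of a corner. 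By the monotonicity remark this full board is a genuine draw, and it can be made balanced (the hexagonal board has an odd number of cells, so one colour is used exactly once more than the other, and we give that colour to the first player). The hypothesis $m>3$ is exactly what provides enough room for the broken-off boundary pieces to remain isolated from the adjacent strips; for $m\le 3$ the boundary strips cannot be broken without re-merging with the interior and re-connecting a pair of corners.

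The crux of the argument is the \Hav pattern: unlike in \hex, where no draw can occur (Lemma~\ref{lem:hex-no-draws}) because blocking one player amounts to winning for the other, here we must produce a single board on which \emph{both} players simultaneously fail all three winning conditions, and keeping every monochromatic component both acyclic and ``topologically small'' in both colours — while still tiling the whole hexagon and handling all six corners coherently — is where the design work lies. Everything else (legality, balance, absence of a premature win, and the entire \twixt case) follows from the monotonicity observation together with the chessboard colouring.
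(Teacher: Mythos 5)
Your \Hav half is essentially the paper's own construction (alternating stripes, each colour class a disjoint union of acyclic rows, with the corner-touching and equatorial rows broken so no component holds two corners or meets three edges), with the welcome addition of an explicit monotonicity/reachability remark that the paper leaves implicit; like the paper, you delegate the corner bookkeeping to the figure, so you are no less rigorous there than the published argument.

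The \twixt half, however, has a genuine gap. A pure chessboard two-colouring of the board is not a reachable \twixt position: as the paper itself notes when discussing Figure~\ref{fig:twixt-puzzle18}, only White may place pegs in the top and bottom border rows and only Black in the left and right border columns (and the four corner holes do not exist), so Black pegs can never occupy the chessboard-black cells of the top row. Your claim that Figure~\ref{fig:draws} ``displays one such position'' is also inaccurate: the positions in Figures~\ref{fig:twixt-draw5}--\ref{fig:twixt-draw12} have monochromatic border rows grafted onto a checkered interior, and in those positions same-coloured pegs a knight's move apart \emph{do} occur and links \emph{are} formed (the figures draw several). Consequently your key step --- ``no link of either colour is ever created'' --- fails for any position one can actually reach, and the draw must instead be argued the way the paper does: the links that do arise hug the borders and no connection can penetrate more than two rows inward, so neither player can span the board. (Small boards, which the paper dispatches separately, would also need a word once the uniform chessboard argument is withdrawn.) The parity observation that a knight's move always changes chessboard colour is correct and is morally why the interior pattern is inert, but by itself it does not yield a legal full-board draw.
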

\begin{proof}
  For \twixt, any position on size 3 and 4 is a draw, and an example of draw for size 5 is given in Figure~\ref{fig:twixt-draw5}.
  For larger sizes, if both players keep placing pegs in a checkered layout, no connection can ever go past two rows away from any border.
  This construction ensures a draw on board from size 6.
  Figure~\ref{fig:twixt-draw6} and Figure~\ref{fig:twixt-draw12} display the pattern for size 6 and 12 respectively.

  In case of \Hav, if both players keep placing pegs in a striped layout, no winning condition is ever met.
  Figure~\ref{fig:hav-draw4} and \ref{fig:hav-draw6} displays the pattern for size 4 and 6.
  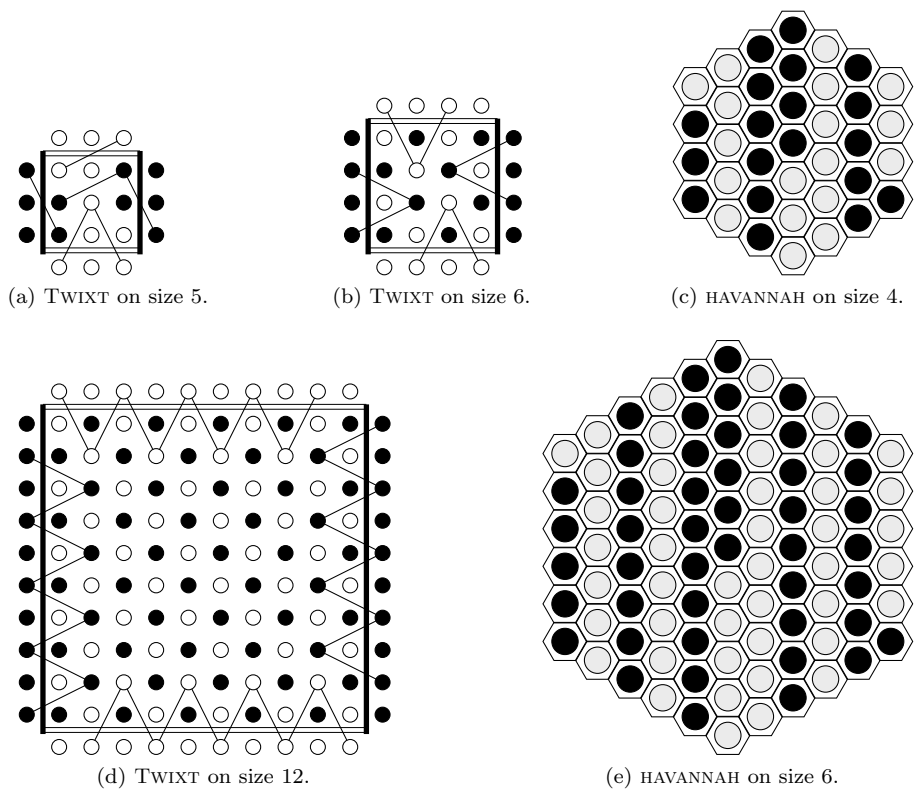
\begin{figure}
    \centering
    \subfloat[\Twixt on size 5.]{\label{fig:twixt-draw5} \begin{tikzpicture}[scale=0.43]
  \begin{scope}[xshift=-0cm,yshift=-0cm,yscale=-1]
    \twboard{5}{5}

    \draw (1, 2) -- (2, 4);
    \draw (2, 2) -- (4, 1);
    \draw (2, 3) -- (4, 2) -- (5, 4);
    \draw (2, 5) -- (3, 3) -- (4, 5);

    \foreach \i in {2,...,4} {
      \node [tw-white] at (\i,1) {}; \node [tw-white] at (\i,5) {}; \node [tw-black] at (1,\i) {}; \node [tw-black] at (5,\i) {};
    }

    \node [tw-black] at (2, 3) {};
    \node [tw-black] at (2, 4) {};
    \node [tw-black] at (4, 2) {};
    \node [tw-black] at (4, 3) {};
    \node [tw-white] at (2, 2) {};
    \node [tw-white] at (3, 2) {};
    \node [tw-white] at (3, 3) {};
    \node [tw-white] at (3, 4) {};
    \node [tw-white] at (4, 4) {};

  \end{scope}
\end{tikzpicture}
\hspace{5mm}}
    \hfill
    \subfloat[\Twixt on size 6.]{\label{fig:twixt-draw6} \begin{tikzpicture}[scale=0.43]
  \begin{scope}[xshift=-0cm,yshift=-0cm,yscale=-1]
    \twboard{6}{6}

    \draw (2, 1) -- (3, 3) -- (4, 1);
    \draw (6, 2) -- (4, 3) -- (6, 4);
    \draw (5, 6) -- (4, 4) -- (3, 6);
    \draw (1, 5) -- (3, 4) -- (1, 3);

    \foreach \i in {2,...,5} {
      \node [tw-white] at (\i,1) {} ;
      \node [tw-white] at (\i,6) {} ;
      \node [tw-black] at (1,\i) {} ;
      \node [tw-black] at (6,\i) {} ;
    }

    \foreach \i in {1,...,2} {
      \foreach \j in {1,...,2} {
        \pgfmathtruncatemacro{\ii}{2 * \i}
        \pgfmathtruncatemacro{\jj}{2 * \j}
        \node [tw-white] at (\ii,\jj) {} ;
      }
    }
    \foreach \i in {1,...,2} {
      \foreach \j in {1,...,2} {
        \pgfmathtruncatemacro{\ii}{2 * \i + 1}
        \pgfmathtruncatemacro{\jj}{2 * \j + 1}
        \node [tw-white] at (\ii,\jj) {} ;
      }
    }
    \foreach \i in {1,...,2} {
      \foreach \j in {1,...,2} {
        \pgfmathtruncatemacro{\ii}{2 * \i + 1}
        \pgfmathtruncatemacro{\jj}{2 * \j}
        \node [tw-black] at (\ii,\jj) {} ;
        \node [tw-black] at (\jj,\ii) {} ;
      }
    }
  \end{scope}
\end{tikzpicture}
}
    \hfill
    \subfloat[\Hav on size 4.]{\label{fig:hav-draw4} \newcommand{\scala}{0.5}
\begin{tikzpicture}[>=stealth',scale=\scala,every node/.style={scale=2*\scala}]

  \havcoordinate{0}{18}{0}{18}
  \foreach \j in {5,...,8} { \node[hav-empty] at  (1-\j)  {};}
  \foreach \j in {4,...,8} { \node[hav-empty] at (1--\j)  {};}
  \foreach \j in {4,...,9} { \node[hav-empty] at  (2-\j)  {};}
  \foreach \j in {3,...,9} { \node[hav-empty] at (2--\j)  {};}
  \foreach \j in {4,...,9} { \node[hav-empty] at  (3-\j)  {};}
  \foreach \j in {4,...,8} { \node[hav-empty] at (3--\j)  {};}
  \foreach \j in {5,...,8} { \node[hav-empty] at  (4-\j)  {};}

  \foreach \j in {5,...,7} { \node[hav-black] at  (1-\j)  {};} \node[hav-white] at  (1-8)  {};
  \foreach \j in {4,...,8} { \node[hav-white] at (1--\j)  {};}
  \foreach \j in {4,...,9} { \node[hav-black] at  (2-\j)  {};}
  \foreach \j in {3,...,5} { \node[hav-white] at (2--\j)  {};}
  \foreach \j in {7,...,9} { \node[hav-black] at (2--\j)  {};} \node[hav-black] at  (2--6)  {};
  \foreach \j in {4,...,9} { \node[hav-white] at  (3-\j)  {};}
  \foreach \j in {4,...,8} { \node[hav-black] at (3--\j)  {};}
  \foreach \j in {6,...,8} { \node[hav-white] at  (4-\j)  {};} \node[hav-black] at  (4-5)  {};
\end{tikzpicture}
} \\
    \subfloat[\Twixt on size 12.]{\label{fig:twixt-draw12} \begin{tikzpicture}[scale=0.43]
  \begin{scope}[xshift=-0cm,yshift=-0cm,yscale=-1]
    \twboard{12}{12}

    \foreach \i in {1,...,4} {
      \pgfmathtruncatemacro{\ii}{2 * \i}
      \pgfmathtruncatemacro{\iii}{2 * \i + 1}
      \pgfmathtruncatemacro{\iv}{2 * \i + 2}
      \draw (\ii,  1) -- (\iii,  3) -- (\iv,  1);
      \draw (12, \ii) -- (10, \iii) -- (12, \iv);

      \pgfmathtruncatemacro{\jj}{13 - 2 * \i}
      \pgfmathtruncatemacro{\jjj}{13 - 2 * \i - 1}
      \pgfmathtruncatemacro{\jv}{13 - 2 * \i - 2}
      \draw (\jj, 12) -- (\jjj, 10) -- (\jv, 12);
      \draw ( 1, \jj) -- ( 3, \jjj) -- ( 1, \jv);
    }
    \foreach \i in {2,...,11} {
      \node [tw-white] at ( \i, 1) {} ;
      \node [tw-white] at ( \i,12) {} ;
      \node [tw-black] at ( 1, \i) {} ;
      \node [tw-black] at (12, \i) {} ;
    }

    \foreach \i in {1,...,5} {
      \foreach \j in {1,...,5} {
        \pgfmathtruncatemacro{\ii}{2 * \i}
        \pgfmathtruncatemacro{\jj}{2 * \j}
        \node [tw-white] at (\ii,\jj) {} ;
      }
    }
    \foreach \i in {1,...,5} {
      \foreach \j in {1,...,5} {
        \pgfmathtruncatemacro{\ii}{2 * \i + 1}
        \pgfmathtruncatemacro{\jj}{2 * \j + 1}
        \node [tw-white] at (\ii,\jj) {} ;
      }
    }
    \foreach \i in {1,...,5} {
      \foreach \j in {1,...,5} {
        \pgfmathtruncatemacro{\ii}{2 * \i + 1}
        \pgfmathtruncatemacro{\jj}{2 * \j}
        \node [tw-black] at (\ii,\jj) {} ;
        \node [tw-black] at (\jj,\ii) {} ;
      }
    }
  \end{scope}
\end{tikzpicture}
}
    \hfill
    \subfloat[\Hav on size 6.]{\label{fig:hav-draw6} \newcommand{\scala}{0.5}
\begin{tikzpicture}[>=stealth',scale=\scala,every node/.style={scale=2*\scala}]

  \havcoordinate{0}{18}{0}{18}
  \foreach \j in {4,...,9}  { \node[hav-empty] at  (0-\j)  {} ;}
  \foreach \j in {3,...,9}  { \node[hav-empty] at (0--\j)  {} ;}
  \foreach \j in {3,...,10} { \node[hav-empty] at  (1-\j)  {} ;}
  \foreach \j in {2,...,10} { \node[hav-empty] at (1--\j)  {} ;}
  \foreach \j in {2,...,11} { \node[hav-empty] at  (2-\j)  {} ;}
  \foreach \j in {1,...,11} { \node[hav-empty] at (2--\j)  {} ;}
  \foreach \j in {2,...,11} { \node[hav-empty] at  (3-\j)  {} ;}
  \foreach \j in {2,...,10} { \node[hav-empty] at (3--\j)  {} ;}
  \foreach \j in {3,...,10} { \node[hav-empty] at  (4-\j)  {} ;}
  \foreach \j in {3,...,9}  { \node[hav-empty] at (4--\j)  {} ;}
  \foreach \j in {4,...,9}  { \node[hav-empty] at  (5-\j)  {} ;}

  \foreach \j in {4,...,8}  { \node[hav-black] at  (0-\j)  {} ;} \node[hav-white] at  (0-9)  {} ;
  \foreach \j in {3,...,9}  { \node[hav-white] at (0--\j)  {} ;}
  \foreach \j in {3,...,10} { \node[hav-black] at  (1-\j)  {} ;}
  \foreach \j in {2,...,10} { \node[hav-white] at (1--\j)  {} ;}
  \foreach \j in {2,...,11} { \node[hav-black] at  (2-\j)  {} ;}
  \foreach \j in {1,...,5}  { \node[hav-white] at (2--\j)  {} ;}
  \foreach \j in {7,...,11} { \node[hav-black] at (2--\j)  {} ;} \node[hav-black] at  (2--6)  {} ;
  \foreach \j in {2,...,11} { \node[hav-white] at  (3-\j)  {} ;}
  \foreach \j in {2,...,10} { \node[hav-black] at (3--\j)  {} ;}
  \foreach \j in {3,...,10} { \node[hav-white] at  (4-\j)  {} ;}
  \foreach \j in {3,...,9}  { \node[hav-black] at (4--\j)  {} ;}
  \foreach \j in {5,...,9}  { \node[hav-white] at  (5-\j)  {} ;} \node[hav-black] at  (5-4)  {} ;
\end{tikzpicture}
}
    \caption{Drawn \twixt and \Hav positions.
    The \twixt positions would also be drawn under the \gamename{twixtpp} variant.}
    \label{fig:draws}
  \end{figure}
\end{proof}

\begin{lemma}
  \Twixt (resp.~\Hav) is not a theoretical second-player win.
\end{lemma}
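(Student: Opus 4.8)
The plan is to re-run John Nash's strategy-stealing argument, exactly as in the proof of Lemma~\ref{lem:hex-not-second} for \hex, after checking that its two prerequisites hold for \twixt and for \Hav. Recall that the argument needs: (a) \emph{symmetry} --- in every reachable position, the set of moves available to the player to move does not depend on which player it is; and (b) \emph{absence of zugzwang} --- if a position is a win (or at least a non-loss) for a player, then adding one extra stone of that player's colour on any empty cell keeps it a win (non-loss) for them.

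\textbf{Checking the prerequisites.} Property (a) is immediate for both games: apart from the purely bookkeeping act in \twixt of forming or deleting one's own links, a move is nothing but placing a stone of one's colour on an empty cell (in \Hav) or an empty hole (in \twixt), and the set of empty positions does not depend on whose turn it is. Property (b) for \Hav holds because an extra stone of a colour can only merge or enlarge that colour's groups, and none of the three winning structures can be \emph{destroyed} by adding a friendly stone: a \emph{bridge} or a \emph{fork} only needs a connected group touching the relevant corners or edges, and the cell enclosed by a \emph{ring} is explicitly allowed to be occupied, so filling it in is harmless.

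\textbf{The subtle case.} For \twixt, (b) is the one point that needs care, since placing a friendly peg automatically creates knight's-move links of that colour, and a link can prevent a later crossing link of either colour. The resolution uses the rule that a player may delete their own links on their own turns. In the strategy-stealing line the ``ghost'' peg is always placed on one of the first player's turns, so the first player can always clear the ghost's links before any conflict with one of their own planned links would matter; and while such a link is present it is a link of the first player's colour, hence never a disadvantage to them --- at worst it blocks a link of the \emph{second} player, which helps the first player. Thus (b) holds for \twixt too, and the same remark shows the conclusion carries over to \gamename{twixtpp}. I expect this verification for \twixt to be the only real obstacle; everything else is verbatim from the \hex case.

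\textbf{Conclusion.} With (a) and (b) established, the argument proceeds as in Lemma~\ref{lem:hex-not-second}: assuming the second player had a winning strategy $\sigma$ from the empty board, the first player plays an arbitrary opening move, then follows $\sigma$ in the imagined game obtained by pretending that extra stone is absent, and whenever $\sigma$ prescribes the cell currently holding the extra stone, plays a fresh arbitrary move instead, which becomes the new extra stone. By (b) the first player wins the real game, contradicting that the second player also wins that same play. Hence neither \twixt nor \Hav is a theoretical second-player win. Note this says nothing about draws --- consistently with the preceding proposition, both games do admit draws --- so only the second-player-win outcome is excluded, which is exactly the claimed statement.
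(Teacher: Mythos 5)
Your proof is correct and takes exactly the same route as the paper: the paper's own proof is a one-line assertion that zugzwangs are impossible and the game is symmetrical, so the strategy-stealing argument applies. Your additional verification of the two prerequisites, in particular the careful handling of the automatically created links in \twixt, only fleshes out details the paper leaves implicit.
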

\begin{proof}
  Zugzwangs are not possible in \twixt (resp.~\Hav), and the game is symmetrical, therefore, the strategy-stealing argument applies to \twixt (resp.~\Hav).
\end{proof}

Although the initial positions of \twixt or \Hav are not second-player wins, draws are possible for boards sufficiently large, and so we cannot conclude that either is a first-player win.

\subsection{Slither}
We now present some observations on and properties of \slither.
Some of the observations made in this section have independently been pointed out earlier in the abstract game community, especially on BoardGameGeek.\footnote{\url{http://boardgamegeek.com/thread/692652/what-if-there-no-legal-move}}
However, we prove for the first time that the standard \slither variant cannot end in a draw, thereby settling an open-problem often raised in this community.

The concept of \emph{zugzwang} appears in \gamename{chess} and denotes a position in which the current player has no desirable move and would rather pass and have the opponent act.
A \emph{mutual zugzwang} is a position in which both players would rather have the opponent play.
Although zugzwangs are virtually unheard of in typical connection games, where additional moves can never hurt you, things are different in \slither.

\begin{proposition}
  Zugzwangs and mutual zugzwangs can occur in \slither.
\end{proposition}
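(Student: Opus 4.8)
The plan is to exhibit explicit \slither positions and to settle their game-theoretic value by a short, finite case analysis. The structural observation underpinning everything is that in \slither the only compulsory part of a move is the placement of a stone of one's own colour: the relocation is optional, and a freshly placed stone of one's own colour can never by itself help the opponent. Consequently the only mechanism that can make a move undesirable is the \emph{diagonal rule}, which can force the mover to accompany a placement with a relocation. So I would look for a near-full board on which every empty square is \emph{poisoned} for the colour to move --- dropping a stone of that colour there would leave two diagonally-adjacent like-coloured stones with no common orthogonally-adjacent stone, making the bare placement illegal --- while every relocation the mover could legally perform (moving one of their own stones onto a king-adjacent empty square) dismantles a part of their barrier, after which the opponent connects their two sides.

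First I would build the mutual-zugzwang example. Take a board filled except for a handful of empty squares, chosen so that each is simultaneously poisoned for Black and for White (a square with a black stone on one diagonal, a white stone on the opposite diagonal, and no like-coloured orthogonal neighbour available as the ``shared'' stone has this property). I would arrange the remaining stones so that: (i) as it stands, neither colour possesses a winning group --- the black stones already block every left-right white path and the white stones block every top-bottom black path; and (ii) for each colour, the only stones king-adjacent to an empty square lie on that colour's own barrier, so that any legal move the mover can make is forced to punch a hole through their own wall, whereupon the opponent wins on the reply. Then the position is a loss for whoever is to move: each player would strictly prefer the other to be on turn, which is exactly a mutual zugzwang (and in particular a zugzwang).

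For a plain zugzwang I would adjust the pattern so that, with Black to move, Black is lost exactly as above, but the same position with White to move is not a loss for White --- for instance by leaving one extra empty square that is poisoned for Black only, or on which White may relocate without cost, so that White has a harmless move. This shows that the notion of zugzwang already applies to a single position with a designated player to move. In both cases the verification is finite and mechanical: enumerate the finitely many empty squares, the finitely many stones king-adjacent to them, and the resulting legal moves; check that each such move is either forbidden by the diagonal rule or leaves the opponent a one-move connection; and check once and for all that no winning group is already present and that the two barrier conditions hold.

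The genuine difficulty is entirely in the gadget design: one must fix a board size and an exact stone pattern making the three requirements --- no pre-existing winning group, diagonal-poisoning of every relevant empty square, and fragility of every available relocation --- hold simultaneously, with no overlooked escape such as relocating some other stone, a placement that turns out legal thanks to a conveniently placed orthogonally-adjacent stone, or a relocation that re-blocks as it opens. This is fiddly but bounded, and it is precisely the diagonal rule --- the feature that makes extra moves potentially harmful in \slither, in contrast with \Hex, \Twixt, and \Hav --- that makes such a configuration possible at all.
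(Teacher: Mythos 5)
Your overall strategy is the right one and is the same as the paper's: exhibit an explicit near-full position in which every bare placement is blocked by the diagonal rule and every available relocation opens a hole in the mover's own barrier that the opponent exploits on the reply. The difficulty is that you never actually produce such a position. The proposition is a pure existence claim, so the witness \emph{is} the proof; a description of the properties a witness ought to have, together with the remark that verifying any candidate is ``finite and mechanical,'' does not establish that a candidate satisfying all the constraints simultaneously exists on a rectangular board. You flag this yourself when you write that ``the genuine difficulty is entirely in the gadget design'' --- that is precisely the step that is missing, and without it the argument could in principle fail (one must check, for instance, that the barriers can be made mutually non-winning while still leaving every empty square poisoned for \emph{both} colours and every king-adjacent relocation fatal).

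For comparison, the paper's proof consists of a single concrete $5\times 5$ position (Figure~\ref{fig:zugzwang}) with three empty squares in a row; each empty square is diagonally poisoned for both players, each player has exactly one legal move (a forced relocation plus placement within that row), and that move loses immediately. One position suffices for both claims, since a mutual zugzwang is in particular a zugzwang --- your plan to build a second, separate example for the non-mutual case is unnecessary. To complete your proof you would need to supply and verify an explicit configuration of this kind.
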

\begin{proof}
  In Figure~\ref{fig:zugzwang}, if it is White (resp. Black) to play, only one move is available, moving stone on C2 (resp. C4) to $a$ and placing a stone on $c$ or equivalently moving to $c$ and placing on $a$.
  Then Black (resp. White) wins by placing a stone on C2 (resp. moving stone B5 to C4 and placing a stone on $b$).
\end{proof}
\begin{figure}[t]
  \centering
  \subfloat[Mutual zugzwang: no good move for Black and no good move for White.]{
    \label{fig:zugzwang}
    \hspace{6mm}
    \begin{tikzpicture}
      \sboard{5}{5}
      \sedges{5}{5}
      \sposition{1-1,1-4,2-4,2-5,3-1,3-2,4-4,4-5,5-1,5-3,5-4}{1-2,1-3,1-5,2-1,2-2,3-4,3-5,4-1,4-2,5-2,5-5}

      \node (A) at (0.5,-0.1) {A};
      \node (B) at (1,  -0.1) {B};
      \node (C) at (1.5,-0.1) {C};
      \node (D) at (2,  -0.1) {D};
      \node (E) at (2.5,-0.1) {E};

      \node (1) at (0,0.5) {1};
      \node (2) at (0,1) {2};
      \node (3) at (0,1.5) {3};
      \node (4) at (0,2) {4};
      \node (5) at (0,2.5) {5};

      \node (u) at (2-3) {$a$};
      \node (u) at (3-3) {$b$};
      \node (u) at (4-3) {$c$};
    \end{tikzpicture}
    \hspace{8mm}
  }
  \hfill
  \subfloat[No move allowed for Black, only White has legal moves.]{
    \label{fig:nomove}
    \hspace{6mm}
    \begin{tikzpicture}
      \sboard{4}{5}
      \sedges{4}{5}

      \sposition{1-2,1-3,2-1,2-2,2-5,3-1,3-4,3-5,4-3,4-4}{1-1,1-4,1-5,2-4,3-2,4-1,4-2,4-5}

      \node (A) at (0.5,-0.1) {A};
      \node (B) at (1,  -0.1) {B};
      \node (C) at (1.5,-0.1) {C};
      \node (D) at (2,  -0.1) {D};

      \node (1) at (0,0.5) {1};
      \node (2) at (0,1) {2};
      \node (3) at (0,1.5) {3};
      \node (4) at (0,2) {4};
      \node (5) at (0,2.5) {5};
    \end{tikzpicture}
    \hspace{4mm}
  }
  \caption{\gamename{slither} positions with a shortage of moves.}
\end{figure}
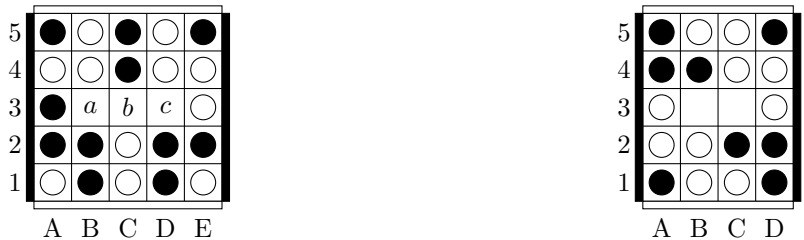

As we have seen, the strategy-stealing argument can be applied to many other games including \twixt, \Hav, and games of the \gamename{connect}($m, n, k$) family~\cite{HsiehT2007}.
Unfortunately, the argument cannot be applied to \slither.

\begin{proposition}
  Nash's strategy-stealing argument does not apply in \slither.
\end{proposition}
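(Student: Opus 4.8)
The plan is to isolate precisely which ingredient of Nash's strategy-stealing argument fails for \slither, and to certify that failure by appealing to the zugzwang phenomenon established just above. First I would recall the structure of the argument as used for \hex: beyond symmetry, it relies on the \emph{monotonicity} property that owning an extra stone of one's colour is never a handicap. This is exactly what allows the ``stealing'' first player to carry, throughout the game, one surplus stone relative to the virtual game in which the hypothetical second-player strategy $\sigma$ is being followed, and to conclude that each move prescribed by $\sigma$ remains both legal and winning in the augmented real position.

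Then I would show that this scaffolding collapses in \slither, in three increasingly severe ways. Firstly, a move prescribed by $\sigma$ need not even remain legal: because the diagonal rule makes the legality of a relocation-and-placement depend on the whole configuration of the mover's own stones, the surplus stone can forbid exactly the move $\sigma$ wants to make --- whereas in \hex $\sigma$ only ever calls for an empty cell, which stays empty. Secondly, and decisively, I would invoke the proposition proved immediately above: zugzwangs (and mutual zugzwangs) occur in \slither, as witnessed by Figures~\ref{fig:zugzwang} and~\ref{fig:nomove}. A zugzwang is by definition a position in which being handed an extra move --- equivalently, for a placement game, an extra stone --- is strictly harmful, so monotonicity, the very hinge of strategy stealing, is false for \slither. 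Thirdly, Figure~\ref{fig:nomove} also exhibits a position with no legal move at all for one player, which breaks the ``play another arbitrary move'' fallback step invoked when $\sigma$ tries to play on the surplus stone's square.

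Finally I would assemble these observations into the conclusion that the standard strategy-stealing argument does not transfer to \slither --- while being careful not to overclaim: we refute a hypothesis of the argument rather than asserting that the empty board is a second-player win. The only delicate point, and the one I would spend most care on, is phrasing ``the argument does not apply'' rigorously; the cleanest route is to point to a concrete position (already supplied by the preceding proposition) in which a surplus friendly stone is demonstrably bad, so that no repair of the argument along the original lines is possible.
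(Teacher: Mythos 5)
Your proposal is correct and takes essentially the same approach as the paper, which refutes the monotonicity hypothesis of the strategy-stealing argument by exhibiting the concrete position of Figure~\ref{fig:zugzwang}: had A4 not been occupied by a White stone, White would have a winning move, so a surplus friendly stone can be demonstrably harmful. One small caution: a zugzwang (an extra \emph{move} hurts) does not by itself entail that an extra \emph{stone} hurts, but your final paragraph correctly identifies that the concrete certificate --- a position where deleting one of your own stones turns a loss into a win --- is what actually closes the argument, and that is exactly what the paper supplies.
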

\begin{proof}
  Consider the White zugzwang position in Figure~\ref{fig:zugzwang}.
  Had the A4 square not been occupied with a White stone, White would have a winning move: move B4 to $b$ and place a stone on $c$.
  Since there are positions where having one too many stones on the board can make a player lose the game, Nash's strategy-stealing argument does not apply.
\end{proof}

Therefore, there is no theoretical indication yet that \slither is \emph{not} a second-player win on an empty board.
However, in practice it is a huge advantage to play first, so much that if the swap rule is used, it is recommended to swap no matter where the first move is played, including corner locations.
The \slither-specific intuition behind this practical advice is that the game is dynamic and a player can bring back a stone from a corner towards the center, moving it closer every turn.

Draws would occur in \slither if there were positions where no player has a legal move and yet no player connects their own sides.
The \slither community had indeed identified non-terminal positions in which one of the players had no legal moves, but the opponent always had at least one move possible.
In that case, the player with no legal moves would simply skip their turn.

\begin{proposition}
  There exist positions in which a player has no legal moves.
\end{proposition}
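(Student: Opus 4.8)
The plan is to settle the statement by exhibiting an explicit witness and checking by hand that one player is stuck. Concretely, I would take the $4\times 5$ position drawn in Figure~\ref{fig:nomove}, with Black to move, and argue that Black has no legal move. The first step is to identify the empty squares: in this position only two squares are empty, namely B3 and C3 in the figure's coordinates. By the rules, every Black move ends by placing a black stone on an empty square, so the mandatory placement must land on B3 or C3; the optional relocation, if it is used, sends an existing black stone to a \emph{king-adjacent empty} square, hence also to B3 or C3. After such a relocation exactly two squares are empty (the other of B3/C3, together with the square the relocated stone just vacated), so the mandatory placement then has only two possible targets. Consequently there are only a handful of candidate moves to inspect: the two placements with no relocation, plus, for each of the few black stones king-adjacent to B3 or to C3, the two placements that become available after relocating it --- roughly a dozen cases in all.

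I would then rule out every candidate by the \emph{diagonal rule}. For the two moves with no relocation a one-line check suffices: placing a black stone on B3 creates a diagonally-adjacent pair with the black stone on C2 whose only common orthogonal neighbours (B2 and C3) are not black, and placing a black stone on C3 creates a diagonally-adjacent pair with the black stone on B4 whose only common orthogonal neighbours (B3 and C4) are not black. For the moves that involve a relocation there are two ways a candidate can fail, and I would treat them uniformly: either the relocated or newly placed stone forms an illegal diagonal pair exactly as above, or moving a stone away from its square destroys the orthogonal black stone that was shielding some other diagonally-adjacent pair of black stones. For instance, relocating the stone on A4 leaves the black stones on B4 and A5 diagonally adjacent with no common black orthogonal neighbour; relocating the stone on D2 breaks the shielding of the diagonally-adjacent pair on C2 and D1; and any placement that revisits B3 runs again into the B3--C2 conflict, while any placement that revisits C3 runs again into the C3--B4 conflict. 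Going through the (at most a dozen) candidates this way shows that none of them is a legal move for Black, so Black indeed has no legal move, which proves the proposition.

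The bulk of the work --- and the only place an error could creep in --- is the bookkeeping of this finite case analysis: one must enumerate \emph{all} candidate moves, keeping in mind that a relocation frees a square on which the subsequent placement may occur and that a relocation immediately followed by a placement on the vacated square is still a single move. To keep the number of cases under control I would organize the verification as a small table indexed by the pair (relocated stone, placement square), which reduces each individual check to comparing two short lists of neighbouring cells against the current colouring. The colour-symmetric statement (a White player can likewise be left without a move) needs no separate argument: it follows from the same reasoning applied to a colour-swapped position. Note that this proposition does not by itself address whether \slither admits drawn terminal positions, since in Figure~\ref{fig:nomove} the opponent still has moves; that finer question is handled separately.
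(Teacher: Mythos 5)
Your proof is correct and takes essentially the same approach as the paper: the paper's proof of this proposition is the single sentence ``For instance, Black has no legal move in the position Figure~\ref{fig:nomove}'', exhibiting exactly the witness you chose and leaving the verification to the reader. Your explicit case analysis (only B3 and C3 are empty; every placement, with or without a prior relocation, either creates an unshielded diagonal pair such as B3--C2 or C3--B4, or destroys a shield such as D2 for the C2--D1 pair or A4 for the A5--B4 pair) is precisely the routine check the paper leaves implicit, and it holds up.
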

\begin{proof}
  For instance, Black has no legal move in the position Figure~\ref{fig:nomove}.
\end{proof}

The designer of \slither has long claimed that the game did not admit any draw.
Since there were no formal proofs, many members of the community were left unconvinced and attempted to find counter-examples.
They would submit positions on forums dedicated to \slither, and Corey Clark or some \slither players would point out that the counter-example was not valid, usually because the diagonal rule was not respected or because a legal move actually existed for one of the players.
Before settling this question formally, let us point out that draws can actually occur when the board topology is not restricted to be rectangular.

\begin{proposition}
  Draws are possible when \slither is played on a cylinder or on a torus.
\end{proposition}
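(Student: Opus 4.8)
The plan is to exhibit, for each surface, an explicit position that is \emph{terminal} --- neither player has a legal move --- and in which neither player owns a winning group; such a position is, by definition, a draw, and it is reached by cooperative play from the empty board. I first fix how the winning conditions extend to the two surfaces. On the cylinder one pair of opposite sides of the board is identified; the player whose sides have been glued now wins by forming a group that contains a non-contractible loop circling the cylinder, and the other player's condition is unchanged. On the torus both pairs of opposite sides are identified, and White (resp.~Black) wins by forming a group containing a non-contractible loop of the vertical (resp.~horizontal) homotopy class, so that a loop of any other class --- in particular a diagonal one --- realises neither player's goal. This last point is exactly why the torus behaves differently from the rectangular board, on which \slither admits no draw, and from the cylinder: on the cylinder the unique essential loop direction coincides with one player's winning direction, whereas the torus admits essential loops of a diagonal class that belong to no player.

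\medskip
\emph{Torus.} Take $n \equiv 0 \pmod{4}$ and fill the whole $n \times n$ torus by the \emph{diagonal striped} colouring: square $(i,j)$ carries a black stone if $(i+j) \bmod 4 \in \{0,1\}$ and a white stone otherwise, so that the stones form alternating diagonal bands of width two. Two facts must be checked. First, the \emph{diagonal rule}: no two same-coloured stones are diagonally adjacent along the $(1,1)$ direction (their coordinate sums differ by $2$, hence they lie in different bands), while any two same-coloured stones that are diagonally adjacent along the $(1,-1)$ direction have, by a one-line check on residues modulo $4$, exactly one of the two squares between them occupied by a stone of the same colour --- the required common orthogonal neighbour. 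Second, \emph{no winning group}: each colour class is a disjoint union of connected diagonal bands, every such band being an essential loop of homotopy class $\pm(1,-1)$, and a loop contained in a single band has class a multiple of $(1,-1)$; hence no monochromatic group contains a loop of class $(1,0)$ or $(0,1)$, so neither player wins. As the board is full, no stone can be placed, so the position is terminal --- a draw. Reachability from the empty board is routine: lay the stones down band by band, within each band in the order of its staircase, so that every intermediate position still satisfies the diagonal rule; the optional relocation is never needed.

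\medskip
\emph{Cylinder, and the main obstacle.} On the cylinder a \emph{full} board is never drawn: by a Hex-style planarity argument, the circling player holds a winning loop if and only if that colour separates the two un-glued sides, if and only if the opponent fails to join them, so exactly one player wins. The cylinder draw must therefore be a non-full, ``stuck'' position, and the plan is to tile the cylinder \emph{periodically} with a small local pattern --- in the spirit of the ``no legal move'' configuration of Figure~\ref{fig:nomove} --- that leaves some squares empty but is arranged so that (i)~every empty square of the tiling is blocked for both colours and (ii)~no monochromatic group circles the cylinder. Condition~(ii) is easy to design for. Condition~(i) is the main obstacle: since a move may first vacate a square by the optional relocation and only then place a stone, one has to verify that no relocation anywhere in the tiling can unblock an empty square, and this interplay between the relocation step and the diagonal rule across the seams between adjacent copies of the gadget is the combinatorial heart of the argument. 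Once one such periodic pattern is fixed and these two conditions verified, tiling a cylinder of appropriate dimensions gives the desired drawn position, reachability being argued as before.
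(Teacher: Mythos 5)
Your cylinder case is not a proof. You correctly observe that a full cylinder board cannot be drawn, so a drawn position must be a non-full \emph{stuck} position; you then describe a plan to tile the cylinder with a periodic gadget, explicitly identify the hard step---verifying that no relocation anywhere in the tiling can unblock an empty square under the diagonal rule, including across the seams---and stop without exhibiting any concrete pattern or performing that verification. That verification is the entire content of the claim and cannot be deferred. The paper resolves it with a single small example rather than a periodic family: Figure~\ref{fig:cylinder} is a $4\times 3$ cylinder position whose middle row is empty and whose eight stones are pairwise non-adjacent singleton groups, arranged so that (as one checks directly, using the wrap-around adjacency between the first and last columns) every placement, with or without a preceding relocation, creates a forbidden diagonal configuration. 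No tiling argument is needed. As written, half of the proposition remains unproved in your proposal.

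Your torus construction is a genuinely different and essentially sound route. The paper again exhibits a small stuck non-full position (Figure~\ref{fig:torus}), whereas you fill the board completely with diagonal bands of width two; your residue check of the diagonal rule is correct, and the homotopy argument that every monochromatic loop has class a multiple of $(1,-1)$ is fine. What this buys you is that terminality is immediate (a move requires a mandatory placement and the board is full), at the price of two caveats. First, the argument hinges on your explicit convention that a player wins exactly via a loop of a prescribed homotopy class; the paper never commits to this, and indeed under your reading the paper's own torus figure contains an essential horizontal black loop, so the paper evidently treats the identification of a player's sides as simply removing that player's winning condition. Second, reachability ``band by band'' ignores that the players alternate: the placement order must interleave one stone of a black band with one stone of a white band while keeping every prefix legal, which is repairable but not what you wrote.
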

\begin{proof}
  In Figure~\ref{fig:cylinder} (resp.~Figure~\ref{fig:torus}), if black (resp.~black and white) sides are connected, then both players have no legal moves.
\end{proof}

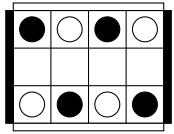
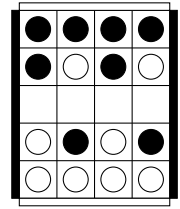
\begin{figure}
  \centering
  \subfloat[Cylinder board: left and right edges are connected.]{
    \hspace{7mm}
    \begin{tikzpicture}\label{fig:cylinder}
      \sboard{4}{3}
      \sedges{4}{3}
      \sposition{1-1,2-3,3-1,4-3}{1-3,2-1,3-3,4-1}
    \end{tikzpicture}
    \hspace{7mm}
  }
  \hfill
  \subfloat[Torus board: left and right edges are connected and top and bottom edges are connected.]{
    \hspace{9mm}
    \begin{tikzpicture}\label{fig:torus}
      \sboard{4}{5}
      \sedges{4}{5}
      \sposition{1-1,1-2,2-1,2-4,3-1,3-2,4-1,4-4}{1-4,1-5,2-2,2-5,3-4,3-5,4-2,4-5}
    \end{tikzpicture}
    \hspace{5mm}
  }
  \caption{Drawn \slither positions on non-planar boards.}
  \label{fig:cylindertorus}
\end{figure}

Draws are possible on some exotic boards.
So, there is probably no fundamental reasons why the following result is true, and the proof, which we defer to the appendix, consists of a large case analysis on the consequences of forbidding diagonal configurations and the possibility of moving stones.

\begin{theorem}
  Draws are not possible in \slither on rectangular boards.
\end{theorem}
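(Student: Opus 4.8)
The plan is to prove that no legal \slither position can be simultaneously a non-win and \emph{terminal} (neither player able to move), distinguishing according to whether or not the board is completely filled. I would first record a consequence of the diagonal rule applied to a single colour, say White: if two White stones are king-adjacent but not orthogonally adjacent, then one of their two common orthogonal neighbours also carries a White stone. Consequently, restricted to the White stones, king-connectivity and orthogonal connectivity coincide (reroute each king-step through such an intermediate White stone), and likewise for Black. Now suppose every cell of the $n\times n$ board is occupied. By the classical square-grid ``Hex'' dichotomy --- a digital Jordan-curve statement analogous to Lemma~\ref{lem:hex-no-draws} --- either the black stones orthogonally join the left and right columns, or the white stones join the top and bottom rows through king moves; by the previous remark the latter is in fact an orthogonal connection as well. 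Either way there is a winner, so a filled board is never a draw, and it remains only to treat positions with at least one empty cell.

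For that case I would show that a non-winning legal position with an empty cell always admits a legal move, localising the argument by picking the empty cell $e$ that lies in the topmost possible row and, among those, in the leftmost possible column. This choice guarantees that the cells immediately to the north, north-west, north-east and west of $e$ (those that exist) are all occupied, which trims the configurations to be examined considerably. If $e$ has no diagonally adjacent White stone, then placing a White stone on $e$ with no relocation is already legal --- adding a stone can only provide extra common neighbours, never create a forbidden diagonal pair --- and symmetrically for Black; so we may assume $e$ has diagonally adjacent stones of both colours, which by the extremal choice pins down, in particular, the north-west and north-east neighbours.

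In the remaining configurations I would produce a move by invoking the \emph{optional relocation}: either slide the stone responsible for blocking a placement on $e$ out of the way, or slide one of the moving player's stones onto a cell that thereby becomes a common orthogonal neighbour of $e$ and that blocking stone, and then make the mandatory placement on $e$ (or on the square just vacated by the relocated stone). Since the extremal position of $e$ fixes the colours on its northern and western neighbours, the analysis comes down to running through the finitely many colourings of the squares $E$, $S$, $SW$, $SE$ around $e$ --- roughly halved by the board's left--right symmetry --- exhibiting such a move in each, together with the easier degenerate cases in which $e$ sits on the top row or on the leftmost column.

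The hard part, and the reason the full argument is relegated to an appendix, is precisely this last step. The obstacle is less any individual case than the bookkeeping: one must combine the diagonal rule with the freedom of relocation and, crucially, verify that the relocate-and-place move produced does not itself introduce a forbidden diagonal configuration somewhere else --- especially among the squares \emph{below} $e$, about which the extremal choice of $e$ gives no information at all. Choosing the right extremal cell and the right small repertoire of candidate replies so that every branch of the analysis is covered is where the real work lies.
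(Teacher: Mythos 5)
Your architecture is exactly the paper's: split into (i) every filled board has a winner and (ii) every non-filled board leaves a legal move for at least one player, prove (i) by observing that the diagonal rule forces any two diagonally adjacent same-coloured stones to share an orthogonal neighbour of that colour and then invoking the \gamename{hex}-style dichotomy on the grid with one diagonal direction added, and prove (ii) by picking an extremal empty cell and doing a case analysis. Part (i) is correct as you state it. The problem is part (ii), which you outline but do not carry out, and that is not a cosmetic omission: it is the entire content of the theorem and of the paper's appendix. The claim that some player can always move is exactly what fails on cylinders and tori (Figure~\ref{fig:cylindertorus}), where non-filled terminal positions exist, and the paper's Figure~\ref{fig:nomove} already shows single players can be moveless on rectangles; so no generic or purely local argument can close this step, and ``run through the finitely many colourings of $E$, $S$, $SW$, $SE$'' is not a proof until every branch actually terminates in an exhibited legal move.

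Moreover, your framing of the case split understates what is needed. In the paper's analysis (Appendix~\ref{app:nodraws}) the initial split on the three bottom-left neighbours of the extremal empty square is only the seed: each branch then propagates constraints outward --- ``this square must be white or else such-and-such relocate-and-place move is legal'' --- over regions as large as $5\times 4$ and $5\times 5$, with further sub-splits whenever a square's status is undetermined, before a forced legal move finally appears. The rectangular boundary enters precisely through the observation that if a required blocking stone would have to sit off the board, the move it was meant to block is available. Your sketch correctly identifies the two dangers (the relocation must not create a forbidden diagonal pair elsewhere, and the squares on the unconstrained side of $e$ are unknown), but until the propagation is executed and every leaf of the tree ends in a concrete legal move, statement (ii) --- and hence the theorem --- remains unproved.
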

\begin{proof}
  This theorem relies on two properties of \slither.
  First, if a \slither board is filled, then at least one player has a winning group.
  Second, if a board is not filled, then at least one player has a legal move.
  As as a result, the game always continues, possibly with a player passing, until one player has a winning group
  Since each move adds a stone to an empty intersection, these properties ensure that the game is bound to end after a finite number of turns with a winner.
  The Appendix proves the two essential properties mentioned above.
\end{proof}





\section{Discussion and Conclusion}
Unlike \gamename{chess}, \gamename{checkers}, and \gamename{go}, the game of \hex is not played on a grid board but uses a hexagonal paving.
The simplest connection game on a grid, using \hex rules on a checkered board where each cell is connected to the 4 vertically and horizontally adjacent ones, easily leads to draws in almost all cases and is not interesting for players.
Designing an interesting connection game played on the usual grid board is challenging and \gamename{twixt} and \slither can be seen as two different solutions.
Other solutions include Jo{\~a}o Pedro Neto's \gamename{gonnect} and Mark Steere's \gamename{crossway}, and \citeauthor{Browne2005}'s book provides many more examples~\cite{Browne2005}.
Each comes with its own set of mechanics and trade-offs.

\gamename{Twixt} eschews adjacent connections in favor of Knight's move connections.
In a sense, this changes the topology of the board fundamentally as, contrary to \hex and most other connection games, the graph of all possible direct connections is not planar anymore.
Another consequence is that \gamename{twixt} allows draws, but this difference is mainly of theoretical interest as draws remain extremely rare in competitive play.

\Sl keeps the grid topology intact but addresses draws by adding forbidden patterns and stone movement.
Forbidding patterns alone would not be sufficient to prevent draws: for instance Figure~\ref{fig:zugzwang},~\ref{fig:nomove}, and~\ref{fig:slither-hex} would all be drawn if it was not for the possibility of moving existing pieces.
Not only does moving pieces significantly changes the dynamics of the game, but it also adds the theoretical possibility of zugzwangs.
Again, this remains mainly of theoretical interest as zugzwangs virtually never seem to happen in practice.

\gamename{Havannah} keeps the convenient hexagonal paving of \hex but explores a new dimension in the realm of connection game design by adding the \emph{ring} winning condition.
A board could in principle support a ring or more from both players, except for the fact that the game ends as soon as the first one is completed.
This leads to winning races where both players attempt to satisfy their own ring patterns, possibly non-intersecting, while disrupting the opponent's strategy by more immediate threats.

The hexagonal versus grid paving issue is of some importance in game design.
It could seem, at first sight, that this difference would make reductions from \hex easier for \Hav than for \gamename{twixt} and \slither.
On the contrary, we only presented a \hex-based reduction for \gamename{twixt} and \slither, and in both reductions, the size of the resulting game is only linearly larger than the size of the input instance.

As a matter of fact, the ring winning condition in \Hav makes obtaining a \pspace-hardness proof by direct reduction from \hex quite difficult.
Attempts by other authors and ourselves at finding a \Hex-based reduction centered around the bridge or the fork winning conditions proved unfruitful.\footnote{Fabien and Olivier Teytaud, personal communication and~\cite{TeytaudT2010}.}
Instead, disregarding bridges and forks, a sequence of ring threats allowed us to reduce from \gamename{generalized geography}, the problem used in Reisch's proof for \hex.
On the other hand, if the \Hav rules were changed to drop the ring winning condition, a direct simulation of \hex on the \Hav board would be possible, just like it is for \gamename{y}, and \pspace-completeness of this new game would be straightforward.



In terms of future work, settling the complexity of other notable connection games remains a natural direction.
In particular, \gamename{gonnect} and \gamename{lines of action} are good candidates.
In \gamename{lines of action}, each player starts with two groups of pieces and tries to connect all their pieces by moving these pieces and possibly capturing opponent pieces~\cite{WinandsBS2010}.
While the goal of \gamename{lines of action} clearly makes it a connection game, the mechanics distinguishes it from more classical connection games as no pieces are added to the board and existing pieces can be moved or removed.
\gamename{Gonnect} is best described as a mix between \hex and \gamename{go}.
Each player is trying to connect the opposite sides of a grid board, but captures are possible following the rules of \gamename{go}.
Both these games are quite different from the usual connection games and from the games studied in this paper in that the game length is not polynomially bounded.
As a result, these games may very well not be \pspace-complete.

Future work could also examine \Hav, \slither, and \gamename{twixt} under a closer eye with the goal of obtaining theoretical results of practical relevance.
For instance, a seemingly reasonable heuristic in \slither consists of playing the first move of a shortest sequence leading to victory, assuming the opponent passes.
We leave as an open question whether such a shortest sequence can be efficiently computed.
In fact, determining if such a sequence exists at all might not necessarily be easy and is reminiscent of \citet{MazzoniW1997} \np-completeness result on single-player \gamename{twixt}.
Another possible bridge towards actual game playing and solving is provided by parameterized complexity.
Our \fpt result on \SHex indicates that solving algorithms need not depend exponentially on the board size, but only on the search depth.
Could this result be turned into a practical algorithm for solving \hex endgames on large board sizes?

We find it remarkable that both \gamename{twixt} and \slither \pspace-hardness proofs are direct reductions from \hex.
It is quite rare indeed that an actually played game constitutes the basis of a computational complexity reduction.
This is a new testimony to the special place that \hex occupies among strategy games, it combines being a relatively popular abstract game and a theoretical problem fundamental enough to reduce from.

\section*{Acknowledgments}
The authors wish to thank D\'aniel Marx and Radu Curticapean for useful discussions and pointing out the paper \cite{Frick01}.
The first author was supported by the ERC grant PARAMTIGHT: ``Parameterized complexity and the search for tight complexity results'' (no.\,280152).
The third author was supported by the Australian Research Council (project DE\,150101351).

\section*{References}
\bibliographystyle{elsarticle-num-names}
\bibliography{final}

\newpage
\appendix
\setcounter{secnumdepth}{4}

\section{Draws are impossible in \gamename{slither}}
\label{app:nodraws}
Stating that \gamename{slither} does not feature any draw actually corresponds to the following three more elementary statements: each filled \gamename{slither} board has winning groups for at least one player, each filled \gamename{slither} board has winning groups for no more than one player, and each non-filled \gamename{slither} board has at least one legal move for one of the two players.
The first two statements can be obtained rather directly from the equivalent ones in \gamename{hex}, thanks to the diagonal rule.
The third statement is much more involved and requires a careful case analysis on non-filled boards.

\begin{lemma}
  If a \gamename{slither} board is filled, then exactly one of the two players wins.
\end{lemma}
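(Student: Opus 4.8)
The plan is to derive this from the no-draw theorem for \hex (Lemma~\ref{lem:hex-no-draws}, together with the standard companion fact that a fully coloured Hex board is won by \emph{exactly} one player) by setting up a colour-preserving correspondence between a filled \slither board and a filled Hex board.

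First I would observe that a filled \slither board occurring in play still satisfies the diagonal rule: the move that filled the last empty square was legal, hence the resulting (filled) position is legal. Then I would shear the $n\times n$ grid into an $n\times n$ Hex board by declaring squares $(i,j)$ and $(i',j')$ to be \emph{hex-neighbours} whenever $|i-i'|+|j-j'|=1$ or $(i'-i,j'-j)\in\{(1,1),(-1,-1)\}$; this relation is isomorphic to the standard Hex adjacency, with the left and right columns of the grid becoming one pair of opposite Hex sides and the bottom and top rows the other pair. A filled, $2$-coloured \slither board is thus also a filled, $2$-coloured Hex board.

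The key step is the following lemma: on a legal filled board, for each colour $P$, two $P$-stones lie in the same connected component of $G_P$ (the orthogonal-adjacency graph used by \slither) if and only if they are connected via hex-adjacency. The ``only if'' direction is immediate, since every orthogonal neighbour is a hex-neighbour. For ``if'', I would take a hex-path of $P$-stones and replace each diagonal step, say from $(i,j)$ to $(i+1,j+1)$, by a detour of length two: these two stones are diagonally adjacent and both of colour $P$, so the diagonal rule guarantees a common orthogonally-adjacent stone of colour $P$, which must sit at $(i+1,j)$ or at $(i,j+1)$; routing through it keeps the path monochromatic. Iterating turns any hex-path into an orthogonal $P$-path, so $P$-connectivity and hex-$P$-connectivity coincide. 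I would stress that this is the only place the diagonal rule is used, and that it is precisely what makes a filled \slither board behave like a filled Hex board — where, by contrast, draws on a plain grid (both colours using orthogonal adjacency) are otherwise possible.

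Combining the two points, Black has a left–right black group in the \slither position iff the Hex board has a hex-connected black path joining the two ``column'' sides, and White has a top–bottom white group iff the Hex board has a hex-connected white path joining the two ``row'' sides. By the Hex theorem exactly one of these holds, so exactly one player wins in \slither. I expect the main obstacle to be the bookkeeping inside the key lemma: checking carefully that the common orthogonal neighbour provided by the diagonal rule is of the player's \emph{own} colour (so the detour stays monochromatic), and that the diagonal direction chosen for the sheared board is exactly the one for which every monochromatic diagonal hex-edge is witnessed by such a neighbour. Once those points are nailed down, the rest is routine.
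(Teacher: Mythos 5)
Your proposal is correct and follows essentially the same route as the paper: map the filled grid to a rectangular Hex board with one diagonal direction added, use the diagonal rule to show that orthogonal \slither connectivity coincides with Hex connectivity for each colour, and invoke the Hex no-draw/unique-winner theorem. Your write-up merely makes explicit the connectivity-equivalence step that the paper states in one sentence.
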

\begin{proof}
  Recall that \gamename{hex} can be played on a rectangular board provided we add a link between each pair of king-adjacent squares along one specified diagonal direction, as in Figure~\ref{fig:rect-hex}.
  The forbidden configuration rule ensures that this king-adjacent diagonal connection is respected in \gamename{slither}, although it is indirect.
  Therefore, any filled $m\times n$ \gamename{slither} board can be mapped onto an equivalent $m\times n$ \gamename{hex} board such that any pair of \gamename{slither} squares is connected if and only if the corresponding pair of \gamename{hex} cells is connected.
  Since any filled \gamename{hex} board is won by exactly one player, we have the desired \gamename{slither} result.

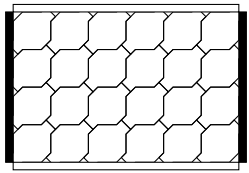
\begin{figure}
  \centering
  \begin{tikzpicture}
    \hboard{6}{4} \sposition{}{}
    \sedges{6}{4}
  \end{tikzpicture}
  \caption{$6 \times 4$ \gamename{hex} board represented on a rectangular \gamename{slither} board.}
  \label{fig:rect-hex}
\end{figure}

\end{proof}

\begin{theorem}
  On a rectangular board, as long as there is at least one empty square, at least one of the two players has a legal move.
\end{theorem}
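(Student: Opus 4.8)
The plan is to argue by contradiction: assume the board has at least one empty square but that neither White nor Black has a legal move, and derive a contradiction.

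\textbf{Step 1 (a local blocking condition).} A move consisting only of a placement, with no relocation, can be forbidden solely by the diagonal rule, and the only square whose diagonal neighbourhood is altered by such a move is the square just played. Hence, if placing a White (resp.\ Black) stone on an empty square $e$ with no relocation is illegal, then $e$ must have a diagonally-adjacent White (resp.\ Black) stone $s$ such that neither of the two squares orthogonally adjacent to \emph{both} $e$ and $s$ carries a White (resp.\ Black) stone (cf.\ Figure~\ref{fig:slither-diagonal}); call such an $s$ a \emph{blocker} of $e$. Under the contradiction hypothesis, every empty square has both a White blocker and a Black blocker among its at most four diagonal neighbours. Two easy consequences follow at once: an empty corner square has a single diagonal neighbour, so it cannot have blockers of both colours, and therefore all four corners are occupied; and an empty non-corner edge square has exactly two diagonal neighbours, which must then be one White, one Black, and both unsupported in the above sense.

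\textbf{Step 2 (repairing a blocked square by a relocation).} Fix an empty square $e$ that is extremal, for concreteness the lexicographically largest one, so that every square with a strictly larger row index, and every square in $e$'s own row with a strictly larger column index, is occupied. Let $s$ be a White blocker of $e$; modulo the symmetries of the local configuration (the dihedral symmetries of the neighbourhood pattern together with the colour swap) we may describe the situation up to relabelling. The idea is to consider White moves of the form ``relocate $s$ onto one of the two common orthogonal neighbours $f,g$ of $e$ and $s$, then place a White stone on $e$'' --- legal in principle since $s$ is king-adjacent to both $f$ and $g$ --- together with the variants ``place directly on $e$'', ``relocate $s$ onto $e$ itself and place on a neighbour of $e$'', and the analogous Black moves. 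For each assignment of the states (empty, opposite colour, or off the board) of $f$, $g$, the two remaining diagonal neighbours of $e$, and the four orthogonal neighbours of $e$, one either exhibits one of these moves as legal --- contradicting the hypothesis --- or one deduces that some neighbouring empty square is again blocked for both colours, and by tracking this propagation outward it eventually reaches an edge or corner, where the accumulated constraints are inconsistent with Step~1.

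\textbf{Main obstacle.} The whole weight of the proof is in the enumeration of Step~2. The diagonal rule couples a square with up to four diagonal and four orthogonal neighbours, each in one of three states, so the number of raw configurations is already large; worse, a relocation can \emph{create} an unsupported diagonal pair elsewhere, because the relocated stone may have been the unique orthogonal support of two other same-coloured stones and vacating its square exposes them, so every candidate repair move must also be checked against this side effect, and in some branches the information in the immediate neighbourhood of $e$ is not sufficient to decide legality. The real work is therefore to organise the case analysis so that it stays finite and verifiable --- arguing inside a single, extremally chosen empty square, reducing configurations modulo the local symmetries and the colour swap, and separating a generic ``interior'' analysis, where the propagation argument does the work, from a short ``border'' analysis, where a missing diagonal neighbour contradicts the blocker requirement outright. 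I expect this bookkeeping, and in particular proving that the propagation must terminate against the boundary rather than cycling, to be the genuinely delicate part, which is why it is deferred to the appendix rather than carried out inline.
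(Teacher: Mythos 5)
Your overall strategy is the same as the paper's: assume for contradiction that some square is empty yet neither player can move, pick an extremal empty square so that part of its neighbourhood is forced to be occupied, and run a case analysis on the remaining neighbours using both placement-only moves and relocation-plus-placement moves. Your Step~1 observations are correct and match the paper's opening moves: a bare placement can only be blocked by the diagonal rule, so under the hypothesis every empty square needs an unsupported diagonally-adjacent stone of \emph{each} colour; a corner square has only one diagonal neighbour and hence always admits a legal placement; and an edge square is forced into a one-White-one-Black diagonal pattern. The problem is that everything after that is a plan rather than a proof. The entire content of the theorem lives in the exhaustive enumeration you describe in Step~2 and then explicitly defer (``I expect this bookkeeping \ldots{} to be the genuinely delicate part''). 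The paper's appendix spends several pages of square-by-square deductions, organised into the cases ``corner'', ``edge with occupied left neighbour'', ``two white and one black below-left'', and ``three white below-left'', with each branch terminating in a concrete legal move such as a relocation of a specific stone onto the empty square's neighbour followed by a placement. None of that is present in your submission, so the statement is not actually proved.

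Beyond incompleteness, your proposed termination mechanism differs from the paper's and is itself unjustified. You suggest that when the local analysis does not immediately yield a legal move, one ``deduces that some neighbouring empty square is again blocked for both colours'' and propagates this outward until it collides with the boundary. The paper never needs such a propagation: each branch of its case analysis closes \emph{locally}, within roughly a $5\times 5$ window, by exhibiting an explicit legal move that the accumulated constraints cannot forbid (the boundary only enters through the initial choice of scanning order and through the remark that if a required blocking stone would have to lie off the board, the move is legal outright). You correctly flag that your propagation might cycle or fail to terminate, but you do not resolve this, and it is not clear that it can be resolved in the form you state. So the gap is twofold: the case analysis that constitutes the proof is missing, and the fallback argument meant to replace part of it rests on an unproven termination claim.
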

\begin{proof}
  We adopt the following proof technique.\footnote{No part of the argument will rely upon the color of the board edge.}
We assume for a contradiction that we have a non-filled position with no legal moves for any players.
  We start from an empty square and make deductions concerning its surrounding so as to constrain the occupancy of the nearby squares.
  Each constraint is deduced based on the established occupancies and from the no legal moves assumption or from the diagonal rule.
  We may perform a split case analysis on squares that are not constrained enough to have a definite status.
  In each case, however, we finally arrive at a position with a legal move that cannot be prevented by adding any further constraints.

  As we add constraints to forbid legal moves by either player, we liberally extend the size of the pattern around the empty square.
  If any such extension was not possible because we would have reached the limit of the board, then it would not be possible to forbid the desired legal move and our case would be proved.
  We can therefore disregard the possibility of inadvertently meeting an edge of the board as we extend our patterns, at least for the sake of this argument.

  In addition to the regular three types of squares, \begin{tikzpicture}[scale=0.6, every node/.style={transform shape}] \sboard{1}{1} \sposition{1-1}{} \end{tikzpicture} white stone, \begin{tikzpicture}[scale=0.6, every node/.style={transform shape}] \sboard{1}{1} \sposition{}{1-1} \end{tikzpicture} black stone, and \begin{tikzpicture}[scale=0.6] \sboard{1}{1} \end{tikzpicture} empty, we add the following ones: \begin{tikzpicture}[scale=0.6] \sboard{1}{1} \possb{1}{1} \possw{1}{1} \end{tikzpicture} \emph{no constraints yet}, \begin{tikzpicture}[scale=0.6] \sboard{1}{1} \possb{1}{1} \end{tikzpicture} \emph{cannot hold a white stone}, and \begin{tikzpicture}[scale=0.6] \sboard{1}{1} \possw{1}{1} \end{tikzpicture} \emph{cannot hold a black stone}.

  Consider a rectangular board.
  If there is at least an empty square on the board, then there is at least an empty square $s$ such that one of the following 4 conditions on the bottom and left neighbors of $s$ is satisfied.
  Either $s$ is in the bottom-left corner (Figure~\ref{fig:corner}), or $s$ is on the bottom edge and its left neighbor is occupied (Figure~\ref{fig:edge}), or $s$ has three neighboring stones of different colors (Figure~\ref{fig:21}), or these stones are all of the same color (Figure~\ref{fig:case3}).
  We can assume w.l.o.g.~that a majority of the bottom-left neighboring stones are white.

  The first two cases are treated in Section~\ref{corner-edge}, the third case is treated in Section~\ref{two-one}, and the last case in Section~\ref{three-nil}.
  In each case, we arrive to the conclusion that at least one player can move.
\end{proof}

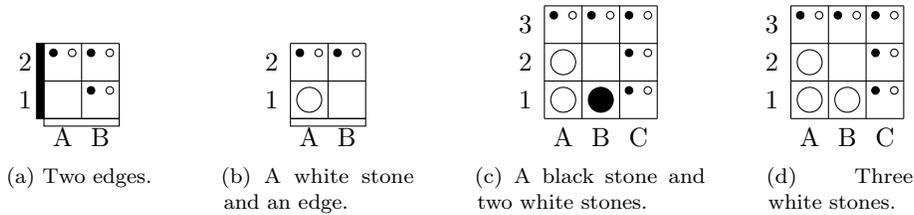
\begin{figure}
\subfloat[Two edges.]{
\label{fig:corner}
\begin{tikzpicture}
\sboard{2}{2} \sposition{}{}
\draw [fill] ({\s / 2 - \bor},{\s / 2}) rectangle ({\s / 2},{(2 + \s)/ 2}) {};
\draw ({\s / 2},{\s / 2 - \bor}) rectangle ({(2 + \s)/ 2},{\s / 2}) {};

\possb{1}{2} \possw{1}{2}
\possb{2}{1} \possw{2}{1}
\possb{2}{2} \possw{2}{2}

\node (A) at (0.5,0) {A}; \node (B) at (1  ,0) {B};
\node (1) at (0,0.5) {1}; \node (2) at (0,1  ) {2};
\end{tikzpicture}
\hspace{3mm}
}
\hfill
\subfloat[A white stone and an edge.]{
\label{fig:edge}
\hspace{2mm}
\begin{tikzpicture}
\sboard{2}{2} \sposition{1-1}{}
\draw ({\s / 2},{\s / 2 - \bor}) rectangle ({(2 + \s)/ 2},{\s / 2}) {};

\possb{1}{2} \possw{1}{2}
\possb{2}{2} \possw{2}{2}

\node (A) at (0.5,0) {A}; \node (B) at (1  ,0) {B};
\node (1) at (0,0.5) {1}; \node (2) at (0,1  ) {2};
\end{tikzpicture}
\hspace{4mm}
}
\hfill
\subfloat[A black stone and two white stones.]{
\label{fig:21}
\hspace{2mm}
\begin{tikzpicture}
\sboard{3}{3}
\sposition{1-1,1-2}{2-1}

\possb{1}{3} \possw{1}{3}
\possb{2}{3} \possw{2}{3}
\possb{3}{1} \possw{3}{1}
\possb{3}{2} \possw{3}{2}
\possb{3}{3} \possw{3}{3}

\node (A) at (0.5,0) {A}; \node (B) at (1  ,0) {B}; \node (C) at (1.5,0) {C};
\node (1) at (0,0.5) {1}; \node (2) at (0,1  ) {2}; \node (3) at (0,1.5) {3};
\end{tikzpicture}
\hspace{4mm}
}
\hfill
\subfloat[Three white stones.]{
\label{fig:case3}
\hspace{-4mm}
\begin{tikzpicture}
\sboard{3}{3}
\sposition{1-1,1-2,2-1}{}

\possb{1}{3} \possw{1}{3}
\possb{2}{3} \possw{2}{3}
\possb{3}{1} \possw{3}{1}
\possb{3}{2} \possw{3}{2}
\possb{3}{3} \possw{3}{3}

\node (A) at (0.5,0) {A}; \node (B) at (1  ,0) {B}; \node (C) at (1.5,0) {C};
\node (1) at (0,0.5) {1}; \node (2) at (0,1  ) {2}; \node (3) at (0,1.5) {3};
\end{tikzpicture}
}
\caption{Case analysis for the bottom left surroundings of the empty square.}
\label{fig:split}
\end{figure}

\section{A square in a corner or on the edge of the board}\label{corner-edge}
If there is an empty square in a corner, as depicted in Figure~\ref{fig:corner}, then placing a stone on that very square is a legal move for at least one player.

If there is an empty square on an edge, we start from the situation in Figure~\ref{fig:edge} and use the following reasoning to constrain the surrounding and obtain Figure~\ref{fig:bottom}.
C2 needs a white stone to forbid White's move B1, and C1 cannot be white.
A2 needs a black stone to forbid Black's move B1.
B2 needs to be empty to forbid White's and Black's move B1.
C1 cannot be empty to forbid White's move C1.
A3 needs a white stone to forbid White's move A1B1-B2, and B3 cannot be white.
C3 needs a black stone to forbid Black's move C1B1-B2, and B3 cannot be black.
Similarly, A4 needs a black stone, C4 needs a white stone, and B4 needs to be empty, so as to forbid White's move A1B2-B3 and Black's move C1B2-B3.

But then, C3B2-B1 is a legal move for Black.
\begin{figure}
\centering
\begin{tikzpicture}
\sboard{3}{4} \sposition{1-1,1-3,3-2,3-4}{1-2,1-4,3-1,3-3}
\draw ({\s / 2},{\s / 2 - \bor}) rectangle ({(3 + \s)/ 2},{\s / 2}) {};

\node (A) at (0.5,0) {A}; \node (B) at (1  ,0) {B}; \node (C) at (1.5,0) {C};
\node (1) at (0,0.5) {1}; \node (2) at (0,1  ) {2}; \node (3) at (0,1.5) {3}; \node (4) at (0,2  ) {4};
\end{tikzpicture}
\caption{The case in Figure~\ref{fig:edge} with a few deducible constraints filled in.}
\label{fig:bottom}
\end{figure}
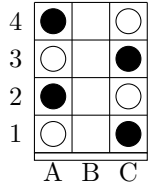

\section{A square with two white stones and a black stone}\label{two-one}
We start from the situation in Figure~\ref{fig:21} and use the following reasoning to constrain the surrounding and obtain Figure~\ref{fig:case21}.
The C2 square cannot contain a white stone, otherwise B2 is a legal move for White.
Similarly, B3 cannot contain a black stone, otherwise B2 is a legal move for Black.
To forbid White's move B2, there should be a white stone on C1 or on C3 (Figure~\ref{fig:21split}).

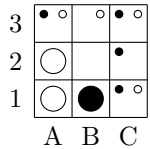
\begin{figure}
\centering
\begin{tikzpicture}
\sboard{3}{3} \sposition{1-1,1-2}{2-1}

\possb{1}{3} \possw{1}{3}
\possw{2}{3}
\possb{3}{1} \possw{3}{1}
\possb{3}{2}
\possb{3}{3} \possw{3}{3}
\node (A) at (0.5,0) {A}; \node (B) at (1  ,0) {B}; \node (C) at (1.5,0) {C};
\node (1) at (0,0.5) {1}; \node (2) at (0,1  ) {2}; \node (3) at (0,1.5) {3};
\end{tikzpicture}
\caption{The case in Figure~\ref{fig:21} with a few deducible constraints filled in.}
\label{fig:case21}
\end{figure}

\begin{figure}
\centering
\subfloat[Assume C1 is white.]{
\label{fig:case21a}
\hspace{19mm}
\begin{tikzpicture}
\sboard{3}{3} \sposition{1-1,1-2,3-1}{2-1}

\possb{1}{3} \possw{1}{3}
\possw{2}{3}
\possb{3}{3} \possw{3}{3}

\node (A) at (0.5,0) {A}; \node (B) at (1  ,0) {B}; \node (C) at (1.5,0) {C};
\node (1) at (0,0.5) {1}; \node (2) at (0,1  ) {2}; \node (3) at (0,1.5) {3};
\end{tikzpicture}
\hspace{19mm}
}
\hfill
\subfloat[Assume C3 is white.]{
\label{fig:case21b}
\hspace{12mm}
\begin{tikzpicture}
\sboard{3}{3} \sposition{1-1,1-2,3-3}{1-3,2-1}

\possb{3}{1} \possw{3}{1}
\possb{3}{2}

\node (A) at (0.5,0) {A}; \node (B) at (1  ,0) {B}; \node (C) at (1.5,0) {C};
\node (1) at (0,0.5) {1}; \node (2) at (0,1  ) {2}; \node (3) at (0,1.5) {3};
\end{tikzpicture}
\hspace{12mm}
}
\caption{Case analysis for Figure~\ref{fig:case21}, either C1 is white or C3 is white.}
\label{fig:21split}
\end{figure}
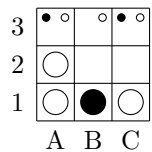
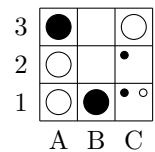

\subsection{Figure~\ref{fig:case21a}}
C2 cannot contain a black stone due to the diagonal rule (since B2 is empty by assumption), and has to be empty.
Now, we distinguish two subcases: either B3 is white, or it is empty (Figure~\ref{fig:21asplit}).

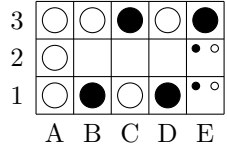
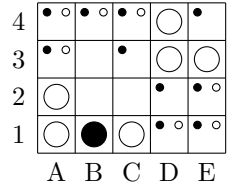
\begin{figure}
\centering
\subfloat[Assume B3 is white.]{
\label{fig:case21aa}
\centering
\hspace{10mm}
\begin{tikzpicture}
\sboard{5}{3} \sposition{1-1,1-2,3-1,1-3,2-3,4-3}{2-1,3-3,4-1,5-3}

\possb{5}{1} \possw{5}{1}
\possb{5}{2} \possw{5}{2}

\node (A) at (0.5,0) {A}; \node (B) at (1  ,0) {B}; \node (C) at (1.5,0) {C}; \node (D) at (2  ,0) {D}; \node (E) at (2.5,0) {E};
\node (1) at (0,0.5) {1}; \node (2) at (0,1  ) {2}; \node (3) at (0,1.5) {3};
\end{tikzpicture}
\hspace{10mm}
}
\hfill
\subfloat[Assume B3 is empty.]{
\label{fig:case21ab}
\centering
\hspace{10mm}
\begin{tikzpicture}
\sboard{5}{4} \sposition{1-1,1-2,3-1,4-3,4-4,5-3}{2-1}

\possb{1}{3} \possw{1}{3}
\possb{1}{4} \possw{1}{4}
\possb{2}{4} \possw{2}{4}
\possb{3}{3}
\possb{3}{4} \possw{3}{4}
\possb{4}{1} \possw{4}{1}
\possb{4}{2}
\possb{5}{1} \possw{5}{1}
\possb{5}{2} \possw{5}{2}
\possb{5}{4}

\node (A) at (0.5,0) {A}; \node (B) at (1  ,0) {B}; \node (C) at (1.5,0) {C}; \node (D) at (2  ,0) {D}; \node (E) at (2.5,0) {E};
\node (1) at (0,0.5) {1}; \node (2) at (0,1  ) {2}; \node (3) at (0,1.5) {3}; \node (4) at (0,2  ) {4};
\end{tikzpicture}
\hspace{10mm}
}
\caption{Case analysis for Figure~\ref{fig:case21a}: B3 is either white or empty.}
\label{fig:21asplit}
\end{figure}

\subsubsection{Figure~\ref{fig:case21aa}}
A3 contains a white stone, by the diagonal rule.
C3 needs a black stone to forbid Black's move B2.
D1 needs a black stone to forbid Black's move B1B2-C2.
D3 needs a white stone to forbid White's move C1B2-C2.
D2 needs to be empty, by the diagonal rule on stones C1 and C3.
E3 needs a black stone to forbid Black's move B1C2-D2.

But, in that situation, D3C2-B2 is a legal move for White.

\subsubsection{Figure~\ref{fig:case21ab}}
To forbid White's move C2, D3 needs a white stone and C3 and D2 cannot contain a white stone.
To forbid White's move D3C2-B2, there should be white stones on E3 and D4 and E4 should not contain a white stone.
Now, we distinguish two subcases: either C3 is black, or it is empty (Figure~\ref{fig:21absplit}).

\begin{figure}
\centering
\subfloat[Assume C3 is black.]{
\label{fig:case21aba}
\centering
\hspace{10mm}
\begin{tikzpicture}
\sboard{5}{4} \sposition{1-1,1-2,3-1,4-3,4-4,5-3}{2-1,3-3,4-1}

\possw{1}{3}
\possb{1}{4} \possw{1}{4}
\possb{2}{4} \possw{2}{4}
\possb{3}{4} \possw{3}{4}
\possb{5}{1} \possw{5}{1}
\possb{5}{2} \possw{5}{2}
\possb{5}{4}

\node (A) at (0.5,0) {A}; \node (B) at (1  ,0) {B}; \node (C) at (1.5,0) {C}; \node (D) at (2  ,0) {D}; \node (E) at (2.5,0) {E};
\node (1) at (0,0.5) {1}; \node (2) at (0,1  ) {2}; \node (3) at (0,1.5) {3}; \node (4) at (0,2  ) {4};
\end{tikzpicture}
\hspace{10mm}
}
\hfill
\subfloat[Assume C3 is empty.]{
\label{fig:case21abb}
\centering
\hspace{10mm}
\begin{tikzpicture}
\sboard{5}{4} \sposition{1-1,1-2,2-4,3-1,4-3,4-4,5-3}{2-1,4-2}

\possb{1}{3} \possw{1}{3}
\possb{1}{4} \possw{1}{4}
\possb{3}{4} \possw{3}{4}
\possb{4}{1} \possw{4}{1}
\possb{5}{1} \possw{5}{1}
\possb{5}{2} \possw{5}{2}
\possb{5}{4}

\node (A) at (0.5,0) {A}; \node (B) at (1  ,0) {B}; \node (C) at (1.5,0) {C}; \node (D) at (2  ,0) {D}; \node (E) at (2.5,0) {E};
\node (1) at (0,0.5) {1}; \node (2) at (0,1  ) {2}; \node (3) at (0,1.5) {3}; \node (4) at (0,2  ) {4};
\end{tikzpicture}
\hspace{10mm}
}
\caption{Case analysis for Figure~\ref{fig:case21ab}: C3 is either black or empty.}
\label{fig:21absplit}
\end{figure}
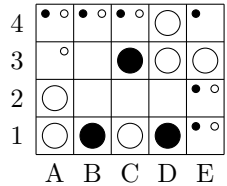
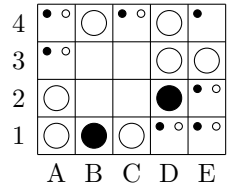

\paragraph{Figure~\ref{fig:case21aba}}
A3 cannot contain a black stone to forbid Black's move B3.
D1 needs a black stone to forbid Black's move C3B2-C2.
D2 cannot contain a black stone by the diagonal rule, and has to be empty.

But then, move B1C2-D2 is legal for Black.

\paragraph{Figure~\ref{fig:case21abb}}
B4 needs a white stone to forbid White's move C3.
D2 needs a black stone to forbid Black's move C3.

But then, B1C2-C3 is legal for Black.

\subsection{Figure~\ref{fig:case21b}}
A3 needs a black stone to forbid Black's move B2.
B3 cannot contain a white stone by the diagonal rule (with A2) and  has to be empty.
Now, we distinguish two subcases: either C2 is black, or it is empty (Figure~\ref{fig:21bsplit}).

\begin{figure}
\centering
\subfloat[Assume C2 is black.]{
\label{fig:case21ba}
\hspace{14mm}
\begin{tikzpicture}
\sboard{3}{3} \sposition{1-1,1-2,3-3}{2-1,1-3,3-1,3-2}

\node (A) at (0.5,0) {A}; \node (B) at (1  ,0) {B}; \node (C) at (1.5,0) {C};
\node (1) at (0,0.5) {1}; \node (2) at (0,1  ) {2}; \node (3) at (0,1.5) {3};
\end{tikzpicture}
\hspace{14mm}
}
\hfill
\subfloat[Assume C2 is empty.]{
\label{fig:case21bb}
\hspace{16mm}
\begin{tikzpicture}
\sboard{4}{3} \sposition{1-1,1-2,3-3,4-1}{2-1,1-3}

\possb{3}{1}
\possb{4}{2}
\possb{4}{3} \possw{4}{3}

\node (A) at (0.5,0) {A}; \node (B) at (1  ,0) {B}; \node (C) at (1.5,0) {C}; \node (D) at (2.0,0) {D};
\node (1) at (0,0.5) {1}; \node (2) at (0,1  ) {2}; \node (3) at (0,1.5) {3};
\end{tikzpicture}
\hspace{16mm}
}
\caption{Case analysis for Figure~\ref{fig:case21b}: C2 is either black or empty.}
\label{fig:21bsplit}
\end{figure}
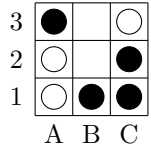
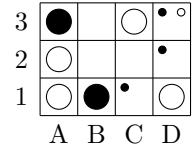

\subsubsection{Figure~\ref{fig:case21ba}}
To forbid White's move C3B2-B3 and Black's move A3B2-B3, there should be at least one white stone and one black stone in $\{$A4, C4$\}$.
So, we distinguish further between $\{$A4 black, C4 white$\}$ and $\{$A4 white, C4 black$\}$ (Figure~\ref{fig:21basplit}).

\begin{figure}
\subfloat[Assume A4 is black and C4 is white.]{
\label{fig:case21baa}
\hspace{14mm}
\begin{tikzpicture}
\sboard{4}{5}
\sposition{1-2,1-3,3-4,3-5}{2-2,1-4,1-5,3-1,3-2,3-3,4-3}

\possb{1}{1} \possw{1}{1}
\possb{2}{1} \possw{2}{1}
\possb{2}{5} \possw{2}{5}
\possb{4}{1} \possw{4}{1}
\possb{4}{2} \possw{4}{2}
\possb{4}{4} \possw{4}{4}
\possb{4}{5} \possw{4}{5}

\node (A) at (0.5,0) {A}; \node (B) at (1  ,0) {B}; \node (C) at (1.5,0) {C}; \node (D) at (2  ,0) {D};
\node (0) at (0,0.5) {0}; \node (1) at (0,1  ) {1}; \node (2) at (0,1.5) {2}; \node (3) at (0,2  ) {3}; \node (4) at (0,2.5) {4};
\end{tikzpicture}
\hspace{14mm}
}
\hfill
\subfloat[Assume A4 is white and C4 is black.]{
\label{fig:case21bab}
\hspace{15mm}
\begin{tikzpicture}
\sboard{3}{5}
\sposition{1-1,1-2,3-3,1-4,1-5}{2-1,1-3,3-1,3-2,3-4,3-5}

\possb{2}{5} \possw{2}{5}

\node (A) at (0.5,0) {A}; \node (B) at (1  ,0) {B}; \node (C) at (1.5,0) {C};
\node (1) at (0,0.5) {1}; \node (2) at (0,1  ) {2}; \node (3) at (0,1.5) {3}; \node (4) at (0,2  ) {4}; \node (5) at (0,2.5) {5};
\end{tikzpicture}
\hspace{16mm}
}
\caption{Case analysis for Figure~\ref{fig:case21ba}: the contents of A4 and C4 is white.}
\label{fig:21basplit}
\end{figure}
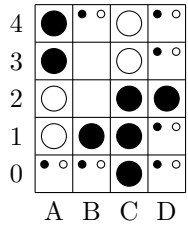
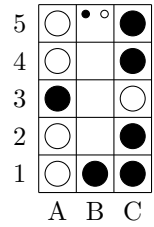

\paragraph{Figure~\ref{fig:case21baa}.}
D2 needs a black stone to forbid Black's move C2B2-B3, and C0 needs a black stone to forbid Black's move C1B2-B3.

But then, B1B2-B3 is a legal move for Black.

\paragraph{Figure~\ref{fig:case21bab}.}
By the diagonal rule, square B4 has to be empty.
C5 (as well as D4) needs a black stone to forbid Black's move C4B3-B2.
Symmetrically, A5 needs a white stone to forbid White's move A4B3-B2.

But then, C3B2-B4 is a legal move for White.

\subsubsection{Figure~\ref{fig:case21bb}.}
To forbid White's move C2, there should be a white stone on D1 but no white stones on C1 nor D2.
We distinguish two subcases: C1 contains a black stone, or it is empty (Figure~\ref{fig:21bbsplit}).

\begin{figure}
\subfloat[Assume C1 is black.]{
\label{fig:case21bba}
\hspace{10mm}
\begin{tikzpicture}
\sboard{5}{3}
\sposition{1-1,1-2,3-3,4-1,5-3}{2-1,1-3,3-1,5-1,4-3}

\node (A) at (0.5,0) {A}; \node (B) at (1  ,0) {B}; \node (C) at (1.5,0) {C}; \node (D) at (2  ,0) {D}; \node (E) at (2.5,0) {E};
\node (1) at (0,0.5) {1}; \node (2) at (0,1  ) {2}; \node (3) at (0,1.5) {3};
\end{tikzpicture}
\hspace{10mm}
}
\hfill
\subfloat[Assume C1 is empty.]{
\label{fig:case21bbb}
\hspace{10mm}
\begin{tikzpicture}
\sboard{5}{4}
\sposition{1-1,1-2,1-3,2-1,3-4,4-1,4-2,5-2}{1-4,2-2}

\possb{4}{3}
\possb{4}{4} \possw{4}{4}
\possb{5}{1} \possw{5}{1}
\possb{5}{3} \possw{5}{3}
\possb{5}{4} \possw{5}{4}

\node (A) at (0.5,0) {A}; \node (B) at (1  ,0) {B}; \node (C) at (1.5,0) {C}; \node (D) at (2  ,0) {D}; \node (E) at (2.5,0) {E};
\node (0) at (0,0.5) {0}; \node (1) at (0,1  ) {1}; \node (2) at (0,1.5) {2}; \node (3) at (0,2  ) {3};
\end{tikzpicture}
\hspace{11mm}
}
\caption{Case analysis for Figure~\ref{fig:case21bb}: C1 is either black or empty.}
\label{fig:21bbsplit}
\end{figure}
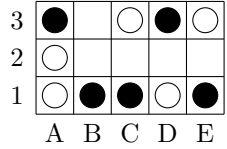
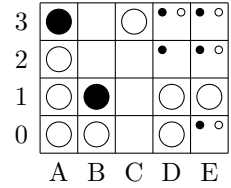

\paragraph{Figure~\ref{fig:case21bba}.}
D3 needs a black stone to forbid Black's move C2 (since B3 is empty).
Then, by the diagonal rule, square D2 can only be empty.
E3 needs a white stone to forbid White's move C3D2-C2.
E1 needs a black stone to forbid Black's move D3C2-D2.

But then, D1C2-B2 is a legal move for White.

\paragraph{Figure~\ref{fig:case21bbb}.}
The only way to forbid White's move D1C2-B2 is to add two white stones on D0 and on E1.
To forbid White's move C1, there should be a white stone on B0 (and a white stone on A0, by the diagonal rule), and no white stones on C0.
C0 cannot contain a black stone because of the diagonal rule

But then, C0 is a legal move for White.

\section{A square with three white stones}\label{three-nil}
We start from the situation in Figure~\ref{fig:case3}.
To forbid White's move B2, there should be a white stone on C3, but no white stones on B3 nor C2.
To forbid Black's move B2, there should be a black stone on C1 or A3, say C1 w.l.o.g.~(see Figure~\ref{fig:case3bis}).

\begin{figure}
\centering
\begin{tikzpicture}
\sboard{3}{3}
\sposition{1-1,1-2,2-1,3-3}{3-1}

\possb{1}{3} \possw{1}{3}
\possb{2}{3}
\possb{3}{2}

\node (A) at (0.5,0) {A}; \node (B) at (1  ,0) {B}; \node (C) at (1.5,0) {C};
\node (1) at (0,0.5) {1}; \node (2) at (0,1  ) {2}; \node (3) at (0,1.5) {3};
\end{tikzpicture}
\caption{The case in Figure~\ref{fig:case3} with a few deducible constraints filled in.}
\label{fig:case3bis}
\end{figure}
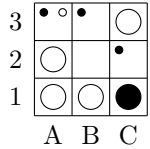

Therefore, B3 and C2 are empty or contain a black stone.
They cannot both contain a black stone since B2 is empty.
We thus distinguish three cases: B3 and C2 are empty, B3 contains a black stone, and C2 contains a black stone (Figure~\ref{fig:case3split}).

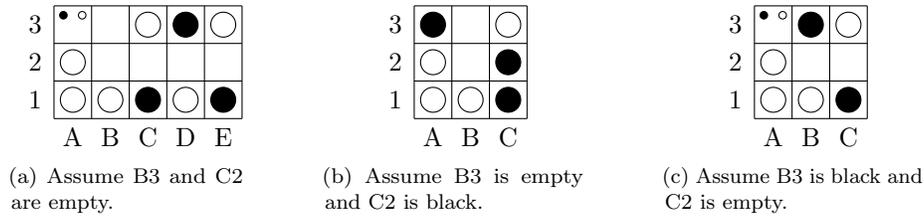
\begin{figure}
\subfloat[Assume B3 and C2 are empty.]{
\label{fig:case3a}
\begin{tikzpicture}
\sboard{5}{3}
\sposition{1-1,1-2,2-1,3-3,4-1,5-3}{3-1,4-3,5-1}

\possb{1}{3} \possw{1}{3}

\node (A) at (0.5,0) {A}; \node (B) at (1  ,0) {B}; \node (C) at (1.5,0) {C}; \node (D) at (2  ,0) {D}; \node (E) at (2.5,0) {E};
\node (1) at (0,0.5) {1}; \node (2) at (0,1  ) {2}; \node (3) at (0,1.5) {3};
\end{tikzpicture}}
\hfill
\subfloat[Assume B3 is empty and C2 is black.]{
\label{fig:case3b}
\hspace{5mm}
\begin{tikzpicture}
\sboard{3}{3}
\sposition{1-1,1-2,2-1,3-3}{1-3,3-1,3-2}

\node (A) at (0.5,0) {A}; \node (B) at (1  ,0) {B}; \node (C) at (1.5,0) {C};
\node (1) at (0,0.5) {1}; \node (2) at (0,1  ) {2}; \node (3) at (0,1.5) {3};
\end{tikzpicture}
\hspace{5mm}
}
\hfill
\subfloat[Assume B3 is black and C2 is empty.]{
\label{fig:case3c}
\hspace{5mm}
\begin{tikzpicture}
\sboard{3}{3}
\sposition{1-1,1-2,2-1,3-3}{2-3,3-1}

\possb{1}{3} \possw{1}{3}
\node (A) at (0.5,0) {A}; \node (B) at (1  ,0) {B}; \node (C) at (1.5,0) {C};
\node (1) at (0,0.5) {1}; \node (2) at (0,1  ) {2}; \node (3) at (0,1.5) {3};
\end{tikzpicture}
\hspace{5mm}
}
\caption{Case analysis for Fig~\ref{fig:case3bis}: the contents of B3 and C2.}
\label{fig:case3split}
\end{figure}

\subsection{Figure~\ref{fig:case3a}}
D3 needs a black stone to forbid Black's move C2.
D1 needs a white stone to forbid White's move C3B2-C2.
E3 needs a white stone to forbid White's move C3B2-D2.
E1 needs a black stone to forbid Black's move D3C2-D2.

But then, D1C2-B2 is legal for White.

\subsection{Figure~\ref{fig:case3b}}
A3 needs a black stone to forbid Black's move B2.
This case is equivalent to the case of Figure~\ref{fig:case21ba} under color and spatial symmetry.

\subsection{Figure~\ref{fig:case3c}}
Let us consider cases for the contents of A3.
If A3 contains a black stone, then we obtain a position equivalent to Figure~\ref{fig:case21ba} under color and spatial symmetry.

If A3 is empty or white, then a similar proof to Figure~\ref{fig:case3a} still holds.
Indeed, D3 needs a black stone to forbid Black's move B3B2-C2 and D1 needs a white stone to forbid White's move C3B2-C2.
The same way, E3 needs a white stone to forbid White's move C3B2-D2, and then E1 needs a black stone to forbid Black's move C1B2-D2.

But then, D1C2-B2 is legal for White.

\end{document}